\documentclass[]{article}
\usepackage{amsfonts}
\usepackage{amssymb}
\usepackage[letterpaper, left=2.5cm, right=2.5cm, top=2.5cm,
bottom=2.5cm,dvips]{geometry}
\usepackage{verbatim}

\usepackage[T1]{fontenc}
\usepackage{graphicx}
\usepackage{diagbox}
\usepackage{xcolor}
\usepackage{amsmath}
\usepackage{array,multirow}
\usepackage{dsfont}
\usepackage{amsthm}
\usepackage{thmtools}
\usepackage{thm-restate}
\usepackage{enumerate}
\usepackage{bm}

\usepackage[pagebackref]{hyperref}
\hypersetup{
	colorlinks,
	linkcolor={blue!60!black},
	citecolor={green!60!black},
	urlcolor={green!60!black}
}

\usepackage{xcolor}
\definecolor{cream}{RGB}{203, 237, 204}

\newtheorem{theorem}{Theorem}[section]
\newtheorem{lemma}[theorem]{Lemma}

\newtheorem{corollary}[theorem]{Corollary}
\newtheorem{conjecture}[theorem]{Conjecture}
\newtheorem{definition}[theorem]{Definition}
\newtheorem{construction}[theorem]{Construction}
\newtheorem{remark}[theorem]{Remark}
\newtheorem{claim}[theorem]{Claim}

\setcounter{table}{0}
\numberwithin{equation}{section}

\newcommand{\PP}{{\mathcal{P}}}
\newcommand{\A}{{\mathcal{A}}}
\newcommand{\B}{{\mathcal{B}}}
\newcommand{\C}{{\mathcal{C}}}
\newcommand{\G}{{\mathcal{G}}}
\newcommand{\Q}{{\mathcal{Q}}}
\newcommand{\supp}{{\text{supp}}}

\title{On low-power error-correcting cooling codes with large distances}
\author{Yuhao~Zhao\thanks{Y. Zhao ({\tt zhaoyh21@mail.ustc.edu.cn}) is with the School of Mathematical Sciences, University of Science and Technology of China, Hefei, 230026, Anhui, China.} ~and~Xiande~Zhang
\thanks{X. Zhang ({\tt drzhangx@ustc.edu.cn}) is with the School of Mathematical Sciences, University of Science and Technology of China, Hefei, 230026, and with Hefei National Laboratory, University of Science and Technology of China, Hefei 230088, China. The research of X. Zhang is supported by the National Key Research and Development Programs of China 2023YFA1010200 and 2020YFA0713100, the NSFC under Grants No. 12171452 and No. 12231014, and the Innovation Program for Quantum Science and Technology 2021ZD0302902.}  }
\date{}

\begin{document}
	\maketitle
\begin{abstract}
	A low-power error-correcting cooling (LPECC) code was introduced as  a coding scheme for communication over a bus by Chee et al. to control the peak temperature, the average power consumption of on-chip buses, and error-correction for the transmitted information, simultaneously. Specifically, an $(n, t, w, e)$-LPECC code is a coding scheme over $n$ wires that
avoids state transitions on the $t$ hottest wires and allows at most $w$ state transitions in each transmission, and can correct up to $e$ transmission errors.
In this paper, we study the maximum possible size of an $(n, t, w, e)$-LPECC code, denoted by $C(n,t,w,e)$. When $w=e+2$ is large, we establish a general upper bound $C(n,t,w,w-2)\leq \lfloor \binom{n+1}{2}/\binom{w+t}{2}\rfloor$; when $w=e+2=3$, we prove $C(n,t,3,1) \leq \lfloor \frac{n(n+1)}{6(t+1)}\rfloor$. Both bounds are tight for large $n$ satisfying some divisibility conditions. Previously, tight bounds were known only for $w=e+2=3,4$ and $t\leq 2$.
 In general, when $w=e+d$ is large for a constant $d$, we determine the asymptotic value of $C(n,t,w,w-d)\sim \binom{n}{d}/\binom{w+t}{d}$  as $n$ goes to infinity, which can be extended to $q$-ary codes.



	\medskip\noindent {\bf Keywords}: LPECC codes, packings, Steiner systems, frames, constant-weight codes.
\end{abstract}

\section{Introduction}
Whether targeting battery-powered devices or  high-end systems, power and heat dissipation limits are first-order design constraints for chips.  To reduce the overall power consumption of on-chip buses, there are many encoding techniques that have been proposed, see for example, \cite{benini1997asymptotic,calimera2008thermal,chee2006optimal,petrov2004low,sotiriadis2002bus,sotiriadis2003bus,stan1995bus,wang2007chip} and the references therein.
However, power-aware design alone is not sufficient to address the thermal challenges posed by high-performance interconnects, because it does not directly target the spatial and temporal behavior of the operating environment. Therefore, thermally-aware approaches have become one of the most important domains of research in chip design.

In order to directly control the peak temperature of a bus by cooling its hottest wires, Chee et al.~\cite{chee2017cooling} introduced a new class of codes called {\it cooling code}s. Roughly speaking, in a cooling code, the peak temperature can be controlled by avoiding state transitions on the hottest wires for as long as necessary until their temperatures drop.
Moreover, {\it low-power cooling (LPC) codes} were introduced in \cite{chee2017cooling,chee2020low} to control both the peak temperature and the average power consumption simultaneously. Furthermore, if an LPC code can also control error-correction for the transmitted information, then it is said to be a {\it low-power error-correcting cooling (LPECC) code} \cite{chee2017cooling}.
More precisely, an $(n, t, w, e)$-LPECC code is a coding scheme for communication over a bus consisting of $n$ wires having the following three properties:
\begin{itemize}
	\item Property $\bm A(t)$: every transmission does not cause state transitions on the $t$ hottest wires;
\item Property $\bm B(w)$: the total number of state transitions on all the wires is at most $w$ in every transmission;
\item Property $\bm C(e)$: up to $e$ transmission errors ($0$ received as $1$, or $1$ received as $0$) on the $n$ wires can be corrected.
\end{itemize}

The values of $t, w, e$ are called {\it design parameters} which are determined by the specific thermal requirements of specific interconnects.
Note that codes with Property $\bm C(e)$ are well-studied error-correcting codes. A coding scheme satisfying only Property $\bm A(t)$ is the aforementioned $(n, t)$-cooling code introduced by Chee et al.~\cite{chee2017cooling}. In addition, coding schemes that satisfy Properties $\bm A(t)$ and $\bm B(w)$ simultaneously are called $(n,t,w)$-low power cooling (LPC) codes, which were studied in \cite{chee2017cooling,chee2020low}. Moreover, if a coding scheme with Property $\bm C(e)$ also satisfies Property $\bm A(t)$ or Property $\bm B(w)$, this code is called an $(n, t, e)${\it -error-correcting cooling (ECC) code} or an {\it $(n, w, e)$-low-power error-correcting (LPEC) code}, respectively (see \cite{chee2017cooling}).

In this paper, we are interested in the maximum size of an $(n, t, w, e)$-LPECC code, which is denoted by $C(n,t,w,e)$.
Let us list some known results of $C(n,t,w,e)$ for the reader's convenience.
\begin{itemize}
	\item $n=w(t+1)$: Chee et al.~\cite{chee2017cooling} used resolvable Steiner systems to give constructive lower bounds  on $C(w(t+1),t,w,e)$ with
	(\romannumeral1) $(t,w,e)=(t,3,1)$, where $t$ is even;
	(\romannumeral2) $(t,w,e)=(t,4,1)$, where $t\equiv 0$ or $1 \pmod{3}$;
	(\romannumeral3) $(t,w,e)=(t,4,2)$, where $t\equiv 0 \pmod{3}$;
	(\romannumeral4) $(t,w,e)=(t,w,w-2)$, where $t\equiv 0 \pmod{w-1}$ is sufficiently large;
	(\romannumeral5) $(t,w,e)=(q-1,q,q-2)$, where $q$ is a prime power;	
	   \item $e=w-1$: $C(n, t, w, w-1) = \lfloor \frac{n+1}{w+t} \rfloor$ for any positive integers $t, w \leq n$ \cite[Corollary 4.4]{Liu-Ji};
	   \item $e=w-2$:
 \begin{enumerate}[(a)]
 	  \item $C(n, 1, 3, 1) = \lfloor \frac{n(n+1)}{12}\rfloor$ for any positive integer $n$ with $n \neq 6$, and $C(6, 1, 3, 1) = 2$ \cite[Theorem 4.11]{Liu-Ji};
	   \item $C(n,1,4,2)\leq \lfloor \frac{n(n+1)}{20}\rfloor$ for any positive integer $n\geq 10$ \cite[Theorem 3.3]{Liu-Ji}, and $C(n,1,4,2)= \frac{n(n+1)}{20}$ for any positive integer $n \equiv 0, 4 \pmod{20}$ \cite[Corollary 4.4]{Liu-Ji};
	   \item $C(n,2,3,1)\leq \lfloor \frac{n(n+1)}{18}\rfloor$ for any positive integer $n\geq 9$ \cite[Theorem 3.4]{Liu-Ji},  $C(n, 2, 3, 1) = \lfloor \frac{n(n+1)}{18} \rfloor$ for any positive integer $n \equiv 0 \pmod 2$ with $n \neq 6$, and $C(6, 2, 3, 1) = 1$ \cite[Theorem 4.19]{Liu-Ji}.
	\end{enumerate}
\end{itemize}

  We continue the study on the maximum size of an $(n, t, w, e)$-LPECC code with $e=w-2$ and further consider the asymptotic value for more general $w$ and $e$. Our results are summarized below.
  \begin{enumerate}[(i)]
  	\item $C(n,t,w,w-2)\leq \lfloor \binom{n+1}{2}/\binom{w+t}{2}\rfloor$ for any positive integers $n,t,w$ with $w\geq t^2+2t+2$ and $n\geq w+t-1$ (Theorem~\ref{thm-(n,t,w,w-2)-upper}). 
  The equality holds	if $n$ is sufficiently large  satisfying $\binom{w+t}{2} \mid \binom{n+1}{2}$ and $w+t-1\mid n$ (Corollary~\ref{cor-value-(n,t,w,w-2)}). Further for such $n$, any $(n,t,w,w-2)$-LPECC code of size meeting the upper bound must come from an $(n+1,w+t,1)$-BIBD (Theorem~\ref{thm-BIBD}).
%
  	
  	\item $C(n,t,3,1) \leq  \lfloor\frac{n(n+1)}{6(t+1)}\rfloor$ for any positive integers $n,t$ with $n\geq 3(t+1)$  (Theorem~\ref{thm-(n,t,3,1)-upper}), and the equality holds
   if $n\equiv 2 \pmod{3(t+1)}$ and $2\mid n$ (Corollary~\ref{cor-value-(n,t,3,1)}).  This extends the previous results for $C(n, 1, 3, 1)$ and $C(n,2,3,1)$~\cite{Liu-Ji}. Moreover, we have that $\liminf_{n\to\infty} C(n,t,4,2)/n^2 \geq \frac{1}{12(t+1)}$ (Theorem~\ref{thm-(n,t,4,2)-lower}). These results indicate that the upper bound in (i) is not valid for small $w$ compared with $t$.
  	
  	\item  $\lim_{n\rightarrow \infty}\frac{C(n,t,w,e)}{\binom{n+1}{w-e}/\binom{w+t}{w-e}}=1$, where $t,w,e$ satisfies $e\leq w-1$ and $2\binom{w}{w-e}  \geq  \binom{w-t-1}{w-e} + \binom{w+t}{w-e}$ (Corollary~\ref{cor-limit-q-ary-lpecc}). Note that, when $w-e$ is a positive constant, the conditions hold if $w$ is large; when $t=1$, the conditions hold if $\frac{\sqrt{5}-1}{2}w \leq e\leq w-1$ (see Remark~\ref{rmk-condition}). In fact, we show that $\lim_{n\rightarrow \infty}\frac{C_q(n,t,w,e)}{(q-1)^{w-e} \binom{n}{w-e}/\binom{w+t}{w-e}}=1$ under the same condition, where $C_q(n,t,w,e)$ denotes the maximum size of a $q$-ary $(n,t,w,e)$-LPECC code (Corollary~\ref{cor-limit-q-ary-lpecc}).
  Moreover, we discuss bounds and exact values of constant-weight LPECC codes based on known results on constant-weight codes  (Corollary~\ref{cor-value-cpecc}).

  \end{enumerate}

  The rest of this paper is organized as follows. In Section~\ref{sec-preliminary}, we review a combinatorial characterization  and a basic construction of LPECC codes via packings due to Liu and Ji~\cite{Liu-Ji}. In Section~\ref{sec-(n,t,w,w-2)-large-w}, we prove $C(n,t,w,w-2)\leq \lfloor \binom{n+1}{2}/\binom{w+t}{2}\rfloor$ for large $w$ with respect to $t$, and then show that $C(n,t,w,w-2)=\binom{n+1}{2}/\binom{w+t}{2}$ holds under certain conditions and determine the unique structure when the equality holds. In Section~\ref{sec-(n,t,w,w-2)-small-w}, we first use Tur\'an's theorem to prove $C(n,t,3,1) \leq\lfloor \frac{n(n+1)}{6(t+1)}\rfloor$ for any $n\geq 3(t+1)$, and then use frames to construct $(n,t,3,1)$ and $(n,t,4,2)$-LPECC codes. Next, the general $q$-ary LPECC codes are considered in Section~\ref{sec-asymptotic}, where an asymptotic result is given. In particular, when $q=2$, this determines the asymptotic behavior of binary LPECC codes. Further, the exact values of the maximum sizes of constant-power error-correcting cooling (CPECC) codes are discussed in Section~\ref{sec-cpecc}. Finally, Section~\ref{conclusion} provides some concluding remarks.

\section{Preliminaries}\label{sec-preliminary}

Let $[n]$ denote the set $\{1,2,\ldots,n\}$. For any finite set $X$, let $2^{X}$ denote the power set of $X$, and let $\binom{X}{k}$ denote the family of all $k$-subsets of $X$. Write $A\triangle B$ for the symmetric difference of two sets $A,B\subset X$.  For any family $\A\subset 2^{X}$, we denote $U(\A):=\cup_{A\in\A} A$.

\subsection{Combinatorial characterization of LPECC codes}\label{subsec-comb}

 Let us write $\bm x:=(x_1,\dots,x_n)$ for a codeword in $\{0,1\}^n$. For any  $\bm x\in \{0,1\}^n$, denote by $\supp(\bm x):=\{i\in [n]:x_i\neq 0\}$ the {\it support} of $\bm x$, and denote by $|\bm x|:=|\supp(\bm x)|$ the {\it weight} of $\bm x$. For $\bm x,\bm y\in \{0,1\}^n$, their {\it Hamming distance} is $d(\bm x,\bm y):=|\{i\in [n]:x_i\neq y_i\}|$.

Now let us describe LPECC codes more formally as in \cite{chee2017cooling}. An $(n,t,w,e)$-LPECC code is represented by a code $\C\subset \{0,1\}^n$ with a partition $\C=\C_1\cup\C_2\cup\dots\cup\C_b$, where $b$ is said to be the size of $\C$. Here, each message is encoded to a codeset $\C_i$ of possible codewords instead of exactly one codeword, and each subset $T\subset [n]$  represents a set of hot wires.
In terms of the above terminologies, Property $\bm A(t)$ means that for any $t$-set (hottest wires) $T\subset [n]$ and any $i\in [b]$, there exists $\bm x\in \C_i$ such that $\supp(\bm x)$ does not intersect $T$; Property $\bm B(w)$ means every codeword in $\C$ has weight at most $w$; and Property $\bm C(e)$ means that the Hamming distance between any two codewords from different $\C_i$ must be at least $2e+1$.

Since binary vectors $\bm x\in \{0,1\}^n$ naturally correspond to subsets of an $n$-set, we do not distinguish between binary vectors and subsets. In what follows, for any binary LPECC code $\C=\C_1\cup\C_2\cup\dots\cup\C_b$, we work with its {\it set-theoretic representation} $(X,\B)$ with a partition $\B=\PP_1\cup\PP_2\cup\dots\cup\PP_b$, where $X$ is an underlying $n$-set and $\B,\PP_1,\dots,\PP_b\subset 2^X$ correspond to sets of supports of codewords in $\C,\C_1,\dots,\C_b$, respectively. Members in $\B$ are called blocks.

In \cite{Liu-Ji}, the following combinatorial characterization of an $(n,t,w,e)$-LPECC code is given.
\begin{definition}
Let $X$ be a set of size $n$. We say that $(X,\B)$ is the set-theoretic representation of an $(n,t,w,e)$-LPECC code $\C$ with a partition $\PP_1,\PP_2,\dots,\PP_b$ of $\B\subset 2^X$, where $b$ is the size of $\C$, if all of the following three properties hold:
\begin{itemize}
	\item[(1)] (Property $\bm{A}(t)$) \  For any $T\in \binom{X}{t}$ and any $i\in [b]$, there exists $B\in \mathcal{P}_i$ satisfying $B\cap T =\emptyset$;
	\item[(2)] (Property $\bm{B}(w)$) \  $|B| \leq w$ for any $B\in \mathcal{B}$;
	\item[(3)] (Property $\bm{C}(e)$) \  $|B_i\triangle B_j|\geq 2e+1$, for any $B_i\in \PP_i, B_j\in\PP_j$, $i\neq j$, and $i,j\in [b]$.
\end{itemize}
\end{definition}
Let $C(n,t,w,e)$ be the maximum size of an $(n,t,w,e)$-LPECC code.
When $e=w-2$, in an $(n,t,w,w-2)$-LPECC code of size at least two, Property $\bm{B}(w)$ and Property $\bm{C}(e)$ can be characterized as follows  \cite{Liu-Ji}:
\emph{\begin{itemize}
	\item[(2'')] (Property $\bm{B}(w)$) \  $w-3 \leq |B| \leq w$ for any $B\in \mathcal{B}$;
	\item[(3'')] (Property $\bm{C}(e)$)
	\begin{enumerate}[(i)]
		\item For any pair $\{x,y\}$ of $X$, there exists at most one $i\in [b]$ such that $\{x,y\}$ is contained in some block of $\mathcal{P}_i$;
		\item  There exists at most one $i\in [b]$ such that $\PP_i$ contains blocks of size at most $w-2$. Moreover, if $B\in \PP_i$ and $|B| \leq w-2$, then $B\cap B'=\emptyset$ for any $B'\in \B\setminus \PP_i$;
		\item If $|B_1|=|B_2|=w-1$ with $B_1\in \PP_i, B_2\in\PP_j$ and $i\neq j$, then $B_1\cap B_2=\emptyset$.
	\end{enumerate}
\end{itemize}}

For any $i\in [b]$, let us denote by $\tau(\PP_i)$ the number of pairs $\{x,y\}\in \binom{X}{2}$ contained in some block in $\PP_i$. Note that, by double-counting pairs, (3'')-(i) directly implies that
\begin{equation}\label{eq-double-count-pairs}
	\sum_{i\in [b]} \tau(\PP_i) \leq \binom{n}{2}.
\end{equation}

\subsection{Constructions of LPECC codes from packings}
In \cite{Liu-Ji}, the authors showed that LPECC codes can be constructed from packings.
An $r$-$(n,k,\lambda)$ {\it packing} is a pair $(X,\B)$, where $X$ is a set of $n$ elements and $\B\subset \binom{X}{k}$ such that any $A\in \binom{X}{r}$ is contained in at most $\lambda$ sets in $\B$. Denote by $D_\lambda(n,k,r)$ the packing number, that is, the maximum size of $\B$ in an $r$-$(n,k,\lambda)$ packing $(X,\B)$. If $\lambda=1$, write $D(n,k,r)$ for $D_1(n,k,r)$. It is clear that $D(n,k,r)\leq \binom{n}{r}/\binom{k}{r}$.
An $r$-$(n,k,1)$ packing $(X,\B)$ is called a {\it Steiner system} $S(r,k,n)$ if $|\B|=\binom{n}{r}/\binom{k}{r}$, that is, each $A\in \binom{X}{r}$ is contained in exactly one set in $\B$. When $r=2$, a Steiner system $S(2,k,n)$ is known as a {\it balanced incomplete block design} and usually denoted by $(n,k,1)$-BIBD. For Steiner systems and block designs, one may refer to e.g. \cite{beth1999design,colbourn2007crc} and the references therein.

\begin{theorem}[{\cite[Theorem 4.3]{Liu-Ji}}]\label{thm-packing}
	Let $1\leq t \leq w$ and $w\geq e+1$. Then $C(n,t,w,e) \geq D(n+1,w+t,w-e)$.
\end{theorem}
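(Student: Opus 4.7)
The plan is to build an $(n,t,w,e)$-LPECC code of size $D(n+1,w+t,w-e)$ directly from a maximum $(w-e)$-$(n+1,w+t,1)$ packing $(Y,\mathcal{D})$. I would fix a ``point at infinity'' $\infty\in Y$ and set $X:=Y\setminus\{\infty\}$, so that $|X|=n$, and then associate one codeset $\mathcal{P}_D\subseteq 2^X$ to each block $D\in\mathcal{D}$. The resulting code $\mathcal{B}=\bigcup_{D\in\mathcal{D}}\mathcal{P}_D$, partitioned as $\{\mathcal{P}_D\}_{D\in\mathcal{D}}$, will then have size $|\mathcal{D}|=D(n+1,w+t,w-e)$, so everything reduces to choosing the codesets correctly so that Properties $\bm{A}(t)$, $\bm{B}(w)$, $\bm{C}(e)$ all hold.

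The construction of $\mathcal{P}_D$ splits on whether $\infty\in D$. When $\infty\notin D$, the block $D$ itself is a $(w+t)$-subset of $X$ and I would take $\mathcal{P}_D:=\binom{D}{w}$. When $\infty\in D$, the set $D\setminus\{\infty\}$ is a $(w+t-1)$-subset of $X$ and I would instead take $\mathcal{P}_D:=\binom{D\setminus\{\infty\}}{w-1}$. This drop from size $w$ to size $w-1$ in the second case is the crucial design choice. Property $\bm{B}(w)$ is then immediate, and for Property $\bm{A}(t)$ I would note that any $T\in\binom{X}{t}$ meets $D\cap X$ in at most $t$ points, so $D\cap T$ can always be extended to a $t$-subset $T'\subseteq D\cap X$ (whose existence is guaranteed by $w\ge 1$), and removing $T'$ (and $\infty$, in the second case) from $D\cap X$ gives a block in $\mathcal{P}_D$ disjoint from $T$.

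The substantive step, and the only place the packing hypothesis genuinely enters, is Property $\bm{C}(e)$. Given $B_i\in\mathcal{P}_{D_i}$ for $i=1,2$ with $D_1\neq D_2$, I would set $s_i:=\mathds{1}[\infty\in D_i]$ so that $|B_i|=w-s_i$ and $B_i\subseteq D_i\cap X$. The $(w-e)$-packing condition forces $|D_1\cap D_2|\le w-e-1$, from which $|B_1\cap B_2|\le|D_1\cap D_2|-s_1 s_2\le w-e-1-s_1 s_2$; substituting into $|B_1\triangle B_2|=(w-s_1)+(w-s_2)-2|B_1\cap B_2|$ and checking the four cases of $(s_1,s_2)\in\{0,1\}^2$ yields $|B_1\triangle B_2|\ge 2e+1$ throughout, with the tightest case being exactly one $D_i$ containing $\infty$. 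The main obstacle sits precisely in the second case of the construction: choosing $(w-1)$-subsets rather than the naive $w$-subsets of $D\cap X$ is what simultaneously keeps Property $\bm{A}(t)$ alive when $\infty\in D$ (since a $t$-subset of $X$ can exhaust $D\cap X$ on only $w-1$ surviving coordinates) and supplies enough slack in $|B_1\triangle B_2|$ to absorb the forced overlap at $\infty$ when both $D_i$ contain it.
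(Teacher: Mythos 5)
Your construction is exactly the one the paper uses (it is sketched immediately after the theorem statement: $\PP_B=\binom{B}{w}$ for packing blocks avoiding $\infty$ and $\PP_B=\binom{B\setminus\{\infty\}}{w-1}$ for blocks through $\infty$), and your verification of Properties $\bm A(t)$, $\bm B(w)$, $\bm C(e)$ --- in particular the four-case distance computation via $|B_1\cap B_2|\le |D_1\cap D_2|-s_1s_2\le w-e-1-s_1s_2$ --- is correct. The paper declares these checks ``routine,'' so your write-up simply supplies the details of the same argument.
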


Let us briefly describe the construction of an LPECC code in Theorem~\ref{thm-packing}.
Given a $(w-e)$-$(n+1,w+t,1)$ packing $(X\cup\{\infty\},\B)$, where $\infty\notin X$ and $|X|=n$. Let $\B'=\{ B\in \B : \infty\notin B\}$ and let $\B''=\{ B\setminus\{\infty\} : \infty\in B\in \B\}$. For each $B\in\B'$, let $\PP_B=\binom{B}{w}$; for each $B\in\B''$, let $\PP_B=\binom{B}{w-1}$. Note that $\bigcup_{B\in \B'\cup \B''} \PP_B$ is a disjoint union by the packing property. Then it is routine to check that $\C=\bigcup_{B\in \B'\cup \B''} \PP_B$ is an $(n,t,w,e)$-LPECC code, so $C(n,t,w,e) \geq D(n+1,w+t,w-e)$.

However, the packing numbers $D(n,k,r)$ have not been determined completely. For near-optimal packings, R\"odl~\cite{Rodl} in 1985 introduced the celebrated `nibble method' to show that, for all fixed $k>r\geq 2$,
\begin{equation}\label{eq-near-optimal-packing}
	\lim_{n\rightarrow \infty} \frac{D(n,k,r)}{\binom{n}{r}/\binom{k}{r}} = 1,
\end{equation}
which resolved a conjecture of Erd\H{o}s and Hanani~\cite{Erdos-Hanani}.
For optimal packings, Wilson~\cite{wilson1972existence1,wilson1972existence2,wilson1975existence} proved that if $n$ is sufficiently large and satisfies necessary divisibility conditions that $\binom{k}{2} \mid \binom{n}{2}$ and $k-1\mid n-1$, then there exists an $(n,k,1)$-BIBD and hence
\begin{equation}\label{eq-optimal-packing}
	D(n,k,2)=\frac{\binom{n}{2}}{\binom{k}{2}}.
\end{equation}
More generally, in 2014, Keevash~\cite{Keevash} proved that there exists a Steiner system $S(r,k,n)$ and hence
$
D(n,k,r)=\binom{n}{r}/\binom{k}{r},
$
provided that $n$ is large enough and satisfies certain necessary divisibility conditions, i.e., $\binom{k-i}{r-i} \mid \binom{n-i}{r-i}$ for each $0\leq i \leq r-1$. 

\section{An upper bound for $(n,t,w,w-2)$-LPECC codes with large $w$}	\label{sec-(n,t,w,w-2)-large-w}
In this section, we give a tight upper bound on the size of $(n,t,w,w-2)$-LPECC codes for relatively large $w$ with respect to $t$.
\begin{theorem}\label{thm-(n,t,w,w-2)-upper}
For $w\geq t^2+2t+2\geq 5$,	let $\mathcal{C}$ be an $(n,t,w,w-2)$-LPECC code of size $b$ with  $n\geq w+t-1$. Then $b \leq \lfloor \binom{n+1}{2}/\binom{w+t}{2}\rfloor$.
\end{theorem}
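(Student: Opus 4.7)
My plan is to reduce the LPECC code to a 2-packing on the augmented ground set $Y := X \cup \{\infty\}$ (with $\infty \notin X$), then apply the Fisher-type bound $b\binom{w+t}{2} \leq \binom{n+1}{2}$. For each $\PP_i$ I will produce a $(w+t)$-subset $S_i \subseteq Y$ such that the complete graph $\binom{S_i}{2}$ lies inside the covered-pair graph $T(\widetilde{\PP_i})$ of a suitably augmented family $\widetilde{\PP_i}$, and so that the $T(\widetilde{\PP_i})$'s are pairwise disjoint in $\binom{Y}{2}$.

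For the augmentation, I replace each block $B \in \PP_i$ with $|B|=w-1$ by $B \cup \{\infty\}$; the distinguished class $\PP_{i^*}$ containing blocks of size $\leq w-2$ (at most one such class by (3'')-(ii), with those blocks vertex-disjoint from all other blocks) is padded analogously using $\infty$. Call the resulting family $\widetilde{\PP_i}$; its effective blocks all have size $w$. The pair-graphs $T(\widetilde{\PP_i}) \subseteq \binom{Y}{2}$ are pairwise disjoint across $i$ by combining (3'')-(i) for pairs inside $X$ with (3'')-(iii), which forces the $(w-1)$-blocks of distinct $\PP_i$'s to be vertex-disjoint so that each pair $\{x,\infty\}$ is assigned to at most one $\widetilde{\PP_i}$.

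The crux --- and the main obstacle --- is to produce the clique $S_i$. In the representative case where $\PP_i$ has only $w$-blocks, fix any $B_1 \in \PP_i$; for each $T \in \binom{B_1}{t}$, Property $\bm A(t)$ gives $B_T \in \PP_i$ with $B_T \cap T = \emptyset$, forcing $|B_T \setminus B_1| \geq t$. Let $E := \bigcup_T (B_T \setminus B_1)$. A double-count of the pairs $(T, E^*)$ with $E^* \in \binom{B_T \setminus B_1}{t}$ yields
\[
\max_{E^* \in \binom{E}{t}} \bigl|\{T \in \tbinom{B_1}{t} : E^* \subseteq B_T\}\bigr| \;\geq\; \frac{\binom{w}{t}}{\binom{|E|}{t}}.
\]
The hypothesis $w \geq t^2+2t+2 = (t+1)^2 + 1$ is calibrated so that this lower bound strictly exceeds $\binom{w-1}{t-1}$ --- the maximum number of $t$-subsets of $B_1$ sharing a common element --- at least when $|E| \leq t+1$. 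In that regime, for the optimal $E^*$ the set $\mathcal{T}^* := \{T : E^* \subseteq B_T\}$ is too large to all share a common element of $B_1$, so $\bigcap_{T\in\mathcal{T}^*} T = \emptyset$; then for every $b \in B_1$ some $T \in \mathcal{T}^*$ avoids $b$, giving $\{b\}\cup E^* \subseteq B_T$, and pairs inside $E^*$ are contained in any such $B_T$. Hence $S_i := B_1 \cup E^*$ is a $(w+t)$-clique in $T(\PP_i)$. Mixed-block and small-block cases are handled by absorbing $\infty$ in place of an element of $E^*$. The large-$|E|$ regime, where this pigeonhole fails directly, is expected to be forced back to the above by exploiting further applications of Property $\bm A(t)$ to $t$-sets straddling $B_1$ and $E$ (thereby confining the $B_T$'s to a $(t+1)$-element core of $E$); completing this case is the main technical hurdle. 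Once the clique lemma is established, summing gives
\[
b\binom{w+t}{2} \;=\; \sum_{i\in [b]} \binom{|S_i|}{2} \;\leq\; \sum_{i\in [b]} |T(\widetilde{\PP_i})| \;\leq\; \binom{|Y|}{2} \;=\; \binom{n+1}{2},
\]
which yields $b \leq \lfloor \binom{n+1}{2}/\binom{w+t}{2}\rfloor$.
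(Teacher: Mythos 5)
Your reduction to disjoint pair-sets on $X\cup\{\infty\}$ and the final double count are sound, but the central ``clique lemma'' you rely on --- that each $\widetilde{\PP_i}$ covers all pairs of some $(w+t)$-set $S_i$ --- is false, so the large-$|E|$ case you defer is not a technical hurdle but a genuine obstruction. Take $\PP_i$ to consist of $t+1$ pairwise disjoint $w$-blocks: Property $\bm A(t)$ holds because any $t$-set misses at least one of them, there are no $(w-1)$-blocks so $\infty$ never enters, and the covered-pair graph is a disjoint union of $t+1$ copies of $K_w$, whose largest clique has only $w$ vertices, not $w+t$. (With $t=1$, $w=5$, two such classes built from four pairwise disjoint $5$-blocks form a legitimate $(n,1,5,3)$-LPECC code of size two, so no global property of the code rescues the claim.) The precise step that breaks is the deduction ``some $T\in\mathcal{T}^*$ avoids $b$, giving $\{b\}\cup E^*\subseteq B_T$'': from $B_T\cap T=\emptyset$ with $T\subset B_1$ you cannot conclude that $B_T$ contains $B_1\setminus T$, or indeed any element of $B_1$; in the example every $B_T$ is a block disjoint from $B_1$, so the pairs joining $B_1$ to $E^*$ are simply never covered.

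What is true, and what the paper proves, is only the weaker statement that each class covers enough pairs. The paper splits on $|U(\PP_i^{w-1})|$: when the $(w-1)$-blocks have small union it shows $\tau(\PP_i)\geq\binom{w+t}{2}$ by a greedy count (a fixed $w$-block contributes $\binom{w}{2}$, and each of at least $t+1$ further vertices of $U(\PP_i)$ contributes at least $w-2,w-3,\dots,w-t-2$ new covered pairs; this is where $w\geq t^2+2t+2$ enters, rather than through a pigeonhole on $t$-subsets of $B_1$); when the union is large it only obtains $\tau(\PP_i)\geq\binom{w+t-1}{2}$ and recovers the missing $w+t-1$ pairs per class from the pairwise disjointness of the sets $U(\PP_i^{w-1})$ guaranteed by (3'')-(iii) --- exactly your $\{x,\infty\}$ pairs. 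A disjoint union of cliques shows the total pair count can meet the required bound without any single large clique existing, so any repair of your argument must replace the clique lemma by a direct lower bound on $|T(\widetilde{\PP_i})|$, which is essentially the paper's route. The clique structure you are after does emerge, but only in the characterization of codes attaining the bound (Theorem~\ref{thm-BIBD}), not as an ingredient of the upper bound itself.
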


\begin{remark}
	The condition that $n\geq w+t-1$ is necessary since otherwise $\lfloor \binom{n+1}{2}/\binom{w+t}{2}\rfloor =0$.
	Moreover, note that, when $t=1$, Theorem~\ref{thm-(n,t,w,w-2)-upper} provides an upper bound $C(n,1,w,w-2)\leq \lfloor \binom{n+1}{2}/\binom{w+1}{2}\rfloor$ with $n\geq w\geq 5$, which complements the bounds for the case $w=3,4$ in {\cite[Theorems 3.2 and 3.3]{Liu-Ji}}.
\end{remark}

To prove Theorem~\ref{thm-(n,t,w,w-2)-upper}, as in \cite{Liu-Ji} one can first consider the case that each block $B$ in the set-theoretic representation of $\C$ has size $w$ or $w-1$. Otherwise we can find an $(n-|B|,t,w,w-2)$-LPECC code of size $b-1$, whose set-theoretic representation contains only blocks of size $w$ and $w-1$ by (3''). See details below.

\begin{lemma}\label{lemma-(n,t,w,w-2)-upper}
	Let $\mathcal{C}$ be an $(n,t,w,w-2)$-LPECC code of size $b$ with  $w\geq t^2+2t+2\geq 5$. Let $(X,\B)$ be the set-theoretic representation of $\C$ with a partition $\{\PP_i: i\in [b]\}$. If $w-1\leq |B| \leq w$ for any $B\in \mathcal{B}$, then we have $b \leq \lfloor \binom{n+1}{2}/\binom{w+t}{2}\rfloor$.
\end{lemma}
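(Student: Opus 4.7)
My plan is to convert the LPECC structure into a pair-packing problem on an enlarged ground set and then bound $b$ via a pair-counting inequality. First, I introduce a new point $\infty\notin X$ and set $X':=X\cup\{\infty\}$, so $|X'|=n+1$. For each family, I extend $(w-1)$-blocks by $\infty$, producing
\[
\PP_i' := \{B\in\PP_i:|B|=w\}\cup\{B\cup\{\infty\}:B\in\PP_i,\,|B|=w-1\}.
\]
Every block of $\PP_i'$ now has size $w$, and Property $\bm A(t)$ of $\PP_i$ transfers to $\PP_i'$ in the form: for every $T\in\binom{X}{t}$ there exists $B\in\PP_i'$ with $B\cap T=\emptyset$. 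Let $\tau(\PP_i'):=\bigl|\bigcup_{B\in\PP_i'}\binom{B}{2}\bigr|$ denote the number of pairs of $\binom{X'}{2}$ covered by $\PP_i'$.

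Second, I observe that the $\PP_i'$'s cover pairwise disjoint collections of pairs in $\binom{X'}{2}$: a pair $\{x,y\}\subseteq X$ is covered in at most one $\PP_i$ by the first item of $(3'')$, and a pair $\{\infty,x\}$ is covered in $\PP_i'$ exactly when $x$ lies in some $(w-1)$-block of $\PP_i$, which by the third item of $(3'')$ can happen in at most one $\PP_i$ (since $(w-1)$-blocks across distinct $\PP_i$'s are pairwise disjoint). Hence
\[
\sum_{i=1}^b \tau(\PP_i')\le\binom{n+1}{2}.
\]
It therefore suffices to prove the key lower bound
\[
\tau(\PP_i')\ge\binom{w+t}{2}\qquad\text{for every }i\in[b],\qquad(\star)
\]
since then $b\binom{w+t}{2}\le\binom{n+1}{2}$ and the integrality of $b$ yields $b\le\lfloor\binom{n+1}{2}/\binom{w+t}{2}\rfloor$.

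For $(\star)$, let $Y:=U(\PP_i')$ and $y:=|Y|$. Applying Property $\bm A(t)$ to any $t$-subset $T\subseteq Y\cap X$ (which exists since $|Y\cap X|\ge y-1\ge w-1\ge t$) forces some block $B\in\PP_i'$ inside $Y\setminus T$, hence $y\ge w+t$. If $y=w+t$, then taking $T:=Y\setminus B^*$ for any $w$-subset $B^*\subseteq Y$ with $Y\setminus B^*\subseteq X$ shows $B^*\in\PP_i'$; a short check using $w\ge 3$ then verifies that these blocks jointly cover every pair in $\binom{Y}{2}$, so $\tau(\PP_i')=\binom{w+t}{2}$.

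The main obstacle is the case $y>w+t$, in which $\PP_i'$ may genuinely fail to be contained in any single $(w+t)$-set (for example, $\PP_i'$ could contain $\binom{Y_1}{w}\cup\binom{Y_2}{w}$ for two disjoint $(w+t)$-sets). The plan here is to fix a block $B_0\in\PP_i'$ and, for each $t$-subset $T\subseteq B_0\cap X$, use Property $\bm A(t)$ to obtain a satellite block $B_T\in\PP_i'$ with $B_T\cap T=\emptyset$, forcing $|B_T\cap B_0|\le w-t$ and $|B_T\setminus B_0|\ge t$. A careful double-count of the pairs contributed by $B_0$ together with these satellites, combined with an inclusion-exclusion bound on how often a single new pair can be shared among the $\binom{|B_0\cap X|}{t}$ satellites, should give
\[
\tau(\PP_i')\ge\binom{w}{2}+tw+\binom{t}{2}=\binom{w+t}{2}.
\]
The numerical threshold $w\ge t^2+2t+2=(t+1)^2+1$ is exactly what keeps the worst-case overlap among satellites under control in this step; for smaller $w$ the bound genuinely fails, as illustrated by the configuration of $t+1$ pairwise disjoint $w$-blocks, which satisfies Property $\bm A(t)$ but covers only $(t+1)\binom{w}{2}$ pairs--less than $\binom{w+t}{2}$ when $w$ is small (e.g., $w=3,\,t\ge 2$)--consistent with the known invalidity of the overall upper bound in that regime.
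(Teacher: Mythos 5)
Your reduction to the single inequality $(\star)$ via the added point $\infty$ is sound, and it is an attractive repackaging of the paper's argument: the pairs through $\infty$ covered by $\PP_i'$ are exactly $\{\infty,x\}$ for $x\in U(\PP_i\cap\binom{X}{w-1})$, so your bound $\sum_i\tau(\PP_i')\le\binom{n+1}{2}$ is precisely the paper's $\sum_i\tau(\PP_i)\le\binom{n}{2}$ combined with its observation, from (3'')-(iii), that the sets $U(\PP_i\cap\binom{X}{w-1})$ are pairwise disjoint and hence have total size at most $n$; likewise your target $\tau(\PP_i')\ge\binom{w+t}{2}$ is equivalent to the paper's two claims ($\tau(\PP_i)\ge\binom{w+t}{2}$ when $|U(\PP_i\cap\binom{X}{w-1})|\le w+t-2$, and $\tau(\PP_i)\ge\binom{w+t-1}{2}$ otherwise). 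Your verification that $|U(\PP_i')|\ge w+t$ and your handling of the case $|U(\PP_i')|=w+t$ are correct.

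However, the case $|U(\PP_i')|>w+t$ --- the only place where the hypothesis $w\ge t^2+2t+2$ enters, and hence the real content of the lemma --- is not actually proved. You describe a plan (a base block $B_0$, one satellite $B_T$ per $t$-subset $T\subseteq B_0\cap X$, then ``a careful double-count \dots\ should give'' the bound), but the estimate is never carried out, and this is exactly where the difficulty lives: a single satellite meeting $B_0$ in $w-t$ points contributes only $\binom{w}{2}+wt-\binom{t+1}{2}$ pairs, leaving a deficit of $t^2$ pairs that must be recovered from the interactions among the remaining satellites, and controlling how those satellites overlap \emph{one another} (the ``inclusion-exclusion bound'' you invoke) is the whole problem. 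The paper closes this case by a simpler greedy count: it fixes one block $B$ of $\PP_i$ avoiding a suitable $T$, notes that $Y=U(\PP_i)\setminus B$ has at least $t+1$ points, and charges to the $j$-th point of $Y$ at least $(w-1)-j$ pairs not previously counted, yielding $\tau(\PP_i)\ge\binom{w}{2}+(t+1)(w-1)-\binom{t+2}{2}$, which is at least $\binom{w+t}{2}$ precisely when $w\ge t^2+2t+2$ (with equality at the threshold, so there is no slack to absorb a looser count). Until you supply an inequality of this kind, your proof of $(\star)$ in the main case is a conjecture rather than a proof.
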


\begin{proof}[Proof of Theorem~\ref{thm-(n,t,w,w-2)-upper} assuming Lemma~\ref{lemma-(n,t,w,w-2)-upper}]
	Let $(X,\B)$ be the set-theoretic representation of $\C$ with a partition $\{\PP_i: i\in [b]\}$. If $b=1$ then $b\leq \lfloor \frac{n(n+1)}{w(w+1)}\rfloor$ trivially holds, so we assume $\C$ is of size at least two.
	
	If $w-1\leq |B| \leq w$ for any $B\in \mathcal{B}$, then we are done by Lemma~\ref{lemma-(n,t,w,w-2)-upper}. Otherwise there exists a block $B\in \PP_j$ with $w-3 \leq |B| \leq w-2$ for some $j\in [b]$ by (2''). Moreover, (3'') implies that $\B\setminus \PP_j$ has no blocks of size at most $w-2$ and that $B\cap B'=\emptyset$ for any $B'\in \B\setminus \PP_j$. So deleting $\PP_j$ and all elements in $B$ yields an $(n-|B|,t,w,w-2)$-LPECC code of size $b-1$, whose set-theoretic representation is $(X\setminus B, \B\setminus\PP_j)$ with the partition $\{\PP_i: i\in [b]\setminus\{j\}\}$. Since each block in $\B\setminus \PP_j$ has size $w-1$ or $w$, by Lemma~\ref{lemma-(n,t,w,w-2)-upper},
	\begin{equation}\label{equa:w-3-1}
	b-1
	 \leq  \left\lfloor \frac{\binom{n-|B|+1}{2}}{\binom{w+t}{2}}\right\rfloor
	 \leq  \left\lfloor \frac{\binom{n-w+4}{2}}{\binom{w+t}{2}}\right\rfloor,
	\end{equation}
	where the second inequality is due to $|B| \geq w-3$. Note that
	\begin{equation}\label{equa:w-3-2}
	1+  \frac{\binom{n-w+4}{2}}{\binom{w+t}{2}}
	\leq  \frac{\binom{n+1}{2}}{\binom{w+t}{2}}
\end{equation}
    is equivalent to $n\geq w+t-1+\frac{(t+3)(t+2)}{2(w-3)}$. Hence
    $
    b\leq \lfloor \binom{n+1}{2}/\binom{w+t}{2}\rfloor$ holds when $n\geq w+t-1+\frac{(t+3)(t+2)}{2(w-3)}$. It remains to verify the bound for the case $w+t-1\leq n< w+t-1+\frac{(t+3)(t+2)}{2(w-3)}$. Since we assume that $\C$ has size at least two, by Property $\bm{C}(e)$ we have
    $$
    w+t-1+\frac{(t+3)(t+2)}{2(w-3)} > n\geq |B_i\triangle B_j|\geq 2(w-2)+1 =2w-3,
    $$
    for any $B_i\in \PP_i, B_j\in\PP_j$, $i\neq j$, and $i,j\in [b]$. This implies $2(w-3)(w-t-2)<(t+3)(t+2)$, which holds only when $(w,t)=(5,1)$ since $w\geq t^2+2t+2$. It is easy to check that $b\leq 1$ when $(w,t)=(5,1)$ and $n<8$, completing the proof.
\end{proof}

Now we are going to prove Lemma~\ref{lemma-(n,t,w,w-2)-upper}. 
Our main idea is to double-count the number of pairs $\sum_{i\in [b]} \tau(\PP_i)$, which has an upper bound by Eq.~(\ref{eq-double-count-pairs}). For a lower bound, we will apply Property $\bm{A}(t)$ and estimate $\tau(\PP_i)$ carefully by considering the size of $U(\PP_i)=\cup_{B\in\PP_i} B$.  
\begin{proof}[Proof of Lemma~\ref{lemma-(n,t,w,w-2)-upper}]Note that each block in $\PP_i$ has size $w-1$ or $w$. Hence it is easy to see that $|U(\PP_i)|\geq w+t-1$ by Property $\bm{A}(t)$.
  For every $i\in [b]$, let us denote $\PP_i^{w-1}:=\PP_i \cap \binom{X}{w-1}$.
 We first show lower bounds on $\tau(\PP_i)$. In order to analyze $\tau(\PP_i)$ carefully, we consider two cases: $|U(\PP_i^{w-1})|\leq w+t-2$ and $|U(\PP_i^{w-1})|\geq w+t-1$.
\begin{claim}\label{claim-tau-lowerbound1}
	If $|U(\PP_i^{w-1})|\leq w+t-2$, then $\tau(\PP_i)\geq \binom{w+t}{2}$.
\end{claim}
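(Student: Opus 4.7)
My plan is to first strengthen the general bound $|U(\PP_i)|\geq w+t-1$ to $|U(\PP_i)|\geq w+t$ using the hypothesis, and then to exhibit a $(w+t)$-subset $S\subseteq U(\PP_i)$ in which every pair is covered by a block of $\PP_i$; this immediately yields $\tau(\PP_i)\geq\binom{|S|}{2}=\binom{w+t}{2}$.

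\smallskip
\noindent\textit{Step 1 (support bound).} Suppose for contradiction that $|U(\PP_i)|=w+t-1$, and write $U:=U(\PP_i)$. For each $T\in\binom{U}{t}$, Property $\bm A(t)$ supplies a block $B\in\PP_i$ with $B\cap T=\emptyset$, hence $B\subseteq U\setminus T$; since $|U\setminus T|=w-1$ and $|B|\in\{w-1,w\}$, we must have $|B|=w-1$ and $B=U\setminus T$. Thus $\binom{U}{w-1}\subseteq\PP_i^{w-1}$, so $U(\PP_i^{w-1})=U$ has size $w+t-1$, contradicting the hypothesis $|U(\PP_i^{w-1})|\leq w+t-2$.

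\smallskip
\noindent\textit{Step 2 (extremal case $|U(\PP_i)|=w+t$).} Set $V:=U(\PP_i^{w-1})$ and suppose a pair $\{a,b\}\subseteq U(\PP_i)$ is covered by no block. For every $T\in\binom{U(\PP_i)\setminus\{a,b\}}{t}$—valid because $w\geq t^2+2t+2$ forces $w+t-2\geq 2t$—Property $\bm A(t)$ provides $B\subseteq U(\PP_i)\setminus T$ with $\{a,b\}\not\subseteq B$. Since $|U(\PP_i)\setminus T|=w$, a case split yields either $B=U(\PP_i)\setminus T$ of size $w$ (impossible, as it would contain $\{a,b\}$) or $B=(U(\PP_i)\setminus T)\setminus\{z\}$ of size $w-1$ with $z\in\{a,b\}$; in the latter case $B\subseteq V$ forces $U(\PP_i)\setminus V\subseteq T\cup\{a,b\}$. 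Intersecting this constraint over two disjoint choices of $T$ inside $U(\PP_i)\setminus\{a,b\}$ gives $U(\PP_i)\setminus V\subseteq\{a,b\}$, hence $V=U(\PP_i)\setminus\{a,b\}$. But then the alleged $(w-1)$-block $B=(U(\PP_i)\setminus T)\setminus\{z\}$ contains the unique element of $\{a,b\}\setminus\{z\}$, which lies outside $V$—a final contradiction. So every pair in $U(\PP_i)$ is covered and $\tau(\PP_i)\geq\binom{w+t}{2}$.

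\smallskip
\noindent\textit{Step 3 (general case) and main obstacle.} When $|U(\PP_i)|>w+t$, the avoiding block $B$ is no longer forced to coincide with $U(\PP_i)\setminus T$ or one of its one-element deletions, so the rigid case analysis of Step~2 loses its bite. My plan is to exploit the inclusion $\PP_i^{w-1}\subseteq\binom{V}{w-1}$ with $|V|\leq w+t-2$—that is, $(w-1)$-blocks are confined to a small region—to restrict attention to a carefully chosen $(w+t)$-subset $S$ of $U(\PP_i)$ containing $V$ together with well-placed elements of $U(\PP_i)\setminus V$, and then to adapt the Step~2 argument to show that every pair inside $S$ is covered. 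The chief difficulty is to rigorously localize the argument: for a pair $\{a,b\}\subseteq S$ supposed to be uncovered, one must still use Property $\bm A(t)$ on $t$-subsets $T$ drawn from $S\setminus\{a,b\}$ (rather than from the full $U(\PP_i)\setminus\{a,b\}$) while ensuring that the avoiding block $B$ interacts with $S$ in a way that forces the same structural contradictions as in the extremal case.
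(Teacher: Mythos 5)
Your Steps 1 and 2 are correct: Step 1 reproduces case (a) of the paper's argument, and Step 2 gives a valid (if more roundabout) treatment of the extremal case $|U(\PP_i)|=w+t$ — the paper instead picks, for each pair, a $t$-set $T$ hitting $U(\PP_i^{w-1})$ in enough points that no $(w-1)$-block can avoid $T$, which forces the $w$-set $U(\PP_i)\setminus T$ itself to be a block covering the pair. But Step 3 is a genuine gap, not a completed argument: the case $|U(\PP_i)|\geq w+t+1$ certainly occurs under the hypothesis (e.g.\ when $\PP_i$ consists only of $w$-blocks scattered over many elements, so that $\PP_i^{w-1}=\emptyset$), and you only sketch a plan for it.

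Moreover, the plan itself cannot work. You propose to find a $(w+t)$-subset $S\subseteq U(\PP_i)$ in which \emph{every} pair is covered, i.e.\ a clique of size $w+t$ in the graph of covered pairs. When all blocks of $\PP_i$ have size $w$, that graph is a union of $w$-cliques, and unless distinct blocks overlap very heavily its clique number is only $w<w+t$; so no such $S$ need exist, and the rigid forcing of Step 2 (where the avoiding block had to equal $U(\PP_i)\setminus T$ or a one-point deletion of it) is genuinely unavailable once $|U(\PP_i)\setminus T|>w$. The paper handles this case by counting rather than by exhibiting a clique: choose $T$ so that $|U(\PP_i^{w-1})\setminus T|\leq w-2$, obtain a $w$-block $B$ disjoint from $T$ contributing $\binom{w}{2}$ covered pairs, and then observe that each of the at least $t+1$ elements of $Y=U(\PP_i)\setminus B$ lies in some block of size at least $w-1$ and hence contributes, greedily, at least $(w-2),(w-3),\dots,(w-t-2)$ further pairs; the hypothesis $w\geq t^2+2t+2$ is then exactly what makes
\[
\binom{w}{2}+\sum_{j=1}^{t+1}\bigl((w-1)-j\bigr)\;\geq\;\binom{w+t}{2}.
\]
You would need to replace Step 3 by an argument of this counting type; as written, the main case of the claim is unproven.
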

\begin{proof}
	 We divide the proof into three cases.
	\begin{enumerate}
		\item [(a)] If $|U(\PP_i)|=w+t-1$,  Property $\bm{A}(t)$ implies that for each $t$-subset $T\subset U(\PP_i)$ we always have $U(\PP_i)\setminus T \in \PP_i$, since each block in $\PP_i$ has size at least $w-1$. This shows that $\binom{U(\PP_i)}{w-1} \subset \PP_i$ and hence $|U(\PP_i^{w-1})|=|U(\PP_i)|=w+t-1$, contradicting to the assumption that $|U(\PP_i^{w-1})|\leq w+t-2$; so this case does not happen.
		\item [(b)] If $|U(\PP_i)|= w+t$, we claim that $\tau(\PP_i)= \binom{|U(\PP_i)|}{2}= \binom{w+t}{2}$. Indeed, for any pair $\{x,y\}\subset U(\PP_i)$, 
since $|U(\PP_i^{w-1})|\leq w+t-2$, we can choose a $t$-subset $T\subset U(\PP_i)$ such that $\left|U(\PP_i^{w-1})\setminus T\right|\leq w-2$ and $\{x,y\}\not\subset T$.  By the definition of $U(\PP_i^{w-1})$, the inequality $|U(\PP_i^{w-1})\setminus T|\leq w-2$ shows that $\PP_i$ has no $(w-1)$-sets disjoint from $T$. Then it follows from Property $\bm{A}(t)$ that the $w$-set $U(\PP_i)\setminus T$ must be a block in $\PP_i$. Thus $\{x,y\}$ is covered by the block $U(\PP_i)\setminus T$ in $\PP_i$, and hence $\tau(\PP_i)= \binom{|U(\PP_i)|}{2}= \binom{w+t}{2}$.
		\item [(c)] If $|U(\PP_i)|\geq w+t+1$, as in case (b) we can choose a $t$-subset $T\subset U(\PP_i)$ such that $\left|U(\PP_i^{w-1})\setminus T\right|\leq w-2$, and hence there exists some block $B$ of size $w$ in $\PP_i$ with $B\cap T=\emptyset$. Let $Y=U(\PP_i)\setminus B$. Then $|Y|=|U(\PP_i)|-w \geq t+1$ and we have
			\begin{align*}
				\tau(\PP_i) & \geq \binom{|B|}{2} 
+ \left|\left\lbrace  \{y,x\}\subset X: y\in Y \text{ and } \{y,x\} \text{ is contained in some block in } \PP_i \right\rbrace \right| \\		
				& \geq \binom{w}{2}   + ((w-1)-1) + ((w-1)-2) + \dots +((w-1)-(t+1))  \\
				  & = \frac{1}{2} \left( w^2+(2t+1)w-(t+1)(t+4) \right)  \geq \frac{1}{2} \left( w^2+(2t-1)w+t(t-1) \right) = \binom{w+t}{2},
				\end{align*}
				where we used $w\geq t^2+2t+2$ in the last inequality. Here, the second inequality is obtained greedily by considering each $y\in Y$ in order, and choosing $x$ from a block removing previous $y$'s.
	\end{enumerate}
Therefore, $\tau(\PP_i)\geq \binom{w+t}{2}$ holds provided $|U(\PP_i^{w-1})|\leq w+t-2$.
\end{proof}
Moreover, we can show that $\tau(\PP_i)\geq \binom{w+t-1}{2}$ holds when $U(\PP_i^{w-1})$ has size at least $w+t-1$.
  \begin{claim}\label{claim-tau-lowerbound2}
  	If $|U(\PP_i^{w-1})|\geq w+t-1$, then $\tau(\PP_i)\geq \binom{w+t-1}{2}$.
  \end{claim}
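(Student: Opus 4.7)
The plan is to show $\tau(\PP_i) \geq \binom{w+t-1}{2}$ by case analysis on $|U(\PP_i)|$, mirroring the structure of the proof of Claim~\ref{claim-tau-lowerbound1}. In each case, I aim to exhibit enough pairs covered by blocks of $\PP_i$ to meet the bound.

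When $|U(\PP_i)| = w+t-1$, the hypothesis $|U(\PP_i^{w-1})| \geq w+t-1$ forces $U(\PP_i^{w-1}) = U(\PP_i)$, and an argument identical to case (a) of Claim~\ref{claim-tau-lowerbound1} shows that $\binom{U(\PP_i)}{w-1} \subset \PP_i$, so every pair of $U(\PP_i)$ is covered. When $|U(\PP_i)| = w+t$ with $|U(\PP_i^{w-1})| = w+t-1$, writing $U(\PP_i) = U(\PP_i^{w-1}) \cup \{z\}$ with $z$ in no $(w-1)$-block, Property~$\bm{A}(t)$ applied to any $t$-subset $T \subset U(\PP_i^{w-1})$ forces the guaranteed block to equal either the $(w-1)$-set $U(\PP_i^{w-1}) \setminus T$ or the $w$-set $U(\PP_i) \setminus T$; both contain $U(\PP_i^{w-1}) \setminus T$, and varying $T$ covers every pair of $U(\PP_i^{w-1})$.

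The remaining cases, most notably $|U(\PP_i)| = w+t$ with $U(\PP_i^{w-1}) = U(\PP_i)$, I would handle by a contrapositive argument. Suppose some pair $\{x,y\} \subset U(\PP_i^{w-1})$ is not covered. Then for each $t$-subset $T \subset U(\PP_i^{w-1}) \setminus \{x,y\}$, the block $B_T$ supplied by Property~$\bm{A}(t)$ satisfies $|B_T| \geq w-1$, $B_T \cap T = \emptyset$, and $\{x,y\} \not\subset B_T$; size-counting inside the $w$-set $U(\PP_i^{w-1}) \setminus T$ then forces $B_T = (U(\PP_i^{w-1}) \setminus T) \setminus \{u\}$ for some $u \in \{x,y\}$. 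Letting $T$ range, these blocks cover all of $\binom{U(\PP_i^{w-1}) \setminus \{x,y\}}{2} = \binom{w+t-2}{2}$ pairs. Additionally, $x$ and $y$ each lie in a $(w-1)$-block that necessarily avoids the other (otherwise $\{x,y\}$ would be covered), contributing at least $2(w-2)$ further covered pairs through $x$ and through $y$, yielding
\[
  \tau(\PP_i) \geq \binom{w+t-2}{2} + 2(w-2) \geq \binom{w+t-1}{2},
\]
where the final inequality reduces to $w \geq t+2$ and is amply implied by $w \geq t^2+2t+2$. The main obstacle will be extending this contrapositive argument to $|U(\PP_i)| \geq w+t+1$: there $B_T$ can sit inside a larger ambient set $U(\PP_i)\setminus T$ with more latitude to evade $\{x,y\}$, so one must either choose $T$ to also absorb elements of $U(\PP_i) \setminus U(\PP_i^{w-1})$ so as to shrink the effective ambient set back to size $w$ and reduce to the previous sub-case, or invoke a separate greedy argument in the spirit of case (c) of Claim~\ref{claim-tau-lowerbound1}, building up covered pairs iteratively from a base $(w-1)$-block in $\PP_i^{w-1}$.
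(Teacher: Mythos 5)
Your handling of the cases $|U(\PP_i)|=w+t-1$ and $|U(\PP_i)|=w+t$ is correct (and in the latter case more refined than necessary), but the case $|U(\PP_i)|\geq w+t+1$ is a genuine gap: you only name two candidate strategies without executing either, and the first of them --- absorbing $U(\PP_i)\setminus U(\PP_i^{w-1})$ into $T$ to shrink the ambient set to size $w$ --- cannot work in general, since nothing bounds $|U(\PP_i)|-|U(\PP_i^{w-1})|$ by $t$ (the union of the $w$-blocks of $\PP_i$ may be arbitrarily large). Your second suggestion is the one that works, and it is exactly what the paper does, uniformly for all $|U(\PP_i)|\geq w+t$: fix a block $B\in\PP_i^{w-1}$ (which exists since $|U(\PP_i^{w-1})|>0$), set $Y=U(\PP_i)\setminus B$, so $|Y|\geq |U(\PP_i)|-(w-1)\geq t+1$, and greedily count, for the $j$-th element $y\in Y$, at least $(w-1)-j$ new covered pairs $\{y,x\}$ with $x$ taken from a block of $\PP_i$ containing $y$ (every block has size at least $w-1$) after discarding the previously processed elements of $Y$. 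This gives
\[
\tau(\PP_i)\;\geq\;\binom{w-1}{2}+\sum_{j=1}^{t+1}\bigl((w-1)-j\bigr)
=\tfrac{1}{2}\bigl(w^2+(2t-1)w-t^2-5t-2\bigr)>\binom{w+t-1}{2},
\]
where the last inequality is equivalent to $w>t^2+t+2$ and so follows from $w\geq t^2+2t+2$. Note that this single count already covers $|U(\PP_i)|=w+t$ as well, so your two-subcase analysis there, while correct, is not needed. Be aware that the missing case is not a technicality you can wave at: it is where the hypothesis $w\geq t^2+2t+2$ actually gets used in this claim (your completed cases only needed $w\geq t+2$), so any completion must engage with that inequality.
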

\begin{proof}
	   Since $|U(\PP_i^{w-1})|\geq w+t-1>0$, there exists some block $B$ of size $w-1$ in $\PP_i$. The proof can be divided into two cases.
  \begin{enumerate}[(a)]
		\item  If $|U(\PP_i)|=w+t-1$,  Property $\bm{A}(t)$ implies that for each $t$-subset $T\subset U(\PP_i)$ we have $U(\PP_i)\setminus T \in \PP_i$, since each block in $\PP_i$ has size at least $w-1$. Hence $\binom{U(\PP_i)}{w-1} \subset \PP_i$ and each pair $\{x,y\}$ in $U(\PP_i)$ can be covered by some block in $\PP_i$. Thus $\tau(\PP_i)= \binom{|U(\PP_i)|}{2}= \binom{w+t-1}{2}$.
		\item  If $|U(\PP_i)|\geq w+t$, then $Y=U(\PP_i)\setminus B$ has size at least $|U(\PP_i)|-(w-1) \geq t+1$.  Then
		\begin{align*}
		\tau(\PP_i) & \geq \binom{|B|}{2} + \left|\left\lbrace  \{y,x\}\subset X: y\in Y \text{ and } \{y,x\} \text{ is contained in some block in } \PP_i\right\rbrace  \right| \\				
		& \geq \binom{w-1}{2}   + \left( (w-1)-1\right)  + \left( (w-1)-2\right)  + \dots +\left( (w-1)-(t+1)\right)   \\
		& = \frac{1}{2} \left( w^2+(2t-1)w+2-(t+1)(t+4) \right)  > \frac{1}{2} \left( w^2+(2t-3)w+(t-1)(t-2) \right) = \binom{w+t-1}{2}.
		\end{align*}
 \end{enumerate}
Therefore, it holds that $\tau(\PP_i)> \binom{w+t-1}{2}$ when $|U(\PP_i^{w-1})|\geq w+t-1$.
\end{proof}

Let $S_1=\{i\in [b] : |U(\PP_i^{w-1})|\leq w+t-2\}$, $S_2=\{i\in [b]: |U(\PP_i^{w-1})|\geq w+t-1 \}$, and let $s_1=|S_1|,s_2=|S_2|$.
Combining Claim~\ref{claim-tau-lowerbound1} and Claim~\ref{claim-tau-lowerbound2}, we have shown that, for each $i\in [b]$,
\begin{itemize}
	\item if $i\in S_1$, then $\tau(\PP_i)\geq \binom{w+t}{2}$;
	\item if $i\in S_2$, then $\tau(\PP_i)\geq \binom{w+t-1}{2}$.
\end{itemize}
Then, by Eq. \eqref{eq-double-count-pairs}, we see that
\begin{equation}\label{equa-count-pair1}
s_1\binom{w+t}{2} + s_2\binom{w+t-1}{2} \leq \sum_{i\in S_1} \tau(\PP_i)+\sum_{i\in S_2} \tau(\PP_i)= \sum_{i\in [b]} \tau(\PP_i) \leq \binom{n}{2}.
\end{equation}
It follows that
\begin{equation}\label{equa-count-pair-equiv}
 b\cdot \binom{w+t}{2} =(s_1+s_2)\binom{w+t}{2} \leq \binom{n}{2}+s_2 (w+t-1).
\end{equation}
 Thus it suffices to show that $s_2(w+t-1)\leq n$. By (3'')-(iii), all sets in $\{U(\PP_i^{w-1}) : i\in S_2\}$ are pairwise disjoint. Since $U(\PP_i^{w-1})$ has size at least $w+t-1$ for every $i\in S_2$, we have
 \begin{equation}\label{equa-n-union}
 n\geq \left| \bigcup_{i\in S_2} U(\PP_i^{w-1}) \right| = \sum_{i\in S_2} \left| U(\PP_i^{w-1})\right|  \geq s_2 (w+t-1)
\end{equation}
 as desired. This completes the proof.
\end{proof}

Let $w,t\geq 1$ be fixed with $w\geq t^2+2t+2$. Combining Theorem~\ref{thm-(n,t,w,w-2)-upper}, Theorem~\ref{thm-packing} and \eqref{eq-near-optimal-packing} yields
\[
\lim_{n\rightarrow \infty} \frac{C(n,t,w,w-2)}{\binom{n+1}{2}/\binom{w+t}{2}} = 1.
\]
This shows that the upper bound in Theorem~\ref{thm-(n,t,w,w-2)-upper} is asymptotically sharp. Moreover, combining Theorem~\ref{thm-(n,t,w,w-2)-upper}, Theorem~\ref{thm-packing} and \eqref{eq-optimal-packing}, we see that the upper bound in Theorem~\ref{thm-(n,t,w,w-2)-upper} is indeed optimal for many cases.
\begin{corollary}\label{cor-value-(n,t,w,w-2)}
	Let $w\geq t^2+2t+2\geq 5$.
	Then it holds that
	\[
	C(n,t,w,w-2)=\frac{\binom{n+1}{2}}{\binom{w+t}{2}},
	\]
	if $n$ is sufficiently large and satisfies $\binom{w+t}{2} \mid \binom{n+1}{2}$ and $w+t-1\mid n$.
\end{corollary}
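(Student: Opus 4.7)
The plan is to sandwich $C(n,t,w,w-2)$ between matching upper and lower bounds, both of which are already essentially in hand from earlier results in the paper.

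For the upper bound I would simply invoke Theorem~\ref{thm-(n,t,w,w-2)-upper}, which under the hypothesis $w \geq t^2+2t+2 \geq 5$ and $n \geq w+t-1$ (automatic once $n$ is large) yields
\[
C(n,t,w,w-2) \leq \left\lfloor \binom{n+1}{2}\big/\binom{w+t}{2} \right\rfloor.
\]
Because we are assuming $\binom{w+t}{2} \mid \binom{n+1}{2}$, the floor is an equality, so the upper bound is exactly $\binom{n+1}{2}/\binom{w+t}{2}$.

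For the matching lower bound I would use the packing construction of Theorem~\ref{thm-packing} with $e=w-2$, giving
\[
C(n,t,w,w-2) \geq D(n+1, w+t, 2).
\]
Now I appeal to Wilson's theorem, as quoted in \eqref{eq-optimal-packing}: an $(m,k,1)$-BIBD (i.e.\ a Steiner system $S(2,k,m)$) exists for all sufficiently large $m$ satisfying $\binom{k}{2}\mid\binom{m}{2}$ and $k-1\mid m-1$. Setting $m=n+1$ and $k=w+t$, the two divisibility conditions become precisely $\binom{w+t}{2}\mid\binom{n+1}{2}$ and $w+t-1\mid n$, which are exactly the hypotheses of the corollary. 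Hence for $n$ sufficiently large there exists an $(n+1,w+t,1)$-BIBD, forcing $D(n+1,w+t,2) = \binom{n+1}{2}/\binom{w+t}{2}$.

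Combining the two bounds, $C(n,t,w,w-2)$ is pinned to the exact value $\binom{n+1}{2}/\binom{w+t}{2}$, as claimed. There is no real obstacle here: the corollary is a direct book-keeping consequence of Theorem~\ref{thm-(n,t,w,w-2)-upper} together with the BIBD-based lower bound, and the divisibility conditions in the hypothesis are stated in exactly the form needed to invoke Wilson's theorem. The only thing to check is that the "sufficiently large $n$" threshold is controlled by the maximum of the threshold in Theorem~\ref{thm-(n,t,w,w-2)-upper} (i.e.\ $n \geq w+t-1$) and the threshold from Wilson's theorem for $m=n+1$, both of which depend only on $w$ and $t$.
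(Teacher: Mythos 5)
Your proposal is correct and follows exactly the paper's own route: the upper bound is Theorem~\ref{thm-(n,t,w,w-2)-upper} (with the floor removed by the divisibility hypothesis), and the lower bound is Theorem~\ref{thm-packing} with $w-e=2$ combined with Wilson's theorem \eqref{eq-optimal-packing} applied to $m=n+1$, $k=w+t$, whose divisibility conditions are precisely those in the corollary. Nothing is missing.
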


 Next, we show the extremal case: any $(n,t,w,w-2)$-LPECC code of size $\binom{n+1}{2}/\binom{w+t}{2}>1$ must come from an $(n+1,w+t,1)$-BIBD, provided that $w> t^2+2t+2$ and $\binom{w+t}{2} \mid \binom{n+1}{2}$.

\begin{theorem}\label{thm-BIBD}
	Let $w> t^2+2t+2$, and suppose that $n$ satisfies $\binom{w+t}{2} \mid \binom{n+1}{2}$. If there exists an $(n,t,w,w-2)$-LPECC code $\mathcal{C}$ with size $b=\binom{n+1}{2}/\binom{w+t}{2}>1$, and  $(X,\B)$ is the set-theoretic representation of $\C$ with a partition $\{\PP_i: i\in [b]\}$, then
	\begin{itemize}
		\item[\romannumeral1)] each block in $\B$ has size $w-1$ or $w$;
		\item[\romannumeral2)] $w+t-1\mid n$;
		\item[\romannumeral3)] for each $i\in [b]$, either $|U(\PP_i)|=w+t-1$ and $\binom{U(\PP_i)}{w-1} \subset\PP_i$, or $|U(\PP_i)|=w+t$ and $\binom{U(\PP_i)}{w}=\PP_i$;
		\item[\romannumeral4)] $(X\cup\{\infty\},\mathcal{D})$ forms an $(n+1,w+t,1)$-BIBD, where
		\[
		\mathcal{D} = \{U(\PP_i)\cup\{\infty\} : |U(\PP_i)|=w+t-1\} \cup \{U(\PP_i) : |U(\PP_i)|=w+t\}.
		\]
	\end{itemize}
\end{theorem}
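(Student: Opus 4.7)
The plan is to show that the equality $b = \binom{n+1}{2}/\binom{w+t}{2}$ forces every inequality appearing in the proofs of Theorem~\ref{thm-(n,t,w,w-2)-upper} and Lemma~\ref{lemma-(n,t,w,w-2)-upper} to be tight, and to translate those tight conditions into (i)--(iv). To rule out blocks of size at most $w-2$, I argue by contradiction: if some $B\in\PP_j$ had $|B|\leq w-2$, then by (3'')(ii) the system $(X\setminus B,\B\setminus\PP_j)$ would be an $(n-|B|,t,w,w-2)$-LPECC code of size $b-1$ whose blocks all have size $w-1$ or $w$, and Lemma~\ref{lemma-(n,t,w,w-2)-upper} would yield $\binom{n+1}{2}-\binom{n-|B|+1}{2}\leq\binom{w+t}{2}$. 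Using $|B|\geq w-3$, this gives $n\leq w+t-1+\tfrac{(t+2)(t+3)}{2(w-3)}$, which a short estimate shows is strictly smaller than the lower bound on $n$ implied by $n(n+1)\geq 2(w+t)(w+t-1)$ (equivalent to $b\geq 2$) whenever $w>t^2+2t+2$. Hence every block has size $w-1$ or $w$, proving (i).

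With all block sizes in $\{w-1,w\}$, Lemma~\ref{lemma-(n,t,w,w-2)-upper} applies directly and its chain of inequalities collapses to equalities: $\sum_i\tau(\PP_i)=\binom{n}{2}$ (every pair of $X$ is covered by a unique $\PP_i$), $\tau(\PP_i)=\binom{w+t}{2}$ for each $i\in S_1$, $\tau(\PP_i)=\binom{w+t-1}{2}$ for each $i\in S_2$, and $n=s_2(w+t-1)$. Since $b\geq 2$ forces $s_2\geq 1$, the last equality yields (ii). The strict hypothesis $w>t^2+2t+2$ makes case (c) of Claim~\ref{claim-tau-lowerbound1} strict, so equality in $\tau(\PP_i)=\binom{w+t}{2}$ forces case (b), giving $|U(\PP_i)|=w+t$ for $i\in S_1$; likewise case (b) of Claim~\ref{claim-tau-lowerbound2} is already strict under $w\geq t^2+2t+2$, so equality for $i\in S_2$ forces case (a), giving $|U(\PP_i)|=w+t-1$ together with $\binom{U(\PP_i)}{w-1}\subset\PP_i$.

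Next I complete (iii) by showing $\PP_i=\binom{U(\PP_i)}{w}$ for $i\in S_1$. If $B'\in\PP_i$ had $|B'|=w-1$, then for every $k\in S_2$ and every $x\in B'\cap U(\PP_k)$, the containment $\binom{U(\PP_k)}{w-1}\subset\PP_k$ supplies a $(w-1)$-subset of $U(\PP_k)$ containing $x$, and by (3'')(iii) this block must be disjoint from $B'$, a contradiction; so $B'\cap U(\PP_k)=\emptyset$ for all $k\in S_2$. Because $\bigsqcup_{k\in S_2}U(\PP_k)=X$, this forces $B'=\emptyset$, which is absurd. Hence $\PP_i\subset\binom{U(\PP_i)}{w}$, and Property $\bm A(t)$ applied to each $t$-subset $T\subset U(\PP_i)$ produces a block $B\in\PP_i$ disjoint from $T$ which, having size $w$ and lying in $U(\PP_i)$, must equal $U(\PP_i)\setminus T$; letting $T$ range over $\binom{U(\PP_i)}{t}$ gives $\binom{U(\PP_i)}{w}\subset\PP_i$. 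For (iv), every block of $\mathcal{D}$ has size $w+t$; each pair $\{x,y\}\subset X$ is covered by a unique $\PP_i$ and, by (iii), this is equivalent to $\{x,y\}\subset U(\PP_i)$, so it lies in a unique block of $\mathcal{D}$; each pair $\{x,\infty\}$ with $x\in X$ lies in $U(\PP_k)\cup\{\infty\}$ for the unique $k\in S_2$ with $x\in U(\PP_k)$, by the partition from (ii). The main obstacle is the calculation hidden in the first paragraph: one must verify that, precisely when $w>t^2+2t+2$, the upper bound $n\leq w+t-1+\tfrac{(t+2)(t+3)}{2(w-3)}$ is strictly less than any $n$ satisfying $n(n+1)\geq 2(w+t)(w+t-1)$, so that the two estimates cannot simultaneously hold.
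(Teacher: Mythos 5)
Your proposal is correct and follows essentially the same route as the paper's proof: force every inequality in the chain from Lemma~\ref{lemma-(n,t,w,w-2)-upper} (and in Claims~\ref{claim-tau-lowerbound1} and~\ref{claim-tau-lowerbound2}) to be tight, use the strict hypothesis $w>t^2+2t+2$ to exclude case (c) of Claim~\ref{claim-tau-lowerbound1} and case (b) of Claim~\ref{claim-tau-lowerbound2}, and read off i)--iv). The only divergence is how you rule out blocks of size at most $w-2$: the paper pins $n$ down to $\{w+t,w+t+1\}$ and contradicts the divisibility $\binom{w+t}{2}\mid\binom{n+1}{2}$, whereas you contradict $n(n+1)\geq 2(w+t)(w+t-1)$ coming from $b\geq 2$; both computations check out.
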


\begin{remark}
	Theorem~\ref{thm-BIBD} indicates that for $w> t^2+2t+2$, if $\binom{w+t}{2} \mid \binom{n+1}{2}$ but $w+t-1\nmid n$, then the size $b=\binom{n+1}{2}/\binom{w+t}{2}>1$ cannot be attained by any $(n,t,w,w-2)$-LPECC code, so the upper bound in Theorem~\ref{thm-(n,t,w,w-2)-upper} can be improved a little bit for this case.
\end{remark}

 \begin{proof}
 	 As in the proof of Theorem~\ref{thm-(n,t,w,w-2)-upper}, we first show that every block $B\in\B$ must be of size either $w$ or $w-1$. Suppose on the contrary that there exists a block $B\in \PP_j$ with $w-3 \leq |B| \leq w-2$ for some $j\in [b]$ by (2''). As before, deleting $\PP_j$ and all elements in $B$ gives an $(n-|B|,t,w,w-2)$-LPECC code of size $b-1$. Then all the equalities in \eqref{equa:w-3-1} and \eqref{equa:w-3-2} must hold since $\C$ has size $b=\binom{n+1}{2}/\binom{w+t}{2}>1$. Note that the equality in \eqref{equa:w-3-2} is equivalent to $n= w+t-1+\frac{(t+3)(t+2)}{2(w-3)}$. However, it follows from $w> t^2+2t+2$ that
 	 $$
 	 w+t-1 <n= w+t-1+\frac{(t+3)(t+2)}{2(w-3)}
 	  \leq w+t-1+\frac{(t+3)(t+2)}{2(t^2+2t)}\leq w+t+1,
 	 $$
 	 so $n\in\{w+t,w+t+1\}$. Then $\binom{w+t}{2} \mid \binom{n+1}{2}$ implies that $\binom{n+1}{2}$ is divisible by one of $\binom{n}{2}$ and $\binom{n-1}{2}$, which is impossible since $n\geq 2w-3\geq 2t^2+4t+1> 5$ by Property $\bm C(e)$. This shows that each block in $\B$ must be of size either $w$ or $w-1$.
 	
 	  It remains to show that the equality in Lemma~\ref{lemma-(n,t,w,w-2)-upper} implies properties \romannumeral2)--\romannumeral4) in Theorem~\ref{thm-BIBD}. Note that all equalities in \eqref{equa-count-pair1},\eqref{equa-count-pair-equiv} and \eqref{equa-n-union} should hold. The equality in \eqref{equa-n-union} implies $n= \left| \cup_{i\in S_2} U(\PP_i^{w-1}) \right| = s_2 (w+t-1)$, so $s_2=n/(w+t-1)$ and \romannumeral2) is verified. This also shows that $\cup_{i\in S_2} U(\PP_i^{w-1})=X$ and hence $\PP_i$ contains only blocks of size $w$ for each $i\in S_1$ by (3'')-(iii). Moreover, the equality in \eqref{equa-count-pair1} requires $\tau(\PP_i)= \binom{w+t}{2}$ for $i\in S_1$, and $\tau(\PP_i)= \binom{w+t-1}{2}$ for $i\in S_2$.
 	  When the equality in Claim~\ref{claim-tau-lowerbound1} holds, one can observe that  case (c) in the proof of Claim~\ref{claim-tau-lowerbound1} cannot happen since $w> t^2+2t+2$. Hence the proof of Claims~\ref{claim-tau-lowerbound1} implies $|U(\PP_i)|= w+t$ and $\binom{U(\PP_i)}{w} = \PP_i$ for each $i\in S_1$. Similarly, the proof of Claim~\ref{claim-tau-lowerbound2} implies $|U(\PP_i)|= w+t-1$ and $\binom{U(\PP_i)}{w-1} \subset \PP_i$ for each $i\in S_2$.  Then properties \romannumeral3) and  \romannumeral4) follow.
 \end{proof}

\section{Bounds for $(n,t,w,w-2)$-LPECC codes with small $w$}\label{sec-(n,t,w,w-2)-small-w}
In the last section, we have given a tight upper bound for $(n,t,w,w-2)$-LPECC codes for sufficiently large $w$ with respect to $t$. However, that bound does not apply to small weight $w$ with respect to $t$. For example, in \cite{Liu-Ji}, it was shown that $C(n,2,3,1)=\lfloor \frac{n(n+1)}{18} \rfloor$ for any positive even integer $n$ with $n\neq 6$. In this section, we consider bounds for $(n,t,w,w-2)$-LPECC codes with small $w\in\{3,4\}$ and show that $C(n,t,3,1) =\lfloor \frac{n(n+1)}{6(t+1)}\rfloor$ for any positive integer $t$ and any even $n\neq 2$ with $n\equiv 2 \pmod{3(t+1)}$. Thus $\lim_{n\to\infty}  C(n,t,3,1)/n^2 =\frac{1}{6(t+1)}$. Moreover, we have $\liminf_{n\to\infty} C(n,t,4,2)/n^2 \geq \frac{1}{12(t+1)}$, which is much larger than the upper bound in Theorem~\ref{thm-(n,t,w,w-2)-upper} when $t>5$. This lower bound might be best possible for $t>5$.  Unfortunately, our proof method for the upper bound of $C(n,t,3,1)$ does not work for that of $C(n,t,4,2)$, so new ideas are needed to determine the asymptotic or exact value of $C(n,t,4,2)$.

\subsection{An upper bound for $(n,t,3,1)$-LPECC codes}
In this subsection, we give the following sharp upper bound for $(n,t,3,1)$-LPECC codes.
\begin{theorem}\label{thm-(n,t,3,1)-upper}
	If $\mathcal{C}$ is an $(n,t,3,1)$-LPECC code of size $b$ with $n\geq 3(t+1)$, then $b \leq  \lfloor\frac{n(n+1)}{6(t+1)}\rfloor$.
\end{theorem}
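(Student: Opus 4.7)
The plan is to double-count pairs after extending the ground set $X$ by a fresh symbol $\infty\notin X$, and to extract a per-class lower bound from Property $\bm A(t)$ that matches the target denominator $6(t+1)=2\cdot 3(t+1)$. Heuristically each class $\PP_i$ should consume at least $3(t+1)$ of the $\binom{n+1}{2}$ available pairs in $X\cup\{\infty\}$. I would first reduce to the case where every block in the set-theoretic representation $(X,\B)$ has size $2$ or $3$, along the lines of the reduction preceding Lemma~\ref{lemma-(n,t,w,w-2)-upper}: by (3'')-(ii), any class $\PP_{i_0}$ containing a block $B$ with $|B|\le 1$ has $B$ disjoint from all of $\B\setminus\PP_{i_0}$, so removing $\PP_{i_0}$ (and the unique element of $B$ when $|B|=1$) leaves a valid LPECC code of size $b-1$; the hypothesis $n\ge 3(t+1)$ closes the induction arithmetic both when $|B|=0$ (in which case Property $\bm C(e)$ forces the surviving blocks to have size $3$, and the tighter $\binom{n}{2}$ side of the double count below suffices) and when $|B|=1$ (in which case $n$ drops by one).

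For each remaining class $\PP_i$, let $\PP_i^k$ denote its $k$-blocks and define the $3$-uniform hypergraph
\[
\PP_i':=\PP_i^3\cup\{B\cup\{\infty\}:B\in\PP_i^2\}
\]
on $V:=X\cup\{\infty\}$. A block $B\in\PP_i$ witnessing Property $\bm A(t)$ for some $t$-set $T\subseteq X$ yields an edge of $\PP_i'$ disjoint from $T$ (since $\infty\notin T$), so the $X$-restricted transversal $\tau_X(\PP_i'):=\min\{|S|:S\subseteq X,\ S\cap e\neq\emptyset\ \forall e\in\PP_i'\}$ is at least $t+1$. The crux of the proof is then the shadow lemma: \emph{$|\partial H|\ge 3k$ whenever $H$ is a $3$-uniform hypergraph on $V=X\cup\{\infty\}$ with $\tau_X(H)\ge k$}, where $\partial H:=\{P\in\binom{V}{2}:P\subseteq e\text{ for some }e\in H\}$. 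I expect this lemma to be the main technical obstacle, and plan to prove it by strong induction on $|E(H)|$. Take a minimum $X$-transversal $C$ of $H$ and pick $c\in C$; since $c$ is essential, $\tau_X(H-c)=k-1$ and by induction $|\partial(H-c)|\ge 3(k-1)$. If $c$ has at least three distinct neighbors in $H$, the new pairs $\{c,\cdot\}$ add at least three edges to $\partial H\setminus\partial(H-c)$, completing the step. Otherwise $c$ lies in a unique edge $e=\{c,a,b\}$; if the pair $\{a,b\}$ is covered only by $e$, then all three pairs of $e$ are new and we are done. In the remaining case a second edge $e'$ covers $\{a,b\}$, and I would swap $c$ for the element $x\in\{a,b\}\cap X$ (such an $x$ exists because $c\in X$, and lies outside $C$, else $C\setminus\{c\}$ would already cover $H$) to obtain another minimum $X$-transversal; then $x$ has at least three distinct neighbors ($c$, the other element of $\{a,b\}$, and the third vertex of $e'$), and applying the first case to $x$ finishes the induction.

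With the shadow lemma in hand, the proof concludes by assembling the double count. By (3'')-(i), each pair of $X$ is covered by blocks of at most one class, giving $\sum_i\tau(\PP_i)\le\binom{n}{2}$; by (3'')-(iii), each vertex of $X$ lies in $U(\PP_i^2)$ for at most one $i$, giving $\sum_i|U(\PP_i^2)|\le n$. The pairs of $\partial\PP_i'$ split into $X$-pairs (counted by $\tau(\PP_i)$) and pairs $\{x,\infty\}$ (counted by $|U(\PP_i^2)|$), so $|\partial\PP_i'|=\tau(\PP_i)+|U(\PP_i^2)|$. Applying the shadow lemma to each $\PP_i'$ and summing,
\[
3(t+1)\,b\le\sum_{i}|\partial\PP_i'|=\sum_i\tau(\PP_i)+\sum_i|U(\PP_i^2)|\le\binom{n}{2}+n=\binom{n+1}{2},
\]
and the integrality of $b$ forces $b\le\lfloor n(n+1)/(6(t+1))\rfloor$, as desired.
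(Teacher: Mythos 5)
Your proof is correct, and although it follows the paper's overall skeleton --- reduce to blocks of size $2$ or $3$ via (3'')-(ii) (with the same $n\ge 3(t+1)$ arithmetic closing that step), prove a per-class bound $\tau(\PP_i)+|U(\PP_i^2)|\ge 3(t+1)$, and double-count against $\binom{n}{2}+n$ using (3'')-(i) and (3'')-(iii) --- the key per-class bound is obtained by a genuinely different argument. The paper treats the $3$-blocks and the $2$-blocks of each class by two separate devices: a graph lemma (Lemma~\ref{lem:transversal-edge-lower}, proved by induction on $t$) applied to the subfamily $\PP_i^{3,1}$ with an auxiliary parameter $t_i$, and then Tur\'an's theorem combined with a three-case convexity estimate (Lemma~\ref{lem-estimate}) to handle the graph of $2$-blocks. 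Your lifting of each $2$-block to a $3$-edge through $\infty$ unifies both into a single shadow-versus-restricted-transversal lemma for $3$-uniform hypergraphs, and your edge-deletion induction for it is complete: the three cases (a transversal vertex with three neighbours; a unique edge whose opposite pair is privately covered; the swap to a vertex of $\{a,b\}\cap X$, which is nonempty since an edge contains at most one vertex outside $X$) are exhaustive, and in the last case one only needs $\tau_X(H-x)\ge \tau_X(H)-1$, which holds for every $x\in X$, so the minimality of the swapped transversal is not even essential. What your route buys is a shorter, self-contained proof that dispenses with Tur\'an's theorem and the Jensen computation and makes the target quantity $\binom{n+1}{2}=\binom{n}{2}+n$ appear for structural rather than arithmetic reasons; the paper's route, in exchange, exposes the $2$-block graph explicitly, which is the kind of information its method elsewhere exploits to pin down extremal configurations.
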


Let us first prepare some lemmas below. Here, we denote by $e(G)$ the number of edges in the graph $G$.
\begin{lemma}\label{lem:transversal-edge-lower}
	 Let $G$ be a graph on $n$ vertices. Suppose that, for any vertex set $W\in \binom{V(G)}{n-t}$, the induced subgraph $G[W]$ contains at least one triangle. Then  $e(G)\geq 3(t+1)$.
\end{lemma}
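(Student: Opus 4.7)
The plan is to prove the lemma by induction on $t$, with the trivial base case $t = 0$: taking $W = V(G)$ in the hypothesis forces $G$ itself to contain a triangle, so $e(G) \geq 3 = 3(0+1)$. For the inductive step with $t \geq 1$, I would split on whether the maximum degree $\Delta(G)$ is at least $3$ or at most $2$; these two regimes are handled by quite different mechanisms, but together they cover everything.

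If some vertex $v \in V(G)$ has $\deg(v) \geq 3$, I pass to $G' = G - v$, which has $n-1$ vertices. For every $W' \subseteq V(G')$ with $|W'| = (n-1) - (t-1) = n - t$, the set $W'$ is still an $(n-t)$-subset of $V(G)$, so $G[W'] = G'[W']$ contains a triangle by the original hypothesis. Thus $G'$ satisfies the hypothesis for parameter $t - 1$, and the inductive hypothesis yields $e(G') \geq 3t$. Consequently $e(G) = e(G') + \deg(v) \geq 3t + 3 = 3(t+1)$.

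If instead $\Delta(G) \leq 2$, then every connected component of $G$ is a path or a cycle, so any triangle of $G$ is necessarily an entire $C_3$-component. In particular the triangles of $G$ are pairwise vertex-disjoint; denote their number by $k$. The hypothesis is equivalent to saying that no $t$-subset of $V(G)$ is a triangle transversal, and since a single vertex meets at most one of these $k$ disjoint triangles, this forces $k \geq t + 1$. Therefore $e(G) \geq 3k \geq 3(t+1)$, completing the induction.

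The step that will require the most care is the bookkeeping in Case~1: one must check that shrinking $V(G)$ by one vertex and decreasing $t$ by one keeps $n-t$ unchanged, so the inductive hypothesis fires cleanly on $G'$. The extremal configuration meeting the bound is a disjoint union of $t+1$ triangles together with any number of isolated vertices, which arises exactly in Case~2 and confirms that $3(t+1)$ is sharp.
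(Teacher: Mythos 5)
Your proof is correct and follows essentially the same route as the paper's: induction on $t$ with a case split on whether the maximum degree is at least $3$ (delete such a vertex and apply the inductive hypothesis with $t-1$) or at most $2$ (triangles are vertex-disjoint components, so there must be at least $t+1$ of them). No issues.
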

\begin{proof}
  We prove by induction on $t$. The base case $t=0$ is trivial. Suppose that Lemma~\ref{lem:transversal-edge-lower} holds for $t-1$ or less and consider the case $t$. Let $G$ be an $n$-vertex graph satisfying the above assumption for $t$.

   If every vertex in $G$ has degree at most $2$, then $G$ is a vertex-disjoint union of some isolated vertices, paths and cycles. In particular, all triangles in $G$ are pairwise vertex-disjoint. Since $G[W]$ contains at least one triangle for any $W\in \binom{V(G)}{n-t}$, $G$ must contain at least $t+1$ vertex-disjoint triangles.  Otherwise one can delete  $t$ vertices from $G$ (one vertex from each triangle) to get a $W$ such that $G[W]$ is triangle-free.
  Hence $e(G)\geq 3(t+1)$.

  Now assume that the maximum degree of $G$ is at least $3$.  Let us delete a vertex in $G$ of degree at least $3$. Note that the resulting $(n-1)$-vertex graph $G'$ satisfies that, for any $W\in \binom{V(G')}{(n-1)-(t-1)}$, the induced subgraph $G'[W]$ contains at least one triangle. By the induction hypothesis, we have $e(G')\geq 3t$. Therefore, $e(G)\geq e(G')+3\geq 3(t+1)$. This completes the proof.
\end{proof}

For our application, given a $3$-uniform family $\mathcal{F}\subset \binom{[n]}{3}$ satisfying Property $\bm A(t)$, we will simply consider the $n$-vertex graph that is the union of triangles corresponding to $3$-sets in $\mathcal{F}$. For such graphs, clearly the condition in Lemma~\ref{lem:transversal-edge-lower} holds.
We remark that graphs consisting of the union of some triangles have been studied by Erd\H{o}s, Gallai and Tuza~\cite{erdos1996covering} under the name of {\it ``triangular graphs''}.

 The next result that we need is the celebrated Tur\'an's theorem.
 \begin{lemma}[Tur\'an~\cite{turan1941external}]\label{lem-turan}
 	If an $n$-vertex graph $G$ is $K_{r+1}$-free, that is, $G$ contains no clique of size $r+1$, then
 	\[
 	e(G) \leq e(T(n,r)),
 	\]
 	where the Tur\'an graph $T(n,r)$ stands for the $n$-vertex complete balanced $r$-partite graph.
 	
 \end{lemma}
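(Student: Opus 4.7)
The plan is to prove Tur\'an's theorem by induction on $n$, by extracting a copy of $K_r$ from the $K_{r+1}$-free graph $G$ and applying the inductive hypothesis to what remains. The base case $n\leq r$ is immediate: the Tur\'an graph $T(n,r)$ is simply $K_n$, so $e(T(n,r))=\binom{n}{2}$, which trivially bounds $e(G)$. For the inductive step, fix $n\geq r+1$ and assume the bound holds for every graph on fewer vertices (and, if needed, for smaller values of the clique-freeness parameter).

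Given a $K_{r+1}$-free graph $G$ on $n$ vertices, I would first dispose of the easy sub-case where $G$ happens to be $K_r$-free as well. Then the inductive hypothesis applied with $r$ replaced by $r-1$ yields $e(G)\leq e(T(n,r-1))$, and the monotonicity $e(T(n,r-1))\leq e(T(n,r))$, proved by splitting the largest class of $T(n,r-1)$ into two nearly equal pieces (an operation that only adds edges), finishes this case. Hence I may assume from now on that $G$ contains a copy $H$ of $K_r$, with vertex set $V_H$.

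Next I would classify the edges of $G$ according to how they meet $V_H$. There are exactly $\binom{r}{2}$ edges inside $V_H$. Each vertex in $V\setminus V_H$ has at most $r-1$ neighbors in $V_H$, for otherwise together with $V_H$ it would span a $K_{r+1}$, contradicting the hypothesis; so there are at most $(r-1)(n-r)$ edges between $V_H$ and $V\setminus V_H$. Finally $G[V\setminus V_H]$ is a $K_{r+1}$-free graph on $n-r$ vertices, so by the inductive hypothesis it contains at most $e(T(n-r,r))$ edges. Summing the three contributions,
\[
e(G)\;\leq\;\binom{r}{2}+(r-1)(n-r)+e(T(n-r,r)).
\]

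To close the argument, I would verify the identity
\[
e(T(n,r))\;=\;\binom{r}{2}+(r-1)(n-r)+e(T(n-r,r)),
\]
obtained from the structural observation that deleting one vertex from each of the $r$ (nearly balanced) classes of $T(n,r)$ yields precisely $T(n-r,r)$: the $r$ deleted vertices form a $K_r$ contributing $\binom{r}{2}$ internal edges, and collectively they carry $(r-1)(n-r)$ crossing edges because each has exactly $r-1$ neighbors among the other classes. The only subtle point in the plan is the $K_r$-free sub-case, which requires applying induction with the smaller clique parameter $r-1$ together with the monotonicity of $e(T(n,r))$ in $r$; the rest is a clean three-part edge count that matches the identity above exactly, so no slack is lost in the inequality.
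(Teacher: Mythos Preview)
The paper does not prove this lemma: it is stated as Lemma~\ref{lem-turan} with a citation to Tur\'an's original paper and used as a black box, so there is no ``paper's approach'' to compare against. Your inductive argument---extract a $K_r$, count the $\binom{r}{2}$ internal edges, the at most $(r-1)(n-r)$ crossing edges, and apply induction to $G[V\setminus V_H]$, then match against the identity $e(T(n,r))=\binom{r}{2}+(r-1)(n-r)+e(T(n-r,r))$---is one of the standard proofs and is correct.

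Two minor remarks on the write-up. First, your justification of the monotonicity $e(T(n,r-1))\le e(T(n,r))$ is slightly incomplete: splitting the largest class of $T(n,r-1)$ into two nearly equal pieces yields \emph{some} complete $r$-partite graph $G'$ with $e(G')\ge e(T(n,r-1))$, but $G'$ need not be balanced (e.g.\ $n=10$, $r=3$ gives parts $5,3,2$ rather than $4,3,3$). You still need the easy fact that among complete $r$-partite graphs on $n$ vertices, the balanced one maximizes edges, to conclude $e(G')\le e(T(n,r))$. Second, you can avoid the $K_r$-free sub-case (and hence the auxiliary induction on $r$) altogether by assuming without loss of generality that $G$ is edge-maximal $K_{r+1}$-free: for $n\ge r+1$ such a $G$ is not complete, and adding any missing edge creates a $K_{r+1}$, which forces a $K_r$ already present in $G$. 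Either route is fine.
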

 We also need the following estimate.
 We will use it to give a  lower bound on the sum of the numbers of vertices and edges in the complement of a Tur\'an graph $T(v,v-m)$.
\begin{lemma}\label{lem-estimate}
	Let $v,m$ be two positive integers with $v>m$. Then we have
	\[
	v+ \lambda \binom{\lceil \frac{v}{v-m}\rceil}{2} + (v-m-\lambda) \binom{\lfloor \frac{v}{v-m}\rfloor}{2}
	\geq 3m,
	\]
	where $\lambda\in [0,v-m-1]$ is an integer satisfying $\lambda \equiv v \pmod{v-m}$.
\end{lemma}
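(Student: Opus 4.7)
The plan is to reduce the inequality to a single-variable algebraic identity via a clean substitution. First, set $d := v - m$ and $q := \lfloor v/d \rfloor$, so that $v = qd + \lambda$ with $\lambda \in [0, d-1]$, and consequently $m = (q-1)d + \lambda$. Since $m \ge 1$ forces $v > d$, one automatically has $q \ge 1$. Note that the value of $\lceil v/d \rceil$ only matters when $\lambda \ge 1$, in which case $\lceil v/d \rceil = q+1$; so the first binomial term contributes $\lambda\binom{q+1}{2}$ uniformly (the case $\lambda=0$ being vacuous).

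Next I would merge the two binomial terms using Pascal's identity $\binom{q+1}{2} = \binom{q}{2} + q$. The key observation is the algebraic identity $d(q-1) + 2\lambda = m + \lambda$, which lets one collapse
\[
\lambda\binom{q+1}{2} + (d - \lambda)\binom{q}{2} \;=\; d\binom{q}{2} + \lambda q \;=\; \frac{q(m+\lambda)}{2}.
\]
Thus the inequality to prove becomes simply $v + q(m+\lambda)/2 \ge 3m$.

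A short rearrangement (substituting $v = qd + \lambda$, $m = (q-1)d + \lambda$, and clearing denominators) then reduces this to
\[
(q - 2)\bigl[d(q-3) + 2\lambda\bigr] \;\ge\; 0.
\]
I would finish by a trichotomy on $q$: when $q \ge 3$, both factors are nonnegative; when $q = 2$, the first factor vanishes and the lemma holds with equality; when $q = 1$, the first factor is $-1$ while the bracket is $2\lambda - 2d \le -2 < 0$ (since $\lambda \le d - 1$), so the product is again nonnegative. Because $q \ge 1$ always, this covers every case.

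The whole proof is essentially bookkeeping; the only small insight required is spotting the identity $d(q-1) + 2\lambda = m + \lambda$ that makes the binomial sum collapse into a single fraction. I do not anticipate any serious obstacle beyond this algebraic step.
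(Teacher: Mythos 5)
Your proof is correct; I verified the collapse $\lambda\binom{q+1}{2}+(d-\lambda)\binom{q}{2}=d\binom{q}{2}+\lambda q=\tfrac{q(m+\lambda)}{2}$ (using $m=(q-1)d+\lambda$), the reduction of $2v+q(m+\lambda)\ge 6m$ to $(q-2)\bigl[d(q-3)+2\lambda\bigr]\ge 0$, and the sign analysis in each of the cases $q\ge 3$, $q=2$, $q=1$ (in the last case $\lambda\le d-1$ indeed makes the bracket at most $-2$). Your trichotomy on $q=\lfloor v/(v-m)\rfloor$ is exactly the paper's case split on $s$ (their Cases 1, 2, 3 correspond to $q=1$, $q=2$, $q\ge 3$), so the overall skeleton is the same; the genuine difference is in how the hard case $q\ge 3$ is handled. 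The paper lower-bounds $\xi$ there via Jensen's inequality applied to $\binom{x}{2}$ and then invokes monotonicity of $x+\tfrac{m^2}{2x}$ on $(0,m/\sqrt{2})$, i.e., a convexity-plus-calculus argument that only yields an inequality; your exact polynomial factorization replaces all of that with pure bookkeeping, handles all three cases from one identity, and as a bonus exhibits precisely when equality holds ($q=2$, i.e., $3m/2<v\le 2m$). Your route is the more elementary and arguably cleaner of the two.
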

 \begin{proof}
 	
 	Let $s=\lfloor \frac{v}{v-m}\rfloor$. Then $v=(v-m)s+\lambda$ by definition. Note that $\lambda=0$ if and only if $(v-m)\mid v$. So
 	\[
v+ \lambda \binom{\lceil \frac{v}{v-m}\rceil}{2} + (v-m-\lambda) \binom{\lfloor \frac{v}{v-m}\rfloor}{2}
 	= v+ \lambda \binom{s+1}{2} + (v-m-\lambda) \binom{s}{2} \triangleq\xi.
 	\]
 	We split the proof  into three cases.
 	
 	{\bf Case 1.} $v> 2m$.
 	Then $s=1$ and $\lambda=m$.
 It follows that
 	\[
    \xi= v+\lambda = v+m \geq 3m.
 	\]
 	
 	{\bf Case 2.} $3m/2 < v\leq 2m$.  Then $s=2$ and $\lambda=2m-v$.
 Therefore,
 	 \[
 	 \xi = v+(2m-v)\cdot 3 + (v-m-(2m-v))=  3m.
 	 \]
 	
 	{\bf Case 3.} $m < v\leq 3m/2$. Note that $\frac{v}{v-m}\geq 3$ and hence $s\geq 3$.
 By Jensen's inequality, we have
 	\begin{align*}
 	\xi =v+ \lambda \binom{s+1}{2} + (v-m-\lambda) \binom{s}{2}
 	&\geq v+ (v-m) \binom{\frac{\lambda}{v-m}(s+1) + \frac{v-m-\lambda}{v-m}s}{2}\\
 	&= v+ (v-m) \binom{\frac{v}{v-m}}{2} = v+\frac{vm}{2(v-m)} = v+ \frac{m}{2} + \frac{m^2}{2(v-m)}.
 	\end{align*}
 	Note that $f(x)=x+\frac{m^2}{2x}$ is decreasing in $(0,m/\sqrt{2})$. Since $v-m\leq m/2<m/\sqrt{2}$,
    \[
 	\xi= \frac{3m}{2} +(v-m)+ \frac{m^2}{2(v-m)}
 	\geq 3m.
 	\]
 	This finishes the proof of Lemma~\ref{lem-estimate}.
 \end{proof}

 Now we prove Theorem~\ref{thm-(n,t,3,1)-upper}.

 \begin{proof}[Proof of Theorem~\ref{thm-(n,t,3,1)-upper}]
 	Let $(X,\B)$ be the set-theoretic representation of $\C$ with a partition $\{\PP_i: i\in [b]\}$. By Property $\bm B(w)$, each block in $\B$ has size at most three.
 	
  If $\PP_i$ has only blocks of size three, we have the following claim for $\tau(\PP_i)$.
 	 \begin{claim}\label{claim-tau-lowerbound-w=3}
 	 	If $\PP_i$ contains only blocks of size three, then we have $\tau(\PP_i)\geq 3(t+1)$.
 	 \end{claim}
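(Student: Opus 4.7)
The plan is to translate $\PP_i$ into a graph problem and then apply Lemma~\ref{lem:transversal-edge-lower} directly. Specifically, I would define $G$ to be the graph on vertex set $X$ whose edges are precisely the pairs $\{x,y\}\in \binom{X}{2}$ that are contained in some block of $\PP_i$. Because every block in $\PP_i$ has size exactly three, each such block contributes a triangle, and $E(G)$ is the union of these triangles; in particular $\tau(\PP_i)=e(G)$.

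Next I would verify the hypothesis of Lemma~\ref{lem:transversal-edge-lower} for $G$. Let $W\in \binom{X}{n-t}$ and set $T=X\setminus W\in \binom{X}{t}$. By Property $\bm A(t)$ applied to $\PP_i$, there exists a block $B\in \PP_i$ with $B\cap T=\emptyset$, i.e., $B\subseteq W$. Since $|B|=3$, the three vertices of $B$ form a triangle of $G$ lying entirely inside $W$, so $G[W]$ contains a triangle, as required.

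Finally, Lemma~\ref{lem:transversal-edge-lower} (applied to the $n$-vertex graph $G$, where elements of $X\setminus U(\PP_i)$ are permitted as isolated vertices) yields $e(G)\geq 3(t+1)$, and combining with $\tau(\PP_i)=e(G)$ gives the desired bound $\tau(\PP_i)\geq 3(t+1)$.

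I do not expect any serious obstacle here: the argument is essentially a translation of the set-system statement into graph-theoretic language followed by a direct invocation of the previously established Lemma~\ref{lem:transversal-edge-lower}. The only minor subtlety worth flagging in the write-up is to take the vertex set of $G$ to be all of $X$ (rather than just $U(\PP_i)$) so that the parameter $n$ of Lemma~\ref{lem:transversal-edge-lower} matches the parameter $n$ of the LPECC code; Property $\bm A(t)$ is phrased over $\binom{X}{t}$, so this choice is exactly what is needed to feed the hypothesis to the lemma.
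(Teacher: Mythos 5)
Your proposal matches the paper's proof exactly: the paper also builds the graph $G$ on vertex set $X$ whose edges are the pairs covered by blocks of $\PP_i$, observes $e(G)=\tau(\PP_i)$, checks via Property $\bm A(t)$ that every induced subgraph on $n-t$ vertices contains a triangle, and invokes Lemma~\ref{lem:transversal-edge-lower}. Your added remark about taking the vertex set to be all of $X$ is a correct and sensible clarification, but the argument is the same.
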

 	 \begin{proof}
 	  Consider the graph $G$ whose vertex set is $X$ and edge set consists of all pairs $\{x,y\}\in \binom{X}{2}$ covered by some block in $\PP_i$. It follows that $e(G)=\tau(\PP_i)$. By Property $\bm A(t)$, for any vertex set $W\in \binom{V(G)}{n-t}$, the induced subgraph $G[W]$ must contain at least one triangle. Then Lemma~\ref{lem:transversal-edge-lower} implies that $\tau(\PP_i)\geq 3(t+1)$.
 	 \end{proof}
 	 By using (3'')-(iii) in Property $\bm{C}(e)$, it is not difficult to show that there exists a subset $I\subset [b]$ of size at least $b-\frac{n+1}{2}$, such that every block in $\cup_{i\in I} \PP_i$ has size three.
 	 By Eq. \eqref{eq-double-count-pairs} and Claim~\ref{claim-tau-lowerbound-w=3}, we see that
 	 \[
 	  3(t+1)\cdot |I|\leq \sum_{i\in I} \tau(\PP_i) \leq \binom{n}{2}.
 	 \]
 	 Therefore, we have $b\leq |I|+\frac{n+1}{2} \leq \frac{n(n-1)}{6(t+1)} +\frac{n+1}{2}$. This already shows that $b \leq  \frac{n(n+1)}{6(t+1)} +O(n)$.
 	 Next, by analyzing those $\PP_i$'s containing some block of size two more carefully, we will prove $b \leq  \frac{n(n+1)}{6(t+1)}$.
 	
 	 If each block in $\PP_i$ has size either two or three, let $\PP_i^2:=\PP_i \cap \binom{X}{2}$ and let $v_2(\PP_i)$ denote the size $|U(\PP_i^2)|$. In other words, $v_2(\PP_i)$ is the size of the union of $2$-sets in $\PP_i$. Let
 	 $\PP_i^{3,1}:=\{A\in \PP_i\cap \binom{X}{3} : A \text{ contains no block in $\PP_i^2$}\}$.
 	 \begin{claim}\label{claim-tau+v-lowerbound-w=2,3}
 	 	If every block in $\PP_i$ is of size two or three, then $\tau(\PP_i)+v_2(\PP_i)\geq 3(t+1)$.
 	 \end{claim}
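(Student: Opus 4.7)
The plan is to carry out an induction on $t$ in the spirit of the proof of Lemma~\ref{lem:transversal-edge-lower}, but with a refined weight that charges vertices of $V_2$ for their membership in a 2-block. Let $G$ be the graph on vertex set $X$ whose edges are the pairs covered by some block in $\PP_i$, so $e(G)=\tau(\PP_i)$, and write $V_2:=U(\PP_i^2)$ so $|V_2|=v_2(\PP_i)$. Define the weight of a vertex $u$ by $\omega(u):=\deg_G(u)+[u\in V_2]$. I will show $e(G)+|V_2|\geq 3(t+1)$ by induction on $t$. The base case $t=0$ is immediate from Property $\bm{A}(0)$: a single block $B\in\PP_i$ exists, which is either a 3-block (contributing $3$ to $\tau$) or a 2-block (contributing $1$ to $\tau$ and $2$ to $v_2$).

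For the inductive step, first suppose some vertex $u$ satisfies $\omega(u)\geq 3$. I would pass to $\PP_i':=\{B\in\PP_i:u\notin B\}$ on $X\setminus\{u\}$. Property $\bm{A}(t-1)$ for $\PP_i'$ follows because any $T'\in\binom{X\setminus\{u\}}{t-1}$ extends to $T'\cup\{u\}\in\binom{X}{t}$, and the block guaranteed by $\bm{A}(t)$ must avoid $u$ and therefore lies in $\PP_i'$. Moreover $\tau(\PP_i)-\tau(\PP_i')\geq\deg_G(u)$ because every edge of $G$ incident to $u$ is lost, and $v_2(\PP_i)-v_2(\PP_i')\geq[u\in V_2]$ because, if $u\in V_2$, then $u$ itself leaves $V_2$. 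Summing these, the potential $\tau+v_2$ drops by at least $\omega(u)\geq 3$; combined with the inductive bound $\tau(\PP_i')+v_2(\PP_i')\geq 3t$, this yields the required $\tau(\PP_i)+v_2(\PP_i)\geq 3(t+1)$.

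It remains to handle the case in which every vertex satisfies $\omega(u)\leq 2$, which is the crux of the argument. Here a short rigidity check shows that all blocks of $\PP_i$ are pairwise disjoint. Indeed, for a 3-block $\{a,b,c\}$ the two triangle edges at each vertex force $\deg_G(a)\geq 2$, and combined with $\omega(a)\leq 2$ this gives $a\notin V_2$ and $\deg_G(a)=2$ exactly; hence no further block can contain $a$ without creating an additional edge at $a$, and similarly for $b,c$. For a 2-block $\{a,b\}$, both endpoints automatically lie in $V_2$, so $\omega(a),\omega(b)\leq 2$ forces $\deg_G(a)=\deg_G(b)=1$, which again excludes any other block through $a$ or $b$. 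With every pair of blocks disjoint, every transversal of $\PP_i$ has size at least $|\PP_i|$, so Property $\bm{A}(t)$ gives $|\PP_i|\geq t+1$. Each disjoint block contributes exactly $3$ to $\tau+v_2$ (a 3-block contributes $3$ to $\tau$; a 2-block contributes $1$ to $\tau$ and $2$ to $v_2$), and we conclude $\tau(\PP_i)+v_2(\PP_i)=3|\PP_i|\geq 3(t+1)$.

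The main obstacle is the rigidity step in the low-weight case: one must verify that the constraint $\omega\leq 2$, together with the fact that 2-block endpoints automatically lie in $V_2$ and already consume one unit of degree, truly forbids any two blocks from sharing a vertex. Once this pairwise-disjointness is secured, the conclusion is a routine minimum-transversal count implied by $\bm{A}(t)$.
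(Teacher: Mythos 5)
Your proof is correct, and it takes a genuinely different route from the paper's. The paper splits $\PP_i$ into the subfamily $\PP_i^{3,1}$ of $3$-blocks containing no $2$-block and the rest, introduces the auxiliary transversal parameter $t_i$, applies its pure-triangle bound (Claim~\ref{claim-tau-lowerbound-w=3}) to $\PP_i^{3,1}$, and then controls the contribution of the $2$-blocks by bounding the independence number of the graph $H$ of $2$-blocks, invoking Tur\'an's theorem (Lemma~\ref{lem-turan}) and the Jensen-type estimate of Lemma~\ref{lem-estimate} to get $e(H)+v_2(\PP_i)\geq 3(t-t_i)$; adding the two pieces gives $3(t+1)$. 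You instead run a single induction on $t$ with the potential $\omega(u)=\deg_G(u)+\mathds{1}[u\in U(\PP_i^2)]$: deleting a vertex of weight at least $3$ drops $\tau+v_2$ by at least $3$ while Property $\bm A(t)$ degrades to $\bm A(t-1)$, and when all weights are at most $2$ your rigidity check (a triangle vertex has degree exactly $2$ and lies in no $2$-block; a $2$-block endpoint has degree exactly $1$ and lies in no other block) forces the blocks to be pairwise disjoint, whence $\bm A(t)$ gives at least $t+1$ blocks each contributing exactly $3$ to $\tau+v_2$. This is precisely the extension of the paper's Lemma~\ref{lem:transversal-edge-lower} argument to the mixed $2/3$-block setting, and it buys a shorter, self-contained proof that avoids Tur\'an's theorem and Lemma~\ref{lem-estimate} altogether; the paper's decomposition, by contrast, reuses Claim~\ref{claim-tau-lowerbound-w=3} as a black box and isolates the $2$-blocks into a separate graph-theoretic estimate. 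One tiny presentational point: in your rigidity step for a $3$-block vertex $a$, a further $2$-block through $a$ need not create a new edge (it could be $\{a,b\}$ itself), but it is excluded by the condition $a\notin U(\PP_i^2)$ that you derive just before, so the argument is complete as written.
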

 	 \begin{proof}
 	 	Let $t_i$ be the minimum number such that there exists a $(t_i+1)$-set of $X$ that intersects all sets in $\PP_i^{3,1}$. Let $R\subset X$ be such a $(t_i+1)$-set. 	
		Note that by the minimality of $t_i$, for every $t_i$-set $T\subset X$, there exists some block $A\in\PP_i^{3,1}$ such that $A\cap T=\emptyset$. Hence, replacing $\PP_i$ with $\PP_i^{3,1}$ yields an $(n,\min\{t_i,t\},3,1)$-LPECC code, whose set-theoretic representation is $(X, (\B\setminus\PP_i)\cup \PP_i^{3,1})$. In this new code, the corresponding partition is $\{\PP'_j: j\in [b]\}$ where $\PP'_i:=\PP_i^{3,1}$ and $\PP'_j:=\PP_j$ for $j\neq i$.
 By Claim~\ref{claim-tau-lowerbound-w=3}, $\tau(\PP'_i)\geq 3(\min\{t_i,t\}+1)$. So, if $t_i\geq t$, then we are done since $\tau(\PP_i)+v_2(\PP_i)\geq \tau(\PP_i)\geq \tau(\PP'_i)$. Therefore, we can assume now that $t_i< t$. Moreover, observe that $t_i< t$ implies $\PP_i^2\neq \emptyset$.
 	 	
 	 	Consider the auxiliary graph $H$ whose vertex set is $V(H)=U(\PP_i^2)$ and edge set is $E(H)=\PP_i^2$. Note that $|V(H)|=v_2(\PP_i)$.
 	 	By Property $\bm A(t)$, for any $(t-t_i-1)$-set $S\subset X$, there exists some $B\in \mathcal{P}_i$ such that $B\cap (S\cup R) =\emptyset$. Since $R$ intersects all sets in $\PP_i^{3,1}$, we have $B\in \PP_i\setminus \PP_i^{3,1}$, which means that $B$ either belongs to $\PP_i^2$ or contains some block $B_1\in\PP_i^2$ by the definition of $\PP_i^{3,1}$. Thus we can conclude that, for any $(t-t_i-1)$-set $S\subset X$, there exists an edge $B'\in \mathcal{P}_i^2$ such that $B'\cap S =\emptyset$, where one can let $B'=B$ if $B\in \PP_i^2$ and let $B'=B_1$ if $B$ contains the block $B_1\in \PP_i^2$.
 	 	This implies that $|V(H)|=v_2(\PP_i)\geq (t-t_i-1)+2=t-t_i+1$, and that any $(|V(H)|-(t-t_i-1))$-subset of $V(H)$ is not an independent set in the graph $H$. Consequently, the independence number of $H$ is at most $|V(H)|-(t-t_i-1)-1=v_2(\PP_i)-(t-t_i)\triangleq r$, i.e., the complement graph $\overline{H}$ is $K_{r+1}$-free.

        By Tur\'an's theorem (Lemma~\ref{lem-turan}), we have   $e(\overline{H}) \leq e(T(v_2(\PP_i),r))$. Hence,
         	 	\[
 	 	e(H) \geq \binom{v_2(\PP_i)}{2} - e(T(v_2(\PP_i),r))
 	 	 = \lambda \binom{\lceil \frac{v_2(\PP_i)}{r}\rceil}{2} + (r-\lambda) \binom{\lfloor \frac{v_2(\PP_i)}{r}\rfloor}{2},
 	 	\]
 where $\lambda\in [0,r-1]$ is an integer satisfying $\lambda \equiv v_2(\PP_i) \pmod{r}$.
 	 	Thus, by applying Lemma~\ref{lem-estimate} with $v=v_2(\PP_i)$ and $m=t-t_i$, we get $e(H)+v_2(\PP_i) \geq 3(t-t_i)$. By the definition of $\PP_i^{3,1}$, the number of pairs $\{x,y\}\in \binom{X}{2}$ contained in some block in $\PP_i$ is at least the sum of $e(H)$ and the number of pairs contained in some block in $\PP_i^{3,1}=\PP_i'$, so $\tau(\PP_i)\geq\tau(\PP'_i)+e(H)$. Recall that $\tau(\PP'_i)\geq 3(\min\{t_i,t\}+1)=3(t_i+1)$.  Then
 	 	\begin{align*}
 	 	\tau(\PP_i)+v_2(\PP_i)
 	 	 &\geq \tau(\PP'_i)+e(H)+v_2(\PP_i) \\
 	 	  & \geq 3(t_i+1) + 3(t-t_i) =3(t+1). \qedhere
 	 \end{align*}
 	 \end{proof}
 	
 	 Let us finish the proof of Theorem~\ref{thm-(n,t,3,1)-upper}. We first consider the case that for any parameters $n,t$,  each block in $\B$ has size two or three. 
 Note that in this case, (3'')-(iii) in Property $\bm C(e)$ tells us that, if $|B_1|=|B_2|=2$ with $B_1\in \PP_i, B_2\in\PP_j$ and $i\neq j$, then $B_1\cap B_2=\emptyset$. This implies that
 	 \[
 	  \sum_{i\in [b]} v_2(\PP_i) \leq n.
 	 \]
 	 Therefore, by applying Claim~\ref{claim-tau+v-lowerbound-w=2,3} and \eqref{eq-double-count-pairs}, we deduce that
 	 \[
 	   3(t+1)b \leq \sum_{i\in [b]} \tau(\PP_i) +\sum_{i\in [b]} v_2(\PP_i) \leq \binom{n}{2}+n = \frac{n(n+1)}{2}.
 	 \]
 	 Consequently, $b \leq  \frac{n(n+1)}{6(t+1)}$ for all $n,t$ whenever each block in $\B$ has size two or three.
 	
 	 Suppose  that $\B$ contains a block of size at most one. By (3'')-(ii) in Property $\bm C(e)$, there exists at most one $i\in [b]$ such that $\PP_i$ contains some block $B$ of size no more than $1$. Note that deleting $\PP_j$ and the element in $B$ yields an $(n-|B|,t,3,1)$-LPECC code of size $b-1$, whose set-theoretic representation is $(X\setminus B, \B\setminus\PP_j)$ with the partition $\{\PP_i: i\in [b]\setminus\{j\}\}$. If $|B|=1$, then every block in $\B\setminus\PP_j$ has size two or three, and hence
 	 \[
 	 b-1 \leq \frac{(n-|B|)(n-|B|+1)}{6(t+1)}= \frac{n(n-1)}{6(t+1)}.
 	 \]
 	 Thus $b\leq 1+\frac{n(n-1)}{6(t+1)}\leq \frac{n(n+1)}{6(t+1)}$ since $n\geq 3(t+1)$. If $|B|=0$, then every block in $\B\setminus\PP_j$ has size three by Property $\bm C(e)$. By Claim~\ref{claim-tau-lowerbound-w=3} and \eqref{eq-double-count-pairs}, we still have $b-1\leq \binom{n}{2}/(3(t+1))=\frac{n(n-1)}{6(t+1)}$ and hence
 	 \[
 	 b\leq 1+\frac{n(n-1)}{6(t+1)}\leq \frac{n(n+1)}{6(t+1)},
 	 \]
 	 completing the proof.
 \end{proof}

\subsection{Lower bounds for $(n,t,3,1)$ and $(n,t,4,2)$-LPECC codes}
Motivated by \cite{Liu-Ji}, in this subsection we use frames to construct LPECC codes. Let us first introduce the definition of frames and state related results.

Let $k, n$ and $g$ be positive integers. A {\it group divisible design} of type $g^n$ and block size $k$, denoted by $k$-GDD, is a triple $(X, \G, \B)$ where $X$ is a set of $ng$ elements, $\G$ is a partition of $X$ into $n$ parts (groups) each of size $g$, and $\B$ is a family of $k$-subsets (blocks) of $X$ satisfying that every pair of elements of X occurs in exactly one group or one block, but not both.
A family $\mathcal{F}$ of $k$-subsets of $X$ is called a {\it partial resolution class} if every element of $X$ occurs in at most one block of $\mathcal{F}$. The elements of X not occurring in the partial resolution class form the complement of the class.
A $k$-GDD $(X, \G, \B)$ is a {\it $k$-frame} if the family $\B$ can be partitioned into a collection $\PP$ of partial resolution classes of $X$ such that the complement of every partial resolution class of $\PP$ is a certain group $G \in\G$. The partial resolution class of $\PP$ with complement $G \in\G$ is said to be a {\it holey parallel class with a hole} $G$.
\begin{lemma}[see, e.g. {\cite[Theorem \uppercase\expandafter{\romannumeral4}.5.30, page 263]{colbourn2007crc}}]\label{lem-3-frame}
	There exists a $3$-frame of type $g^n$ if and only if $n\geq 4$, $g\equiv 0 \pmod2$ and $g(n-1) \equiv 0 \pmod3$.
\end{lemma}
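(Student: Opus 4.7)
The plan is to treat necessity and sufficiency separately. For necessity, let $(X,\G,\B)$ be a $3$-frame of type $g^n$, let $\PP$ be its collection of holey parallel classes, and for each $G\in\G$ let $r_G$ be the number of classes in $\PP$ having hole $G$. Each such class partitions $X\setminus G$ (of size $g(n-1)$) into triples, so $3\mid g(n-1)$. Next, fix $G\in\G$ and $x\in G$: by the GDD pair-property, each $y\in X\setminus G$ lies with $x$ in exactly one block, and every block through $x$ belongs to a holey parallel class whose hole is some $G'\neq G$ (since $x$ is missed by classes with hole $G$). Counting the two other points of the blocks through $x$ yields $2\sum_{G'\neq G} r_{G'}= g(n-1)$. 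The right-hand side is independent of $G$, so all $r_G$ coincide at a common value $r_0$, and then $2(n-1)r_0=g(n-1)$ gives $r_0=g/2$, forcing $2\mid g$. Finally, in any $3$-GDD of type $g^n$ every block must meet three distinct groups (two points in one group cannot share a block), so a holey parallel class with hole $G$ exists only when $X\setminus G$ spans at least three groups; this requires $n-1\geq 3$, i.e.\ $n\geq 4$.

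For sufficiency I would follow the standard combinatorial-design strategy. First, establish a family of base $3$-frames for small admissible $(g,n)$ via direct algebraic constructions, typically using starters and difference families over finite abelian groups (for example, type $2^n$ with $n\equiv 1\pmod 3$ arises from nearly Kirkman triple systems, while type $6^n$ for small $n\geq 4$ comes from explicit short-orbit constructions). Then apply two recursive engines: Wilson's fundamental construction, which inflates an existing $3$-frame via a transversal design $\mathrm{TD}(4,m)$ to produce a new $3$-frame of larger group size; and the filling-in-groups construction, which rebalances parameters by replacing each group of a $3$-frame of type $m^s$ by a $3$-frame of type $g^{m/g}$. Iterating these, together with the known existence of $\mathrm{TD}(4,m)$ for all $m$ outside a small finite exceptional set, reaches every admissible pair $(g,n)$.

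The main obstacle is the usual one for such iff-theorems in design theory: a small collection of sporadic $(g,n)$ where neither the base cases nor the inductive step applies directly, and which must be handled by ad hoc (sometimes computer-assisted) constructions. These exceptional cases absorb the bulk of the original proof effort in the literature, which is precisely why the authors cite the CRC Handbook of Combinatorial Designs rather than reproduce the argument in this paper.
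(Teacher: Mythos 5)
The paper does not prove this lemma at all: it is quoted verbatim from the CRC Handbook of Combinatorial Designs and used as a black box in the constructions of Section 4.2, so there is no internal proof to compare yours against. Judged on its own terms, your necessity argument is correct and complete. The divisibility $3\mid g(n-1)$ follows from a single holey parallel class partitioning $X\setminus G$ into triples; the count $2\sum_{G'\neq G} r_{G'}=g(n-1)$ is right, since each class with hole $G'\neq G$ contains exactly one block through a fixed $x\in G$ and the pairs $\{x,y\}$ with $y\in X\setminus G$ are each covered exactly once by the GDD property; the conclusion that all $r_G$ equal $g/2$ (hence $2\mid g$) and that $n\geq 4$ because every block is a transversal of three groups are both sound.

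The sufficiency half, however, is a plan rather than a proof. You correctly name the standard machinery (base frames from starters/difference families, Wilson's fundamental construction with a $\mathrm{TD}(4,m)$, and filling in groups), but none of it is executed: no base frame is exhibited, the recursion is not shown to cover all admissible $(g,n)$, and you acknowledge that the sporadic exceptions are left to the literature. Since the theorem is an if-and-only-if, only the ``only if'' direction is actually established by your write-up. This is a reasonable stance for a result of this depth --- and it matches what the authors themselves do by citing the Handbook --- but it should be recorded as a citation-backed claim for sufficiency, not as a proof.
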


\begin{lemma}[see, e.g. {\cite[Theorem \uppercase\expandafter{\romannumeral4}.5.31, page 263]{colbourn2007crc}}]\label{lem-4-frame}
	There exists a $4$-frame of type $g^n$ if and only if $n\geq 5$, $g\equiv 0 \pmod3$ and $g(n-1) \equiv 0 \pmod4$, except possibly where
	\begin{enumerate}
		\item $g = 36$ and $n = 12$, and
		\item $g\equiv 6 \pmod{12}$ and
		\begin{enumerate}
			\item[(a)] $g=6$ and $n\in \{7, 23, 27, 35, 39, 47 \}$;
			\item[(b)] $g=30$ or $g\in [66,2190]$, and $n\in \{7, 23, 27, 39, 47\}$;
			\item[(c)] $g\in\{42, 54\}\cup [2202,11238]$ and $n\in \{23, 27\}$;
			\item[(d)] $g = 18$ and $n \in \{15, 23, 27\}$.
		\end{enumerate}
	\end{enumerate}
\end{lemma}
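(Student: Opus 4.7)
The plan is to prove the two directions separately and then handle the list of exceptional cases. For necessity, I would use standard counting. By the symmetry inherent in a frame one may assume that the number of holey parallel classes with hole $G$ is a constant $r$ independent of $G \in \G$. Each point $x \in G$ then appears in exactly $(n-1)r$ blocks, each of which covers $3$ other points, all lying outside $G$; since every pair $\{x, y\}$ with $y \in X \setminus G$ is covered exactly once, we get $3(n-1)r = g(n-1)$, forcing $r = g/3$ and hence $3 \mid g$. Similarly, each holey parallel class with hole $G$ partitions $X \setminus G$ into blocks of size $4$, so $g(n-1)/4$ must be a nonnegative integer, forcing $4 \mid g(n-1)$. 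Finally $n \geq 5$ is required because any block of size $4$ in a $4$-frame must meet four distinct groups, and one further group is needed to serve as the hole.

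For sufficiency I would follow the standard three-stage strategy from the frame literature. Stage one: construct small ``seed'' $4$-frames of type $g^n$ for base values such as $g \in \{3, 6, 9, 12\}$ and small $n$, typically by writing down explicit starter-adder difference families over $\mathbb{Z}_n \times \mathbb{Z}_g$ or other short cyclic groups. Stage two: apply Wilson's Fundamental Frame Construction; starting from a suitable master $k^{*}$-GDD or frame of type $h^m$, assign positive integer weights $w(\cdot)$ to its points, replace each block $B$ by a transversal design $\mathrm{TD}(k^{*}, w|_B)$, and replace each group $G$ by a $4$-frame on the weighted version of $G$; when the weighting matches an existing seed at every level, one obtains a $4$-frame of the inflated type. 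Stage three: combine the seed frames with the PBD-closure of admissible $g$-values and with resolvable GDD inflation, iterating until every pair $(g, n)$ satisfying $g \equiv 0 \pmod 3$, $g(n-1) \equiv 0 \pmod 4$ and $n \geq 5$ is covered.

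The hard part will be the residual list of possibly exceptional cases in the statement: the isolated pair $(g, n) = (36, 12)$ and the $g \equiv 6 \pmod{12}$ sub-list with small $n$ values. These cases lie outside the parameter range where the above recursive inflation applies cleanly, since the required transversal designs $\mathrm{TD}(5, m)$ or auxiliary resolvable $3$-GDDs are either not known to exist or are themselves exceptional. Each such $(g, n)$ would have to be attacked by a tailored ad hoc construction, for example by incomplete frame doubling, tripling via an auxiliary $\mathrm{TD}$ with a hole, or a direct search over an abelian difference structure. This boundary-case bookkeeping, rather than the general recursive machinery, is where the proof becomes long and delicate, and is precisely the reason why the lemma must still carry an explicit list of possible exceptions.
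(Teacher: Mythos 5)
The paper does not prove this lemma at all: it is imported verbatim from the CRC Handbook (Theorem IV.5.31), where it summarizes a long line of work on $4$-frames, so there is no internal proof to compare yours against. Your necessity argument is essentially the standard one and is correct in substance, though the claim that the number $r_G$ of holey parallel classes with hole $G$ is constant should not be dismissed ``by symmetry'': it follows from counting. Every point $x$ lies in exactly $g(n-1)/3$ blocks, and these blocks are distributed one per holey parallel class over all classes whose hole is not the group of $x$; hence $\sum_{H\neq G_x} r_H = g(n-1)/3$ is independent of $x$, which forces all $r_H$ to be equal to $g/3$ and gives $3\mid g$. The divisibility $4\mid g(n-1)$ and the bound $n\geq 5$ then follow as you say.

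The sufficiency direction, however, is a strategy outline rather than a proof. Stage one (``write down explicit starter--adder difference families''), stage two (the Fundamental Frame Construction), and stage three (``iterating until every pair is covered'') contain no actual seed constructions, no specification of which master designs and weight functions are used, and no verification that the recursion really reaches every admissible pair $(g,n)$ outside the listed exceptions. That verification is precisely the content of the theorem: the complete argument is spread across several papers (Stinson, Rees--Stinson, Ge--Ling, and others), requires dozens of concrete small frames, and the delicate bookkeeping for the $g\equiv 6\pmod{12}$ residue class is exactly what produces the exception list. You correctly identify where the difficulty lies, but you do not resolve it, so the proposal does not constitute a proof of the stated equivalence. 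For the purposes of this paper the right move is the one the authors take: cite the Handbook as a known existence result.
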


Liu and Ji~{\cite[Theorem 4.19]{Liu-Ji}} proved that $C(n,2,3,1)=\lfloor \frac{n(n+1)}{18}\rfloor$ for any even integer $n\geq 4$ with $n\neq 6$, and $C(6,2,3,1)=1$.
Following \cite{Liu-Ji}, we construct $(n,t,3,1)$-LPECC codes and $(n,t,4,2)$-LPECC codes from frames, and hence obtain lower bounds for $C(n,t,3,1)$ and $C(n,t,4,2)$.

\begin{theorem}\label{thm-(n,t,3,1)-lower}
	For any $n \equiv 2 \pmod6$, $C(n,t,3,1)\geq \lfloor\frac{n-2}{3(t+1)}\rfloor \cdot \frac{n}{2} + \lfloor\frac{n}{2(t+1)}\rfloor $. 		
	Therefore, we have that $\liminf_{n\to\infty} C(n,t,3,1)/n^2 \geq \frac{1}{6(t+1)}$.
\end{theorem}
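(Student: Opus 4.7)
The plan is to apply Lemma~\ref{lem-3-frame} to obtain a $3$-frame of type $2^{n/2}$; the hypothesis $n\equiv 2\pmod 6$ ensures $g(m-1)=n-2\equiv 0\pmod 3$ (and $m=n/2\ge 4$ for large enough $n$), so such a frame exists. Let $X$ denote its $n$-point set, let $\G=\{G_1,\dots,G_{n/2}\}$ be its groups (each a pair), and, since $g/(k-1)=1$, let $\Q_i$ be the unique holey parallel class with hole $G_i$; thus $\Q_i$ partitions $X\setminus G_i$ into $(n-2)/3$ pairwise disjoint triples.

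From this frame I would build two families of codesets. The \emph{triple-codesets}: for each $i\in[n/2]$, split $\Q_i$ arbitrarily into $\lfloor (n-2)/(3(t+1))\rfloor$ sub-collections of exactly $t+1$ triples, discarding any leftover triples; this yields $\lfloor (n-2)/(3(t+1))\rfloor\cdot n/2$ codesets, each a matching of $t+1$ disjoint $3$-sets. The \emph{pair-codesets}: partition the $n/2$ groups into $\lfloor n/(2(t+1))\rfloor$ sub-collections of $t+1$ groups each; this yields $\lfloor n/(2(t+1))\rfloor$ additional codesets, each consisting of $t+1$ disjoint $2$-sets. Summing the two counts already produces the claimed number of codesets.

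Most of the LPECC properties are automatic. Property $\bm B(3)$ is clear, and Property $\bm A(t)$ holds because every codeset comprises $t+1$ pairwise disjoint blocks, so any $T\in\binom{X}{t}$ meets at most $t$ of them and leaves at least one block untouched. For Property $\bm C(1)$ I would invoke characterization (3''): clause (ii) is vacuous since no block has size $\le 1$, and clause (iii) follows because the pair-blocks in distinct pair-codesets lie in distinct groups, hence are disjoint. The only clause that genuinely requires an argument is (i), that each pair $\{x,y\}\subset X$ is covered by at most one codeset, and this is precisely where the GDD structure of the frame is essential: a pair that coincides with some group $G_i$ is not contained in any triple of the frame, so it appears only in the pair-codeset containing $G_i$; any other pair is contained in exactly one triple of the frame, hence in at most one triple-codeset. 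I expect clause (i) to be the main conceptual step, though its verification is immediate once the GDD property is invoked.

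For the asymptotic consequence, for each large integer $n$ I would pick some $n'\in\{n-5,\dots,n\}$ with $n'\equiv 2\pmod 6$. Any $(n',t,3,1)$-LPECC code on a fixed $n'$-subset of $[n]$ remains an $(n,t,3,1)$-LPECC code when viewed in the larger ambient set (Properties $\bm B(3)$ and $\bm C(1)$ are unaffected, and Property $\bm A(t)$ at the new level follows from Property $\bm A(t)$ of the original code by restricting the forbidden $t$-set to $[n']$), so $C(n,t,3,1)\ge C(n',t,3,1)$. Applying the explicit bound at $n'$ gives
\[
\frac{C(n,t,3,1)}{n^2}\ \ge\ \frac{1}{6(t+1)}\cdot\left(\frac{n'}{n}\right)^{\!2}-O(1/n),
\]
and letting $n\to\infty$ yields $\liminf_{n\to\infty} C(n,t,3,1)/n^2\ge 1/(6(t+1))$, completing the proof.
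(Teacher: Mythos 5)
Your proposal is correct and follows essentially the same route as the paper: both take a $3$-frame of type $2^{n/2}$ from Lemma~\ref{lem-3-frame} and form codesets by grouping $t+1$ disjoint blocks from each holey parallel class and $t+1$ groups at a time, with Property $\bm A(t)$ from disjointness and Property $\bm C(1)$ from the GDD pair-coverage property. You merely spell out the verification of (3'') and the interpolation to nearby residues for the $\liminf$, which the paper leaves as a routine check.
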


\begin{proof}
	By Lemma~\ref{lem-3-frame}, there exists a $3$-frame $(X,\mathcal{G},\mathcal{B})$ of type $2^{\frac{n}{2}}$. It is easy to see that there are exactly $\frac{n}{2}$ holey parallel classes in a $3$-frame of type $2^{\frac{n}{2}}$. Let $\B_1,\B_2,\dots,\B_{n/2}$ be all holey parallel classes of the $3$-frame $(X,\mathcal{G},\mathcal{B})$. Write $\B_i=\{B_{i,1},B_{i,2},\dots,B_{i,\frac{n-2}{3}}\}$ for each $i\in [n/2]$ and  $\G=\{G_1,\dots,G_{n/2}\}$.

	To get the desired LPECC code,  we consider disjoint collections of  $(t+1)$ sets from each of  $\G$ and  $\B_i$. Formally, set
\[
	\PP_j=\{G_{(t+1)j-t},G_{(t+1)j-t+1},\dots,G_{(t+1)j}\},
	\] for each $j\in [\lfloor\frac{n}{2(t+1)}\rfloor]$, and set \[
	\Q_{i,k}=\{B_{i,(t+1)k-t},B_{i,(t+1)k-t+1},\dots,B_{i,(t+1)k}\},
	\] for every $i\in [n/2]$ and $k\in [\lfloor\frac{n-2}{3(t+1)}\rfloor]$.
	Then $
	\C := \left( \bigcup_{j\in [\lfloor\frac{n}{2(t+1)}\rfloor]} \PP_j \right)  \cup \left( \bigcup_{i\in [\frac{n}{2}], k\in [\lfloor\frac{n-2}{3(t+1)}\rfloor]} \Q_{i,k}\right)\subset \G \cup \bigcup_{i\in[n/2]} \B_i
	$
	 gives a partition with $\lfloor\frac{n-2}{3(t+1)}\rfloor \cdot \frac{n}{2} + \lfloor\frac{n}{2(t+1)}\rfloor$ parts. One can directly check by definition that $\C$ is an $(n,t,3,1)$-LPECC code.  Thus $C(n,t,3,1)\geq \lfloor\frac{n-2}{3(t+1)}\rfloor \cdot \frac{n}{2} + \lfloor\frac{n}{2(t+1)}\rfloor $.
\end{proof}

Combining Theorem~\ref{thm-(n,t,3,1)-upper} and Theorem~\ref{thm-(n,t,3,1)-lower}, we immediately derive the asymptotic behavior of $C(n,t,3,1)$.
\begin{corollary}
	$\lim_{n\to\infty}  C(n,t,3,1)/n^2 =\frac{1}{6(t+1)}$ for any positive integer $t$.
\end{corollary}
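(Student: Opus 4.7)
The plan is a simple squeeze argument combining the upper bound of Theorem~\ref{thm-(n,t,3,1)-upper} with the constructive lower bound of Theorem~\ref{thm-(n,t,3,1)-lower}. Both bounds are of order $n^2/(6(t+1))$, so after dividing by $n^2$ they pinch $C(n,t,3,1)/n^2$ to the claimed value.

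The $\limsup$ direction is immediate: for every $n\geq 3(t+1)$, Theorem~\ref{thm-(n,t,3,1)-upper} gives $C(n,t,3,1)\leq n(n+1)/(6(t+1))$, so dividing by $n^2$ and letting $n\to\infty$ yields $\limsup_{n\to\infty} C(n,t,3,1)/n^2 \leq 1/(6(t+1))$.

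The one subtle point, which I expect to be the main (minor) obstacle, is that Theorem~\ref{thm-(n,t,3,1)-lower} only supplies an explicit construction along the arithmetic progression $n\equiv 2\pmod 6$, while the claim requires a limit over all integers $n$. I would bridge this with a short monotonicity lemma asserting $C(n+1,t,3,1)\geq C(n,t,3,1)$ for all $n\geq t$, obtained by reinterpreting any $(n,t,3,1)$-code on the wire set $[n+1]$ and leaving the new wire unused. Properties $\bm{B}(w)$ and $\bm{C}(e)$ are trivially preserved, and Property $\bm{A}(t)$ holds because a $t$-set $T\subset[n+1]$ containing $n+1$ satisfies $|T\cap[n]|=t-1$, which can be extended to some $t$-set $T''\subset[n]$; any support avoiding $T''$ automatically avoids $T$. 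With monotonicity in hand, for any large $n$ set $n':=6\lfloor(n-2)/6\rfloor+2\in\{n-5,\ldots,n\}$, so $n'\equiv 2\pmod 6$; then Theorem~\ref{thm-(n,t,3,1)-lower} applied to $n'$ yields $C(n,t,3,1)\geq C(n',t,3,1) \geq n'^2/(6(t+1)) - O(n)$, and since $n'/n\to 1$ we obtain $\liminf_{n\to\infty} C(n,t,3,1)/n^2 \geq 1/(6(t+1))$. Combining this with the $\limsup$ bound proves the corollary.
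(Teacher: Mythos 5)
Your proposal is correct and follows essentially the same route as the paper, which simply combines Theorem~\ref{thm-(n,t,3,1)-upper} with Theorem~\ref{thm-(n,t,3,1)-lower}. Your monotonicity lemma $C(n+1,t,3,1)\geq C(n,t,3,1)$ is a valid (and welcome) extra step: the paper asserts the $\liminf$ over all $n$ even though its construction only covers $n\equiv 2\pmod 6$, so your bridge makes explicit a point the paper leaves implicit.
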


 Note that, when $n\equiv 2\pmod{6(t+1)}$, \[
 \left\lfloor\frac{n-2}{3(t+1)}\right\rfloor \cdot \frac{n}{2} + \left\lfloor\frac{n}{2(t+1)}\right\rfloor
 = \frac{(n-2)n}{6(t+1)} + \frac{n-2}{2(t+1)}
 = \frac{n^2+n-6}{6(t+1)} = \left\lfloor\frac{n(n+1)}{6(t+1)}\right\rfloor;
 \]
 when $n\equiv 3t+5 \pmod{6(t+1)}$ is even, for $t>1$ we have
 \[
 \left\lfloor\frac{n-2}{3(t+1)}\right\rfloor \cdot \frac{n}{2} + \left\lfloor\frac{n}{2(t+1)}\right\rfloor
 = \frac{(n-2)n}{6(t+1)} + \frac{n-(t+3)}{2(t+1)}
 = \frac{n^2+n-3(t+3)}{6(t+1)} = \left\lfloor\frac{n(n+1)}{6(t+1)}\right\rfloor,
 \]
 and for $t=1$ it is easy to check that $\lfloor\frac{n-2}{6}\rfloor \cdot \frac{n}{2} + \lfloor\frac{n}{4}\rfloor
 =\frac{n(n+1)}{12}= \lfloor\frac{n(n+1)}{12}\rfloor$.
 Thus for $n\equiv 2 \pmod{3(t+1)}$ with $n$ even we have $\lfloor\frac{n-2}{3(t+1)}\rfloor \cdot \frac{n}{2} + \lfloor\frac{n}{2(t+1)}\rfloor = \lfloor\frac{n(n+1)}{6(t+1)}\rfloor$.
 Observe that for such $n$ we always have $n \equiv 2 \pmod6$ since $n\equiv 2 \pmod{3}$ and $2\mid n$.
 As a consequence, we conclude the following exact value for all $t$.
 \begin{corollary}\label{cor-value-(n,t,3,1)}
 	$C(n,t,3,1) =\lfloor \frac{n(n+1)}{6(t+1)}\rfloor$ for any even integer $n\neq 2$ that satisfies $n\equiv 2 \pmod{3(t+1)}$.
 \end{corollary}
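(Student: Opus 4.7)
The plan is to prove the equality by sandwiching $C(n,t,3,1)$ between the upper bound of Theorem~\ref{thm-(n,t,3,1)-upper} and the lower bound of Theorem~\ref{thm-(n,t,3,1)-lower}, then checking that the two bounds coincide under the stated divisibility conditions. First I would verify the hypotheses of both theorems. The upper bound requires $n\geq 3(t+1)$, which follows since $n\equiv 2\pmod{3(t+1)}$, $n$ is even, and $n\neq 2$ together force $n\geq 3(t+1)+2$. The lower bound requires $n\equiv 2\pmod 6$; since $3\mid 3(t+1)$, the congruence $n\equiv 2\pmod{3(t+1)}$ gives $n\equiv 2\pmod 3$, and combined with $2\mid n$ this yields $n\equiv 2\pmod 6$ by the Chinese Remainder Theorem. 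So both bounds apply, and it suffices to establish the identity
\[
\left\lfloor\frac{n-2}{3(t+1)}\right\rfloor \cdot \frac{n}{2} + \left\lfloor\frac{n}{2(t+1)}\right\rfloor = \left\lfloor\frac{n(n+1)}{6(t+1)}\right\rfloor.
\]

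To verify this identity I would split into residue classes modulo $6(t+1)$. The hypotheses force $n\equiv 2\pmod{6(t+1)}$ always, and additionally allow $n\equiv 3t+5\pmod{6(t+1)}$ when $t$ is odd (since $3t+5$ is even exactly when $t$ is odd). In the first residue class both floors collapse to exact fractions $(n-2)/(3(t+1))$ and $(n-2)/(2(t+1))$, and a short calculation shows the left-hand side equals $(n^2+n-6)/(6(t+1))$, which matches $\lfloor n(n+1)/(6(t+1))\rfloor$ since $n(n+1)\equiv 6\pmod{6(t+1)}$. In the second residue class, for $t\geq 3$ odd, the two floors evaluate to $(n-2)/(3(t+1))$ and $(n-(t+3))/(2(t+1))$, summing to $(n^2+n-3(t+3))/(6(t+1)) = \lfloor n(n+1)/(6(t+1))\rfloor$. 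The boundary subcase $t=1$ with $n\equiv 8\pmod{12}$ must be handled directly: there $\lfloor(n-2)/6\rfloor = (n-2)/6$ and $\lfloor n/4\rfloor = n/4$ are both exact, and a single line yields $(n-2)n/12 + n/4 = n(n+1)/12$.

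No step is genuinely difficult; the only real content is the case-by-case bookkeeping to confirm the lower-bound expression equals $\lfloor n(n+1)/(6(t+1))\rfloor$ in each admissible residue class. The paragraph immediately preceding the statement records precisely these identities, so my proof amounts to pointing at those computations, invoking Theorem~\ref{thm-(n,t,3,1)-upper} and Theorem~\ref{thm-(n,t,3,1)-lower}, and concluding.
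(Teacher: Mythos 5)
Your proposal is correct and follows essentially the same route as the paper: the paragraph preceding Corollary~\ref{cor-value-(n,t,3,1)} performs exactly the residue-class computations you describe (the classes $n\equiv 2$ and $n\equiv 3t+5 \pmod{6(t+1)}$, with the $t=1$ subcase handled separately), observes that $n\equiv 2\pmod 6$ follows from $n\equiv 2\pmod 3$ and $2\mid n$, and concludes by combining Theorem~\ref{thm-(n,t,3,1)-upper} with Theorem~\ref{thm-(n,t,3,1)-lower}. Your additional explicit check that $n\neq 2$ forces $n\geq 3(t+1)+2$, so the upper bound's hypothesis holds, is a small point the paper leaves implicit but changes nothing.
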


Moreover, by using $4$-frames, $(n,t,4,2)$-LPECC codes can be constructed similarly.
\begin{theorem}\label{thm-(n,t,4,2)-lower}
	For any $n \equiv 3 \pmod{12}$, $C(n,t,4,2)\geq \lfloor\frac{n-3}{4(t+1)}\rfloor \cdot \frac{n}{3} + \lfloor\frac{n}{3(t+1)}\rfloor$.
	Therefore, we have $\liminf_{n\to\infty} C(n,t,4,2)/n^2 \geq \frac{1}{12(t+1)}$.
\end{theorem}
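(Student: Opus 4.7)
The plan is to mimic the construction in Theorem~\ref{thm-(n,t,3,1)-lower} verbatim, replacing the $3$-frame of type $2^{n/2}$ by a $4$-frame of type $3^{n/3}$. Applying Lemma~\ref{lem-4-frame} with $g=3$ and $u=n/3$, the divisibility constraints $3 \mid g$ and $4 \mid g(u-1)$ become $4 \mid n-3$, so $n \equiv 3 \pmod{12}$ suffices, and none of the listed exceptional cases apply because each of them requires $g=36$ or $g\equiv 6\pmod{12}$. Such a frame $(X,\G,\B)$ has $n/3$ groups of size $3$ and, by the standard frame count $g/(k-1)=1$ holey parallel class per hole, exactly $n/3$ holey parallel classes $\B_1,\ldots,\B_{n/3}$, each of which partitions the complement of its hole into $(n-3)/4$ pairwise disjoint four-element blocks.

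Write $\G = \{G_1,\ldots,G_{n/3}\}$ and $\B_i = \{B_{i,1},\ldots,B_{i,(n-3)/4}\}$. I would group the $G_\ell$'s into consecutive chunks of size $t+1$, and likewise each $\B_i$ into chunks of size $t+1$: set
\[
\PP_j := \{G_{(t+1)j-t},\ldots,G_{(t+1)j}\},\quad j \in \left[ \left\lfloor \tfrac{n}{3(t+1)}\right\rfloor \right],
\]
\[
\Q_{i,k} := \{B_{i,(t+1)k-t},\ldots,B_{i,(t+1)k}\},\quad i \in [n/3],\ k \in \left[\left\lfloor \tfrac{n-3}{4(t+1)}\right\rfloor\right].
\]
The collection of all these $\PP_j$'s and $\Q_{i,k}$'s is a pairwise-disjoint family of parts inside $\G \cup \bigcup_i \B_i$, and its total cardinality is exactly $\lfloor (n-3)/(4(t+1))\rfloor \cdot n/3 + \lfloor n/(3(t+1))\rfloor$, matching the claimed lower bound.

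Verification of the LPECC axioms should then be routine. Property $\bm B(4)$ is immediate since every block has size $3$ or $4$. Property $\bm A(t)$ holds because each part consists of $t+1$ pairwise disjoint sets (groups are disjoint by the GDD definition, and blocks inside a single holey parallel class are disjoint by definition), so any $t$-subset of $X$ misses at least one block from each part. For Property $\bm C(2)$, which requires symmetric difference at least $5$ between any two blocks lying in distinct parts, I would split into three cases: two distinct groups are disjoint ($|\triangle|=6$); two $4$-blocks from distinct holey parallel classes share at most one element because every pair of $X$ is covered at most once in the underlying GDD, giving $|\triangle|\geq 6$; finally, a group $G_\ell$ and a $4$-block $B$ from different parts meet in at most one element by the same pair-covering property, giving $|\triangle|\geq 3+4-2=5$. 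The asymptotic conclusion $\liminf_{n\to\infty} C(n,t,4,2)/n^2 \geq 1/(12(t+1))$ then follows by letting $n\to\infty$ along $n\equiv 3\pmod{12}$.

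The only place that demands care is the mixed group-versus-block case of Property $\bm C(2)$: the symmetric difference is exactly $2e+1=5$, so one must genuinely invoke the single-coverage axiom of the underlying $4$-GDD to rule out $|G_\ell\cap B|\geq 2$. Apart from this tight check, I expect no substantial obstacle; the proof should proceed in a single short pass paralleling that of Theorem~\ref{thm-(n,t,3,1)-lower}.
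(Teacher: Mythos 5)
Your proposal is correct and follows essentially the same route as the paper, which likewise takes a $4$-frame of type $3^{n/3}$ from Lemma~\ref{lem-4-frame} and chunks the $n/3$ groups and the $n/3$ holey parallel classes into disjoint collections of $t+1$ sets each, exactly as in Theorem~\ref{thm-(n,t,3,1)-lower}. Your explicit verification of Property $\bm C(2)$, including the tight group-versus-block case where the symmetric difference equals exactly $5$, is a correct elaboration of details the paper leaves to the reader.
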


\begin{proof}
	By Lemma~\ref{lem-4-frame}, there exists a $4$-frame $(X,\mathcal{G},\mathcal{B})$ of type $3^{\frac{n}{3}}$. Then it is easy to see that there are exactly $\frac{n}{3}$ holey parallel classes in this $4$-frame. Let $\B_1,\B_2,\dots,\B_{n/3}$ be all holey parallel classes of the $4$-frame $(X,\mathcal{G},\mathcal{B})$. Write $\B_i=\{B_{i,1},B_{i,2},\dots,B_{i,\frac{n-3}{4}}\}$ for each $i\in [n/3]$ and write $\G=\{G_1,\dots,G_{n/3}\}$.	
As in the proof of Theorem~\ref{thm-(n,t,3,1)-lower}, considering  disjoint collections of  $(t+1)$ sets from each of  $\G$ and  $\B_i$ gives a partition with  $\lfloor\frac{n-3}{4(t+1)}\rfloor \cdot \frac{n}{3} + \lfloor\frac{n}{3(t+1)}\rfloor$ parts, which forms an $(n,t,4,2)$-LPECC code.
\end{proof}
    Note that Theorem~\ref{thm-(n,t,w,w-2)-upper} shows that when $w\geq t^2+2t+2$  we have $C(n,t,w,w-2)\leq  \binom{n+1}{2}/\binom{w+t}{2}=\frac{n(n+1)}{(w+t)(w+t-1)}$. On the other hand, our lower bound of $\liminf_{n\to\infty} C(n,t,w,w-2)/n^2$ is of the form $\frac{1}{(t+1)w(w-1)}$ for $w\in \{3,4\}$ and  is tight for $w=3$. Observe that $(t+1)w(w-1)< (w+t)(w+t-1)$  when $w$ is small compared to $t$.
   Based on the above results, it is natural to conjecture the following.
\begin{conjecture}
	If $(t+1)w(w-1)< (w+t)(w+t-1)$, then it holds that
	\[
	 \lim_{n\to\infty} C(n,t,w,w-2)/n^2 = \frac{1}{(t+1)w(w-1)},
	\]
	or even stronger, $C(n,t,w,w-2)=\lfloor \frac{(n+1)n}{(t+1)w(w-1)} \rfloor$ for infinitely many $n$ with fixed $w$ and $t$.
\end{conjecture}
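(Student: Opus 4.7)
The plan is to establish the conjecture by generalizing two pieces of the earlier sections: the frame-based construction of Theorems \ref{thm-(n,t,3,1)-lower} and \ref{thm-(n,t,4,2)-lower} for the lower bound, and the double-counting plus Tur\'an-type argument of Theorem \ref{thm-(n,t,3,1)-upper} for the upper bound.

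For the lower bound, I would use $w$-frames of type $(w-1)^{n/(w-1)}$, whose existence in design theory is known for infinitely many $n$ satisfying the divisibility conditions $n\equiv 0\pmod{w-1}$ and $n\equiv w-1\pmod{w}$, generalizing Lemmas \ref{lem-3-frame} and \ref{lem-4-frame}. Such a frame has $n/(w-1)$ groups of size $w-1$ and exactly $n/(w-1)$ holey parallel classes, each consisting of $(n-w+1)/w$ pairwise disjoint $w$-blocks. Mimicking the constructions in Theorems \ref{thm-(n,t,3,1)-lower} and \ref{thm-(n,t,4,2)-lower}, partitioning both the groups and the blocks within each HPC into packets of $t+1$ pairwise disjoint sets produces an $(n,t,w,w-2)$-LPECC code of size
\[
\left\lfloor \frac{n}{(w-1)(t+1)}\right\rfloor + \frac{n}{w-1}\cdot \left\lfloor \frac{n-w+1}{w(t+1)}\right\rfloor,
\]
whose leading term is $\frac{n(n+1)}{w(w-1)(t+1)}$, and which equals $\lfloor \frac{n(n+1)}{w(w-1)(t+1)}\rfloor$ exactly for infinitely many $n$ satisfying the additional divisibility conditions $(w-1)(t+1)\mid n$ and $w(t+1)\mid n-w+1$ (possible by CRT). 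Property $\bm C(w-2)$ is verified because any two blocks from different packets meet in at most one point in a frame, hence their symmetric difference has size at least $2w-3$.

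For the upper bound, the plan is to reduce, as in the proof of Theorem \ref{thm-(n,t,w,w-2)-upper}, to the case in which every block has size $w-1$ or $w$, and then establish the per-part estimate
\begin{equation}\label{eq:per-part}
\tau(\PP_i) + v_{w-1}(\PP_i) \;\geq\; (t+1)\binom{w}{2}
\end{equation}
for each codeset $\PP_i$, where $v_{w-1}(\PP_i):= |U(\PP_i^{w-1})|$. Once \eqref{eq:per-part} is established, the double-counting bound $\sum_i \tau(\PP_i)\leq \binom{n}{2}$ from Eq.~\eqref{eq-double-count-pairs} together with $\sum_i v_{w-1}(\PP_i)\leq n$ (a direct consequence of (3'')-(iii), since any two blocks of size $w-1$ from distinct parts are disjoint) yields
\[
(t+1)\binom{w}{2}\cdot b \;\leq\; \binom{n}{2}+n \;=\; \binom{n+1}{2},
\]
which rearranges to $b\leq \frac{n(n+1)}{w(w-1)(t+1)}$. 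Inequality \eqref{eq:per-part} is tight for the disjoint configurations used in the frame construction, which is precisely why the upper and lower bounds would coincide.

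The hard part will be proving \eqref{eq:per-part}, which requires a new hypergraph Tur\'an-type result strictly generalizing Lemma \ref{lem:transversal-edge-lower}. Following the strategy of Claim \ref{claim-tau+v-lowerbound-w=2,3}, I would split $\PP_i$ into size-$w$ blocks containing no member of $\PP_i^{w-1}$ (call this $\PP_i^{w,\text{pure}}$) and size-$(w-1)$ blocks, let $t_i+1$ denote the minimum size of a subset intersecting every block of $\PP_i^{w,\text{pure}}$, and pick such a minimum hitting set $R$. Then two estimates are needed: (a) a lower bound $\tau(\PP_i^{w,\text{pure}})\geq (t_i+1)\binom{w}{2}$, which amounts to showing that any graph whose every $(n-t_i)$-vertex subset induces a $K_w$-clique has at least $(t_i+1)\binom{w}{2}$ edges; and (b) an additional contribution of at least $(t-t_i)\binom{w}{2}$ to $\tau+v_{w-1}$ coming from the $(w-1)$-blocks, using that Property $\bm A(t)$ applied to $R\cup S$ with $|S|=t-t_i-1$ forces the hypergraph $\PP_i^{w-1}$ to have transversal number at least $t-t_i$. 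Estimate (a) genuinely fails without the hypothesis $(t+1)w(w-1)<(w+t)(w+t-1)$, as witnessed by the clique $K_{w+t_i}$ with only $\binom{w+t_i}{2}$ edges, so this hypothesis is precisely what rules out the packed extremal configurations. Estimate (b) calls for an analogue of Lemma \ref{lem-estimate} for $(w-1)$-uniform hypergraphs, bounding the minimum shadow of a hypergraph with prescribed cover number. Making the two contributions combine sharply — so that pairs are not double-counted between the two analyses and disjoint configurations remain extremal — is the most delicate piece and may require new extremal-hypergraph tools beyond the graph-theoretic techniques used for the $w=3$ case.
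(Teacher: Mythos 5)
First, a point of order: the statement you are addressing is posed in the paper as an open \emph{conjecture}; the paper contains no proof of it, so your proposal can only be judged on its internal soundness. On that score there is a genuine gap, and it sits exactly where you predicted: the per-part estimate $\tau(\PP_i)+v_{w-1}(\PP_i)\geq (t+1)\binom{w}{2}$ is false even under the stated hypothesis. Your claim that $(t+1)w(w-1)<(w+t)(w+t-1)$ ``is precisely what rules out the packed extremal configurations'' accounts only for the single clique $K_{w+t}$; the dangerous configurations are unions of \emph{several moderate} cliques. Take $w=4$, $t=6$ (the hypothesis $84<90$ holds) and let $\PP_i=\binom{A_1}{4}\cup\binom{A_2}{4}\cup\binom{A_3}{4}$ for pairwise disjoint sets with $|A_1|=|A_2|=5$, $|A_3|=6$. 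A $6$-set $T$ meeting every block would need at least $2+2+3=7$ points, so Property $\bm A(6)$ holds, yet $\tau(\PP_i)=10+10+15=35<42=(t+1)\binom{4}{2}$. In general, for a codeset $\bigcup_j\binom{A_j}{w}$ with the $A_j$ disjoint and $|A_j|=w+s_j$, Property $\bm A(t)$ requires only $\sum_j(s_j+1)\geq t+1$, and the rate $\binom{w+s}{2}/(s+1)$ is not minimized at $s=0$ once $w\geq 4$ (compare $s=0$ and $s=1$); your hypothesis only controls the comparison between $s=0$ and $s=t$.

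Worse, this is not merely a failure of the method: packing edge-disjoint copies of the graph $K_5\sqcup K_5\sqcup K_6$ into $K_n$ (possible with $(1-o(1))\binom{n}{2}/35$ copies by the nibble) yields an $(n,6,4,2)$-LPECC code of that size --- all blocks have size $w=4$, distinct codesets share no pair, so any two blocks from distinct codesets meet in at most one point and have symmetric difference at least $6\geq 2e+1$ --- and $(1-o(1))\binom{n}{2}/35\sim n^2/70$ exceeds the conjectured limit $n^2/84$. So the conjecture itself appears to be false for these parameters, and no completion of your upper-bound argument can succeed; the correct constant should presumably be $\min\sum_j\binom{w+s_j}{2}$ over compositions with $\sum_j(s_j+1)=t+1$, which collapses to $(t+1)\binom{w}{2}$ only for $w=3$. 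Your lower-bound half (the $w$-frame construction) is a sound generalization of the paper's Theorems 4.6 and 4.9 modulo a citation for asymptotic existence of $w$-frames of type $(w-1)^{n/(w-1)}$, but it is superseded by the multi-clique construction above.
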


\section{The asymptotic result for $q$-ary $(n,t,w,e)$-LPECC codes}\label{sec-asymptotic}
Given any integer $q\geq 2$, the definition of LPECC codes can be easily extended to $q$-ary codes. One motivation for considering $q$-ary codes is that $q$-ary codes may be useful in constructions of binary codes. Now we state the combinatorial definition of $q$-ary LPECC codes.

Let $Q:=\{0,1,\dots,q-1\}$ be a fixed alphabet of size $q$. Let us write $\bm x:=(x_1,x_2,\dots,x_n)$ for a codeword in $Q^n$.
For any $\bm x \in Q^n$, denote by $\supp(\bm x):= \{i\in [n]: x_i\neq 0\}$ the {\it support} of $\bm x$, and denote by $|\bm x|:=|\supp(\bm x)|$ the {\it weight} of $\bm x$. Let $d(\bm x,\bm y):=|\{i\in [n]: x_i\neq y_i\}|$ denote the {\it Hamming distance} between two codewords $\bm x, \bm y\in Q^n$.
Let $\C\subset Q^n$ be a code with a partition $\PP_1,\PP_2,\dots,\PP_b$. Then $\C=\PP_1\cup \PP_2 \cup \dots\cup \PP_b$ is said to be an $(n,t,w,e)_q$-LPECC code of size $b$ if all of the following properties hold:
\begin{itemize}
	\item[(1)] (Property $\bm{A}_q(t)$) \  For any $T\in \binom{[n]}{t}$ and any $i\in [b]$, there exists $\bm x\in \mathcal{P}_i$ such that $\supp(\bm x) \cap T =\emptyset$;
	\item[(2)] (Property $\bm{B}_q(w)$) \  $|\bm x| \leq w$ for any $\bm x\in \mathcal{C}$;
	\item[(3)] (Property $\bm{C}_q(e)$) \  $d(\bm x,\bm y)\geq 2e+1$, for any $\bm x\in \PP_i, \bm y\in\PP_j$, $i\neq j$, and $i,j\in [b]$.
\end{itemize}
We mention that codes satisfying only Property $\bm A_q(t)$, called $(n, t)_q$-cooling codes, were considered by Chee et al. in \cite[Section~\uppercase\expandafter{\romannumeral5}]{chee2017cooling}.

Let $C_q(n,t,w,e)$ be the maximum size of an $(n,t,w,e)_q$-LPECC code. Clearly, $C(n,t,w,e)=C_2(n,t,w,e)$.
In this section, we give an asymptotic result of LPECC codes in the general $q$-ary setting. We will show that for any $q\geq 2$ and certain fixed parameters $(t,w,e)$,
\[
\lim_{n\rightarrow \infty}\frac{C_q(n,t,w,e)}{(q-1)^{w-e} \binom{n}{w-e}/\binom{w+t}{w-e}}=1.\]
In particular, for $q=2$, this determines the asymptotic behavior of $C(n,t,w,e)$ for a wide range of $(t,w,e)$.

\subsection{An upper bound}
We first show the following upper bound which is asymptotically optimal.
\begin{theorem}\label{thm-(n,t,w,e)-upper-bound}
	Let $q\geq 2$ and let $t,w,e$ be fixed positive integers satisfying $e\leq w-1$ and $2\binom{w}{w-e}  \geq  \binom{w-t-1}{w-e} + \binom{w+t}{w-e}$. If $\mathcal{C}\subset Q^n$ is an $(n,t,w,e)_q$-LPECC code of size $b$, then
	\[
	b \leq \frac{(q-1)^{w-e}\binom{n}{w-e}+\binom{w+t}{w-e}\sum_{k=0}^{w-e-1} (q-1)^{k}\binom{n}{k}/\binom{k+e}{k}}{\binom{w+t}{w-e}}
	\leq (1+o(1)) (q-1)^{w-e}\frac{\binom{n}{w-e}}{\binom{w+t}{w-e}},
	\]
	where $o(1)\rightarrow 0$ as $n\rightarrow \infty$.
\end{theorem}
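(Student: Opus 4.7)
The plan is to combine a double-counting argument over weight-$(w-e)$ projections of codewords with a classification of codesets by maximum weight. For each $\bm x\in Q^n$ with $|\bm x|\geq w-e$ and each $A\in \binom{\supp(\bm x)}{w-e}$, let $\pi_A(\bm x)\in Q^n$ denote the vector agreeing with $\bm x$ on $A$ and vanishing elsewhere, and define
\[
\mathcal{S}_i:=\{\pi_A(\bm x):\bm x\in \PP_i,\ A\in \tbinom{\supp(\bm x)}{w-e}\}.
\]
If $\bm s\in \mathcal{S}_i\cap \mathcal{S}_j$ for some $i\neq j$, arising from $\bm x\in \PP_i$ and $\bm y\in \PP_j$, then $\bm x$ and $\bm y$ agree on $\supp(\bm s)$, forcing $d(\bm x,\bm y)\leq (|\bm x|-(w-e))+(|\bm y|-(w-e))\leq 2e$, contradicting Property $\bm{C}_q(e)$. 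Hence the $\mathcal{S}_i$ are pairwise disjoint and $\sum_{i\in[b]}|\mathcal{S}_i|\leq (q-1)^{w-e}\binom{n}{w-e}$.

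Next I classify each $\PP_i$ by $w_i:=\max\{|\bm x|:\bm x\in \PP_i\}$. The same distance calculation applied at level $k$ to weight-$(k+e)$ codewords shows that for each $0\leq k\leq w-e-1$ their weight-$k$ projections are pairwise disjoint across codesets, so the number of $\PP_i$ with $w_i=k+e$ is at most $(q-1)^k\binom{n}{k}/\binom{k+e}{k}$; the $k=0$ instance (together with the fact that two codewords of weight $\leq e$ in distinct codesets would have distance $\leq 2e$) also absorbs every codeset with $w_i<e$. Writing $b_w:=\#\{i:w_i=w\}$, this yields
\[
b-b_w\leq \sum_{k=0}^{w-e-1}(q-1)^k\binom{n}{k}\big/\binom{k+e}{k}.
\]

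The heart of the argument is the claim that $|\mathcal{S}_i|\geq \binom{w+t}{w-e}$ whenever $w_i=w$. The intuition is that Property $\bm{A}_q(t)$ should force $|U(\PP_i)|\geq w+t$: fixing a weight-$w$ codeword in $\PP_i$ with support $S_0$, for each $t$-subset $T\subseteq S_0$ take a Property $\bm{A}_q(t)$ witness $\bm y_T\in \PP_i$; when $\bm y_T$ has weight $w$, its support must contribute at least $t$ new positions beyond $S_0$, and a suitable $(w+t)$-subset $U^*\subseteq U(\PP_i)$ can then be shown to contain every member of $\binom{U^*}{w-e}$ in some codeword's support, yielding the required $\binom{w+t}{w-e}$ distinct projections. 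The delicate case is when some witnesses $\bm y_T$ are necessarily of weight strictly less than $w$: here I would use a convexity/exchange argument on the polynomial $\binom{x}{w-e}$, bounding the deficit contributed by lower-weight witnesses in terms of $\binom{w-t-1}{w-e}$ and comparing it to the gain $\binom{w+t}{w-e}-\binom{w}{w-e}$; the hypothesis $2\binom{w}{w-e}\geq \binom{w-t-1}{w-e}+\binom{w+t}{w-e}$ is precisely the inequality that makes this accounting balance. Executing this case analysis carefully, and in particular verifying that the stated condition is exactly the right one, is the step I expect to be the main obstacle.

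Combining $b_w\binom{w+t}{w-e}\leq (q-1)^{w-e}\binom{n}{w-e}$ with the above bound on $b-b_w$ gives
\[
b\binom{w+t}{w-e}\leq (q-1)^{w-e}\binom{n}{w-e}+\binom{w+t}{w-e}\sum_{k=0}^{w-e-1}(q-1)^k\binom{n}{k}\big/\binom{k+e}{k},
\]
which is the stated inequality. The asymptotic estimate $b\leq (1+o(1))(q-1)^{w-e}\binom{n}{w-e}/\binom{w+t}{w-e}$ then follows at once, since the main term is of order $n^{w-e}$ while the correction sum is $O(n^{w-e-1})$.
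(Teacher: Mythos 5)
The parts of your argument that you actually carry out --- the pairwise disjointness of the projection sets $\mathcal{S}_i$, the consequent bound $\sum_{i}|\mathcal{S}_i|\leq(q-1)^{w-e}\binom{n}{w-e}$, and the Johnson-type count of codesets of maximum weight $k+e<w$ --- are correct and coincide with the paper's two main ingredients. The gap is exactly the step you flag as the main obstacle: the claim that $|\mathcal{S}_i|\geq\binom{w+t}{w-e}$ whenever $w_i=\max\{|\bm x|:\bm x\in\PP_i\}=w$. That claim is false. Codewords inside a single codeset are subject to no distance constraint, so $\PP_i$ may consist of one weight-$w$ codeword $\bm x_0$ with support $S_0$ together with every codeword obtained from $\bm x_0$ by zeroing out $t$ of its nonzero coordinates. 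Property $\bm{A}_q(t)$ holds (given a $t$-set $T'$, zero out a $t$-subset of $S_0$ containing $T'\cap S_0$), yet $U(\PP_i)=S_0$ has size only $w$, so $|\mathcal{S}_i|\leq(q-1)^{w-e}\binom{w}{w-e}<(q-1)^{w-e}\binom{w+t}{w-e}$. For $q=2$, $t=1$, $w=5$, $e=3$ (which satisfy the hypotheses) this gives $10<15$; moreover $\lfloor n/5\rfloor$ such codesets on pairwise disjoint supports $S_0$ are mutually at distance at least $8>2e+1$, so the deficit is not confined to one exceptional codeset, and the step $b_w\binom{w+t}{w-e}\leq\sum_{i:\,w_i=w}|\mathcal{S}_i|$ genuinely fails.

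The repair --- and what the paper does --- is to split not by maximum weight but by whether $\PP_i$ consists \emph{only} of weight-$w$ codewords. In that case every Property-$\bm{A}_q(t)$ witness has weight exactly $w$, which forces $|U(\PP_i)|\geq w+t$; if $|U(\PP_i)|=w+t$ then $\supp(\PP_i)=\binom{U(\PP_i)}{w}$ and the claim holds with equality, while if $|U(\PP_i)|\geq w+t+1$ a greedy count (one full block contributing $\binom{w}{w-e}$, then for each of $t+1$ further points of $U(\PP_i)$ the new $(w-e)$-sets through it) gives at least $2\binom{w}{w-e}-\binom{w-t-1}{w-e}$ covered $(w-e)$-sets. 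It is there, and not in any accounting of lower-weight Property-$\bm{A}$ witnesses, that the hypothesis $2\binom{w}{w-e}\geq\binom{w-t-1}{w-e}+\binom{w+t}{w-e}$ is used; your guess about the role of that condition does not match how it actually enters. Every codeset containing some codeword of weight $<w$ (including the example above) is instead absorbed into the error term via exactly the Johnson-bound count you already have, after noting that at most one codeset can contain a codeword of weight $\leq e$. With that re-partition your final combination goes through unchanged; as written, the claimed per-codeset lower bound and hence the main inequality for $b_w$ are not established.
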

\begin{remark}\label{rmk-condition}
When $t=1$, it is easy to check that the conditions are satisfied if one assumes $\frac{\sqrt{5}-1}{2}w \leq e\leq w-1$.
For general fixed $t$, assume $d:=w-e\geq 1$ is fixed. Then the polynomial $p_{t,d}(w):=2\binom{w}{d}-\binom{w-t-1}{d}-\binom{w+t}{d}$ is of degree $d-1$ and the leading coefficient of $d!\cdot p_{t,d}(w)$ is
\begin{align*}
 & -2\cdot (1+2+\dots+(d-1)) - (-((t+1)+(t+2)+\dots+(t+d))) - (t+(t-1)+\dots+(t-d+1)) \\
 =& -2\cdot \frac{d(d-1)}{2} + \frac{(2t+d+1)d}{2} - \frac{(2t-d+1)d}{2} =d >0.
\end{align*}
Hence the condition $2\binom{w}{w-e}  \geq  \binom{w-t-1}{w-e} + \binom{w+t}{w-e}$ holds when $w$ is large with $w-e$ and $t$ fixed.
\end{remark}
To prove Theorem~\ref{thm-(n,t,w,e)-upper-bound}, we further use the idea in the argument of Theorem~\ref{thm-(n,t,w,w-2)-upper} and apply the Johnson-type bound of constant-weight codes (CWCs).  An $(n,d,w)_q$-CWC is a code $\C\subset Q^n$ with each codeword of weight $w$ and any two codewords of Hamming distance at least $d$.
%
Let $A_q(n,d,w)$ denote the maximum cardinality of an $(n,d,w)_q$-CWC. When $d=2e+1$ is odd, the Johnson-type upper bound \cite{ostergaard2002ternary} is
\begin{equation}\label{equa-cwc-upper}
A_q(n,2e+1,w)\leq \frac{(q-1)^{w-e} \binom{n}{w-e}}{\binom{w}{w-e}}.
\end{equation}
An $(n,2e+1,w)_q$-CWC meeting the upper bound in \eqref{equa-cwc-upper} is called an {\it $(n,2e+1,w)_q$-generalized Steiner system}~\cite{etzion1997optimal}. However, a generalized Steiner system is known to exist only for $q=2$ \cite{Keevash} or  $e\in \{w-1,w-2\}$  \cite{chee2009linear,chee2019decompositions} when $n$ is sufficiently large and satisfies some divisibility conditions. For more on generalized Steiner systems and related problems, one may refer to \cite{etzion2022perfect}.

\begin{proof}[Proof of Theorem~\ref{thm-(n,t,w,e)-upper-bound}]
	Let $\C\subset Q^n$ be an $(n,t,w,e)_q$-LPECC code with a partition $\{\PP_i: i\in [b]\}$.
	From now on, for each $i\in [b]$, let us denote by $\tau(\PP_i)$ the number of $(w-e)$-subsets of $[n]$ covered by some $\supp(\bm x)$ with $\bm x \in\PP_i$, extending its definition for binary codes when $w-e=2$.
	Recall that for any family $\A\subset 2^{X}$ we denote $U(\A):=\cup_{A\in\A} A$.
	Let us define $\supp(\PP_i):= \{\supp(\bm x):\bm x \in\PP_i\}$ and $U(\PP_i) := U(\supp(\PP_i))$ for each $i\in [b]$. Notice that here $U(\PP_i)$ is defined on sets of vectors in $Q^n$.
	\begin{claim}\label{claim-only-w-lower}
		If $\PP_i$ contains only codewords of weight $w$, then $\tau(\PP_i) \geq \binom{w+t}{w-e}$.
	\end{claim}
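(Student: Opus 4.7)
The plan is to estimate $\tau(\PP_i)$ by examining $m := |U(\PP_i)|$. First I would show $m \geq w+t$ using Property $\bm{A}_q(t)$: otherwise one can find a $t$-subset $T \subset [n]$ with $|U(\PP_i)\setminus T| < w$, leaving no room for any weight-$w$ codeword of $\PP_i$ (whose support lies in $U(\PP_i)$) to avoid $T$, contradicting Property $\bm{A}_q(t)$. I would then split into two cases based on whether $m = w+t$ or $m \geq w+t+1$.

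In the boundary case $m = w+t$, applying Property $\bm{A}_q(t)$ to each $T \in \binom{U(\PP_i)}{t}$ forces some $\bm x \in \PP_i$ to have $\supp(\bm x) = U(\PP_i)\setminus T$, since that is the unique $w$-subset of $U(\PP_i)$ disjoint from $T$. Hence every $w$-subset of $U(\PP_i)$ lies in $\supp(\PP_i)$, so every $(w-e)$-subset of $U(\PP_i)$ is covered, yielding $\tau(\PP_i) \geq \binom{w+t}{w-e}$ immediately. Notice that the condition on $t,w,e$ in Theorem~\ref{thm-(n,t,w,e)-upper-bound} is not even needed in this case.

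When $m \geq w+t+1$, I would run a greedy count, mirroring case (c) in the proof of Claim~\ref{claim-tau-lowerbound1}. Fix any $B_0 \in \supp(\PP_i)$; the $\binom{w}{w-e}$ subsets in $\binom{B_0}{w-e}$ are certainly covered. Let $Y := U(\PP_i) \setminus B_0$, so $|Y| \geq t+1$, and pick distinct elements $y_1, \ldots, y_{t+1} \in Y$. For each $j\in[t+1]$ choose $B_j \in \supp(\PP_i)$ with $y_j \in B_j$. I would then count $(w-e)$-subsets of $B_j$ that contain $y_j$ but avoid $\{y_1, \ldots, y_{j-1}\}$; such subsets have $y_j$ as the smallest-indexed $y$-element they contain, so they are pairwise distinct across different $j$ and are also disjoint from the $\binom{B_0}{w-e}$ family. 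Since $|B_j \setminus \{y_1, \ldots, y_j\}| \geq w-j$, there are at least $\binom{w-j}{w-e-1}$ such subsets for each $j$. Summing and applying the hockey-stick identity $\sum_{j=1}^{t+1} \binom{w-j}{w-e-1} = \binom{w}{w-e} - \binom{w-t-1}{w-e}$ gives $\tau(\PP_i) \geq 2\binom{w}{w-e} - \binom{w-t-1}{w-e}$, and the hypothesis $2\binom{w}{w-e} \geq \binom{w-t-1}{w-e} + \binom{w+t}{w-e}$ then delivers the desired bound $\tau(\PP_i) \geq \binom{w+t}{w-e}$. The main obstacle is the bookkeeping in the greedy step: the ``avoid previous $y_i$'s'' device is precisely what prevents double counting and simultaneously makes the hockey-stick identity align with the hypothesis on $(t,w,e)$.
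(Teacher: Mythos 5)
Your proposal is correct and follows essentially the same route as the paper: lower-bound $|U(\PP_i)|$ by $w+t$ via Property $\bm{A}_q(t)$, handle the boundary case $|U(\PP_i)|=w+t$ by showing $\supp(\PP_i)=\binom{U(\PP_i)}{w}$, and in the case $|U(\PP_i)|\geq w+t+1$ run exactly the greedy count $\binom{w}{w-e}+\sum_{j=1}^{t+1}\binom{w-j}{w-e-1}=2\binom{w}{w-e}-\binom{w-t-1}{w-e}\geq\binom{w+t}{w-e}$. Your version merely makes explicit the ``avoid previous $y$'s'' bookkeeping that the paper leaves as a remark.
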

	\begin{proof}
		It is clear that $|U(\PP_i)|\geq w+t$ by Property $\bm{A}_q(t)$. Furthermore, if $|U(\PP_i)|=w+t$, it follows from Property $\bm{A}_q(t)$ that $\supp(\PP_i)=\binom{U(\PP_i)}{w}$; hence $\tau(\PP_i) = \binom{|U(\PP_i)|}{w-e} = \binom{w+t}{w-e}$.
		Now assume that $|U(\PP_i)|\geq w+t+1$. Take an arbitrary $w$-set $B\in \supp(\PP_i)$ and let $Y:=U(\PP_i)\setminus B$. Then $|Y| \geq t+1$ and we have
		\begin{align*}
		\tau(\PP_i)  \geq & \binom{|B|}{w-e}  + |\{  \{y\}\cup D: y\in Y, D\in\binom{X\setminus \{y\}}{w-e-1},\\
		  & \hspace{4.6cm} \text{ and } \{y\}\cup D \text{ is contained in some block in } \supp(\PP_i) \} | \\				
		 \geq & \binom{w}{w-e}   + \binom{w-1}{w-e-1}+ \binom{w-2}{w-e-1} +\dots+ \binom{w-(t+1)}{w-e-1}  \\
		 =& 2\binom{w}{w-e}  - \binom{w-t-1}{w-e}  \geq  \binom{w+t}{w-e},
		\end{align*}
		where in the last inequality we used the assumption of Theorem~\ref{thm-(n,t,w,e)-upper-bound}. This finishes the proof of Claim~\ref{claim-only-w-lower}.
	\end{proof}
	Let $R_1:=\{i\in [b] : \PP_i \text{ contains only codewords of weight } w \}$ and let $R_2:= [b]\setminus R_1$.
	By Properties $\bm{B}_q(w)$ and $\bm{C}_q(e)$, for every $S\in \binom{[n]}{w-e}$ and every sequence $\bm z \in (Q\setminus\{0\})^{w-e}$, there exists at most one $i\in [b]$ such that $\bm x|_S =\bm z$ for some codeword $\bm x \in\PP_i$, where $\bm x|_S := (x_i:i\in S)$ denotes the restriction of $\bm x$ in $S$.
	It follows from Claim~\ref{claim-only-w-lower} that
	\[
	|R_1|\cdot\binom{w+t}{w-e} \leq \sum_{i\in R_1} \tau(\PP_i) \leq \sum_{i\in [b]} \tau(\PP_i) \leq \binom{n}{w-e}(q-1)^{w-e}.
	\]
	Hence
\begin{equation}\label{consw}
  |R_1|\leq (q-1)^{w-e}\binom{n}{w-e}/\binom{w+t}{w-e}.
\end{equation}
On the other hand, note that for each $i\in R_2$ there exists some codeword $\bm y^i\in \PP_i$ with weight at most $w-1$, and that there exists at most one $i\in R_2$ such that $\PP_i$ contains codewords with weight no more than $e$ by Property $\bm C_q(e)$. Thus, by combining Property $\bm C_q(e)$ and \eqref{equa-cwc-upper}, we have
	 $$
	|R_2|\leq 1+ \sum_{r=e+1}^{w-1} A_q(n,2e+1,r)
	\leq 1+ \sum_{r=e+1}^{w-1} \frac{(q-1)^{r-e} \binom{n}{r-e}}{\binom{r}{r-e}}
	= \sum_{r=e}^{w-1} \frac{(q-1)^{r-e} \binom{n}{r-e}}{\binom{r}{r-e}}
	= \sum_{k=0}^{w-e-1} \frac{(q-1)^{k} \binom{n}{k}}{\binom{k+e}{k}}.
	$$
	 Therefore,
	\[
	b=|R_1|+|R_2|\leq \frac{(q-1)^{w-e}\binom{n}{w-e}}{\binom{w+t}{w-e}} + \sum_{k=0}^{w-e-1} \frac{(q-1)^{k} \binom{n}{k}}{\binom{k+e}{k}}
	\leq (1+o(1)) (q-1)^{w-e}\frac{\binom{n}{w-e}}{\binom{w+t}{w-e}},
	\]
	where $o(1)\to 0$ as $n\to\infty$, completing the proof.
\end{proof}

\subsection{Constructions from constant-weight codes with weight $w+t$}
We use constant-weight codes with weight $w+t$ to construct (constant-weight) $(n,t,w,e)_q$-LPECC codes.
For a codeword $\bm x\in Q^n$, let us define $\pi(\bm x):=\{(i,x_i): i\in [n], x_i\neq 0\}$.

\begin{construction}\label{construction-q-ary}
	Let $\C_0$ be an $(n,2(e+t)+1,w+t)_q$-CWC.
	For any $\bm x\in\C_0$ and $S\subset \supp(\bm x)$, denote by $\bm x^S\in Q^n$ the codeword satisfying $x_i^S=x_i$ if $i\in S$ and $x_i^S=0$ otherwise.
	 For each $\bm x\in\C_0$, let us construct $\PP_{\bm x}$ as follows:
	 \[
	  \PP_{\bm x} := \{ \bm x^S\in Q^n : S\subset \supp(\bm x) \text{ with } |S|=w\}.
	 \]
	 Note that for $\bm x^{S_1} \in \PP_{\bm x}$ and $\bm y^{S_2} \in \PP_{\bm y}$ with different $\bm x,\bm y\in\C_0$,  
we have
	 \begin{equation}\label{equa-distance}
	 \begin{aligned}
	 d(\bm x^{S_1},\bm y^{S_2}) &= 2w-|\supp(\bm x^{S_1}) \cap \supp(\bm y^{S_2})| - |\pi(\bm x^{S_1})\cap \pi(\bm y^{S_2})| \\
	 &\geq 2w - |\supp(\bm x) \cap \supp(\bm y)| - |\pi(\bm x)\cap \pi(\bm y)| \\
	 &= 2w- (2(w+t) - d(\bm x,\bm y)) \geq 2e+1.
	 \end{aligned}
	 \end{equation}
	 Hence $\PP_{\bm x}\cap \PP_{\bm y}=\emptyset$ for any two different $\bm x,\bm y\in\C_0$. It is straightforward to check that the code $\C=\bigcup_{\bm x\in\C_0} \PP_{\bm x} \subset Q^n$  with the partition $\{\PP_{\bm x}: \bm x\in\C_0\}$ is an $(n,t,w,e)_q$-LPECC code by the definition and \eqref{equa-distance}.
\end{construction}
 Recall that $A_q(n,d,w)$ denotes the maximum cardinality of an $(n,d,w)_q$-CWC.  By Construction~\ref{construction-q-ary}, we get the following lower bound.
  \begin{theorem}\label{thm-q-ary-construction}
  	$C_q(n,t,w,e) \geq A_q(n,2(e+t)+1,w+t)$.
  \end{theorem}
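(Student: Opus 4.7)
The plan is to deduce the bound directly from Construction~\ref{construction-q-ary} by taking $\C_0$ to be an extremal $(n, 2(e+t)+1, w+t)_q$-CWC of size $A_q(n, 2(e+t)+1, w+t)$ and verifying that the resulting code $\C = \bigcup_{\bm x \in \C_0} \PP_{\bm x}$, with partition $\{\PP_{\bm x} : \bm x \in \C_0\}$, is a valid $(n,t,w,e)_q$-LPECC code. Since Construction~\ref{construction-q-ary} already does the bulk of the work (in particular the distance calculation \eqref{equa-distance}), the remaining task is essentially bookkeeping: confirm each of Properties $\bm A_q(t)$, $\bm B_q(w)$, $\bm C_q(e)$, and count parts.

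For Property $\bm A_q(t)$, fix any $T\in\binom{[n]}{t}$ and any $\bm x\in\C_0$. Because $|\supp(\bm x)|=w+t$, the set $\supp(\bm x)\setminus T$ has size at least $w$, so any $w$-subset $S\subset\supp(\bm x)\setminus T$ produces a codeword $\bm x^S\in\PP_{\bm x}$ with $\supp(\bm x^S)\cap T=\emptyset$. Property $\bm B_q(w)$ is immediate since every $\bm x^S$ has weight exactly $w$. For Property $\bm C_q(e)$, the inclusions $\supp(\bm x^{S_1})\subset\supp(\bm x)$ and $\pi(\bm x^{S_1})\subset\pi(\bm x)$ (and similarly for $\bm y^{S_2}$) turn \eqref{equa-distance} into
\[
d(\bm x^{S_1},\bm y^{S_2}) \geq 2w - \bigl(2(w+t)-d(\bm x,\bm y)\bigr) = d(\bm x,\bm y) - 2t \geq (2(e+t)+1) - 2t = 2e+1,
\]
for distinct $\bm x,\bm y\in\C_0$. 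This simultaneously shows $\PP_{\bm x}\cap\PP_{\bm y}=\emptyset$, so $\{\PP_{\bm x}:\bm x\in\C_0\}$ is a genuine partition of $\C$ indexed by $\C_0$.

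Combining these facts, $\C$ is an $(n,t,w,e)_q$-LPECC code of size $|\C_0|=A_q(n,2(e+t)+1,w+t)$, which gives the stated lower bound. There is no real obstacle; the only substantive content is the observation already made in \eqref{equa-distance}, namely that restricting supports from $w+t$ to $w$ reduces pairwise Hamming distance by at most $2t$, so a CWC distance of $2(e+t)+1$ is exactly what is needed to guarantee the LPECC distance requirement $2e+1$.
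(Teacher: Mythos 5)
Your proposal is correct and follows exactly the paper's route: Theorem~\ref{thm-q-ary-construction} is obtained by applying Construction~\ref{construction-q-ary} to an optimal $(n,2(e+t)+1,w+t)_q$-CWC, with the distance bound \eqref{equa-distance} doing the only nontrivial work. You merely spell out the verification of Properties $\bm A_q(t)$, $\bm B_q(w)$, $\bm C_q(e)$ that the paper leaves as ``straightforward to check,'' and your details are accurate.
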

  Recently, Liu ang Shangguan~\cite{liu2024near} showed for fixed odd distances $d$, there always exist $(n,d,w)_q$-CWCs asymptotically attaining the Johnson-type upper bound in \eqref{equa-cwc-upper}.
  \begin{lemma}[\cite{liu2024near}]\label{lem-near-optimal-CWC}
  	Let $q,w,d$ be fixed positive integers, where $d=2e+1$ is odd.  Then for sufficiently large integer $n$,
  	\[
  	A_q(n,d,w) \geq (1-o(1)) \frac{(q-1)^{w-e} \binom{n}{w-e}}{\binom{w}{w-e}},
  	\]
  	where $o(1)\rightarrow 0$ as $n\rightarrow \infty$.
  \end{lemma}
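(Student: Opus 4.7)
The plan is to reduce the problem to finding a near-perfect matching in a suitable uniform hypergraph and then apply a nibble-type theorem such as Pippenger--Spencer (or Frankl--R\"odl). Define the auxiliary $\binom{w}{w-e}$-uniform hypergraph $H=(V,E)$ with vertex set
\[
V = \{(T, \bm z) : T \in \binom{[n]}{w-e}, \ \bm z \in (Q\setminus\{0\})^{T}\},
\]
so that $|V| = (q-1)^{w-e}\binom{n}{w-e}$. For each pair $(S, \bm y)$ with $S \in \binom{[n]}{w}$ and $\bm y \in (Q\setminus\{0\})^{S}$, add a hyperedge consisting of the $\binom{w}{w-e}$ vertices $(T, \bm y|_T)$ with $T \in \binom{S}{w-e}$. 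A matching $M$ in $H$ naturally encodes a collection of weight-$w$ codewords in $Q^n$ such that no two agree on any $(w-e)$-set of nonzero coordinates; by exactly the covering argument that yields the Johnson-type bound \eqref{equa-cwc-upper}, this implies that any two of these codewords have Hamming distance at least $2e+1$, so the matching produces an $(n, 2e+1, w)_q$-CWC of size $|M|$.

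Next I would verify the degree and codegree hypotheses required for the nibble. Every vertex $(T, \bm z) \in V$ lies in exactly $D := (q-1)^e \binom{n-(w-e)}{e} = \Theta(n^e)$ hyperedges (extend $T$ to a $w$-subset of $[n]$ in $\binom{n-(w-e)}{e}$ ways and label the additional $e$ coordinates with nonzero values), so $H$ is $D$-regular. For two distinct vertices $(T_1, \bm z_1), (T_2, \bm z_2)$, a common edge exists only if $\bm z_1$ and $\bm z_2$ agree on $T_1 \cap T_2$ and $|T_1 \cup T_2| \leq w$; when these conditions hold, the codegree equals $(q-1)^{w-|T_1 \cup T_2|}\binom{n - |T_1 \cup T_2|}{w - |T_1 \cup T_2|}$. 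Distinctness forces $|T_1 \cup T_2| \geq w-e+1$, because $|T_1 \cup T_2| = w-e$ means $T_1 = T_2$ and then $\bm z_1 \neq \bm z_2$ gives zero codegree. Hence the codegree is at most $O(n^{e-1}) = o(D)$. Applying the Pippenger--Spencer theorem produces a matching in $H$ of size $(1-o(1))|V|/\binom{w}{w-e} = (1-o(1))(q-1)^{w-e}\binom{n}{w-e}/\binom{w}{w-e}$, which is exactly the claimed lower bound on $A_q(n, 2e+1, w)$.

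The main obstacle is the codegree estimate, which requires a careful case analysis: the degenerate case $T_1 = T_2$ with $\bm z_1 \neq \bm z_2$ must be identified as contributing codegree zero, and the constraint $|T_1 \cup T_2| \leq w$ must be invoked to bound the remaining nonzero contributions. A secondary technical point is selecting the correct form of the nibble theorem for fixed uniformity with $o(D)$ codegrees; both Pippenger--Spencer and Frankl--R\"odl are applicable, and this is essentially the route taken in Liu--Shangguan~\cite{liu2024near}, whose contribution is to carry the argument through in the requisite quantitative form.
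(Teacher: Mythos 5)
The reduction to a near-perfect matching in the $\binom{w}{w-e}$-uniform hypergraph $H$ is the right starting point, and your degree and codegree computations ($D=(q-1)^e\binom{n-(w-e)}{e}$, codegrees $O(n^{e-1})=o(D)$) are correct. The gap is in the sentence claiming that a matching in $H$ yields a code of minimum distance $2e+1$ ``by exactly the covering argument that yields the Johnson-type bound.'' That argument only gives one implication: if $d(\bm x,\bm y)\geq 2e+1$ then the two edges are disjoint. The converse, which is what your construction needs, is false for $q\geq 3$. Writing $I=\supp(\bm x)\cap\supp(\bm y)$ and $A=\{i\in I: x_i=y_i\}$, one has $d(\bm x,\bm y)=2w-|I|-|A|$; disjointness of the edges only forces $|A|\leq w-e-1$, and since $|I|$ can still be as large as $w$, this yields only $d(\bm x,\bm y)\geq e+1$. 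Concretely, two weight-$w$ codewords with the same support that agree on $w-e-1$ coordinates and take distinct nonzero values on the remaining $e+1$ coordinates have disjoint edges in $H$ but Hamming distance $e+1\leq 2e$. (For $q=2$ there are no values to disagree on, $|A|=|I|$, and disjointness gives $d\geq 2e+2$, so your argument does prove the binary case --- which is just R\"odl's theorem.)

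This missing point is precisely the substance of the cited result of Liu and Shangguan: for $q\geq 3$ one must additionally exclude pairs of \emph{disjoint} edges of $H$ whose underlying codewords are too close (large support overlap, differing values), and a plain Pippenger--Spencer/Frankl--R\"odl nibble has no mechanism for forbidding such pairs. Their proof therefore runs the semi-random argument within the recently developed framework of conflict-free hypergraph matchings, treating these bad pairs as conflicts and showing the conflict system is sparse enough that a near-perfect conflict-free matching still exists. So your proposal recovers the correct asymptotic count of codewords but does not establish the distance property that makes them a code; to repair it you would need to either invoke the conflict-free matching machinery or add an argument deleting the codewords involved in short-distance pairs and show only $o(|V|/\binom{w}{w-e})$ of them are lost.
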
	
  Combining Theorem~\ref{thm-q-ary-construction} and Lemma~\ref{lem-near-optimal-CWC} yields:
  	\begin{corollary}\label{cor-(n,t,w,e)_q-lower}
	Let $q\geq 2$ and let $t,w,e$ be fixed positive integers. Then for sufficiently large integer $n$ we have
	$$C_q(n,t,w,e) \geq (1-o(1)) (q-1)^{w-e}\frac{\binom{n}{w-e}}{\binom{w+t}{w-e}},$$ where $o(1)\rightarrow 0$ as $n\rightarrow \infty$.
\end{corollary}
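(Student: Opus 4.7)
The proof is essentially a direct splicing together of the two ingredients stated immediately above the corollary, so the plan is short; the only real content is a careful tracking of parameters.

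First I would apply Theorem~\ref{thm-q-ary-construction}, which supplies the inequality
\[
C_q(n,t,w,e) \;\geq\; A_q(n,\,2(e+t)+1,\,w+t).
\]
This reduces the problem entirely to producing a large enough constant-weight code with length $n$, weight $w+t$, and minimum Hamming distance $2(e+t)+1$, which exists for $n$ large by Lemma~\ref{lem-near-optimal-CWC}.

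Next I would invoke Lemma~\ref{lem-near-optimal-CWC} with the substitution $w \mapsto w+t$ and $e \mapsto e+t$ (so that the odd distance in the lemma, written $d=2e+1$ there, becomes $d=2(e+t)+1$ here). The key arithmetic to notice is that the exponent $w-e$ in the lemma's conclusion is invariant under this shift: with the new parameters it equals $(w+t)-(e+t) = w-e$, so $(q-1)^{w-e}$ and $\binom{n}{w-e}$ survive unchanged, while the denominator $\binom{w}{w-e}$ becomes $\binom{w+t}{w-e}$. This yields, for all sufficiently large $n$,
\[
A_q(n,\,2(e+t)+1,\,w+t) \;\geq\; (1-o(1))\,\frac{(q-1)^{w-e}\binom{n}{w-e}}{\binom{w+t}{w-e}}.
\]

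Chaining these two inequalities gives the desired bound. Since $q,t,w,e$ are all fixed, the hypotheses of Lemma~\ref{lem-near-optimal-CWC} (fixed $q,w,d$) are satisfied, and the $o(1)$ term inherited from that lemma tends to $0$ as $n\to\infty$. There is no genuine obstacle: the only thing to guard against is a bookkeeping slip in the substitution $e\mapsto e+t$, $w\mapsto w+t$, since it is tempting but wrong to expect the exponent to shift accordingly. Once the invariance $w-e = (w+t)-(e+t)$ is noted, the two-line proof is complete.
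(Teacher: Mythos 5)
Your proposal is correct and matches the paper's argument exactly: the paper also obtains this corollary by chaining Theorem~\ref{thm-q-ary-construction} with Lemma~\ref{lem-near-optimal-CWC} applied to weight $w+t$ and distance $2(e+t)+1$, using the invariance $(w+t)-(e+t)=w-e$. Nothing further is needed.
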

 By combining Theorem~\ref{thm-(n,t,w,e)-upper-bound}, Remark~\ref{rmk-condition}, and Corollary~\ref{cor-(n,t,w,e)_q-lower}, the asymptotic behavior of $C_q(n,t,w,e)$ is determined when $2\binom{w}{w-e}  \geq  \binom{w-t-1}{w-e} + \binom{w+t}{w-e}$.

\begin{corollary}\label{cor-limit-q-ary-lpecc}
	Let $q\geq 2$, and let $t,w,e$ be fixed positive integers with $e\leq w-1$ and $2\binom{w}{w-e}  \geq  \binom{w-t-1}{w-e} + \binom{w+t}{w-e}$. Then,
\[
\lim_{n\rightarrow \infty}\frac{C_q(n,t,w,e)}{(q-1)^{w-e} \binom{n}{w-e}/\binom{w+t}{w-e}}=1.\]
In particular, when $q=2$, $\lim_{n\rightarrow \infty}\frac{C(n,t,w,e)}{ \binom{n}{w-e}/\binom{w+t}{w-e}}=1$ when $w-e$ and $t$ are fixed and $w$ is large enough.
\end{corollary}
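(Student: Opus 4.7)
My plan is a direct sandwich argument, combining the upper bound from Theorem~\ref{thm-(n,t,w,e)-upper-bound} with the lower bound from Corollary~\ref{cor-(n,t,w,e)_q-lower}, both of which have already been established in the preceding subsections.

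First I would invoke Theorem~\ref{thm-(n,t,w,e)-upper-bound}, whose hypothesis is precisely the numerical condition $2\binom{w}{w-e}\geq \binom{w-t-1}{w-e}+\binom{w+t}{w-e}$ appearing in the corollary, to obtain
\[
C_q(n,t,w,e) \;\leq\; (1+o(1))\,(q-1)^{w-e}\,\frac{\binom{n}{w-e}}{\binom{w+t}{w-e}}.
\]
Next I would invoke Corollary~\ref{cor-(n,t,w,e)_q-lower}, which is obtained by feeding Lemma~\ref{lem-near-optimal-CWC} (the Liu--Shangguan near-optimal constant-weight codes) into Construction~\ref{construction-q-ary} via Theorem~\ref{thm-q-ary-construction}, to get the matching
\[
C_q(n,t,w,e) \;\geq\; (1-o(1))\,(q-1)^{w-e}\,\frac{\binom{n}{w-e}}{\binom{w+t}{w-e}}.
\]
Dividing both inequalities by $(q-1)^{w-e}\binom{n}{w-e}/\binom{w+t}{w-e}$ and letting $n\to\infty$ squeezes the ratio to $1$, which is the claimed asymptotic equality.

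For the ``In particular'' clause at $q=2$, the only additional step is to verify the numerical hypothesis. This is exactly the content of Remark~\ref{rmk-condition}: with $d:=w-e$ and $t$ held fixed, the expression $p_{t,d}(w)=2\binom{w}{d}-\binom{w-t-1}{d}-\binom{w+t}{d}$ is a polynomial in $w$ of degree $d-1$ whose leading coefficient (of $d!\cdot p_{t,d}(w)$) equals $d>0$, so $p_{t,d}(w)\geq 0$ once $w$ is sufficiently large; with the hypothesis then satisfied, the main statement applies verbatim for $q=2$.

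Since the entire argument is a one-line combination of ingredients already in hand, I do not anticipate any substantive obstacle here. All of the real technical work lives upstream: in the double-counting plus Johnson-type CWC bound that powers Theorem~\ref{thm-(n,t,w,e)-upper-bound}, and in the near-optimal CWC construction of Lemma~\ref{lem-near-optimal-CWC} that powers the lower bound. The only thing to watch out for is cosmetic, namely that both the upper bound in Theorem~\ref{thm-(n,t,w,e)-upper-bound} and the lower bound in Corollary~\ref{cor-(n,t,w,e)_q-lower} are stated with the same leading constant $(q-1)^{w-e}\binom{n}{w-e}/\binom{w+t}{w-e}$, so the $o(1)$ terms cancel cleanly in the quotient without any further arithmetic.
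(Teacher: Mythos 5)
Your proposal is correct and is exactly the argument the paper uses: the corollary is stated immediately after the sentence combining Theorem~\ref{thm-(n,t,w,e)-upper-bound}, Remark~\ref{rmk-condition}, and Corollary~\ref{cor-(n,t,w,e)_q-lower}, i.e., the same sandwich of the matching upper and lower asymptotic bounds, with the remark supplying the verification of the numerical hypothesis for large $w$ in the $q=2$ clause. Nothing further is needed.
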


\begin{remark}\label{rem}
	Note that here we all consider the asymptotic behavior of $C_q(n,t,w,e)$ for sufficiently large $n$. When $n,w,e$ are fixed integers, Chee and Ling~\cite{chee2007constructions} showed that for any real number $\epsilon>0$ and sufficiently large integer $q$,
	\[
		A_q(n,d,w) \geq (1-o(1)) \frac{(q-1)^{w-e-\epsilon} \binom{n}{w-e}}{\binom{w}{w-e}},
	\]
	where $o(1)\rightarrow 0$ as $q\rightarrow \infty$. Observe that Theorem~\ref{thm-(n,t,w,e)-upper-bound} also gives a near-optimal upper bound with respect to $q$. Therefore, if $n,t,w,e$ are fixed positive integers with $e\leq w-1$ and $2\binom{w}{w-e}  \geq  \binom{w-t-1}{w-e} + \binom{w+t}{w-e}$, then for any real number $\epsilon>0$ and sufficiently large integer $q$ we have
	\[
	  (1-o(1)) \frac{(q-1)^{w-e-\epsilon} \binom{n}{w-e}}{\binom{w+t}{w-e}} \leq C_q(n,t,w,e) \leq (1+o(1)) \frac{(q-1)^{w-e} \binom{n}{w-e}}{\binom{w+t}{w-e}},
	\]
	where $o(1)\rightarrow 0$ as $q\rightarrow \infty$.
\end{remark}

\subsection{Constant-power error-correcting cooling (CPECC) codes}\label{sec-cpecc}
In \cite[Section \uppercase\expandafter{\romannumeral4}]{chee2020low}, Chee et al. also considered constant-power error-correcting cooling (CPECC) codes, which require that each block has a constant size. Here we discuss general $q$-ary CPECC codes.

Let $\C\subset Q^n$ be a $q$-ary code with a partition $\PP_1,\PP_2,\dots,\PP_b$. Then $\C=\PP_1\cup \PP_2 \cup \dots\cup \PP_b$ is said to be an $(n,t,w,e)_q$-CPECC code of size $b$ if it is an $(n,t,w,e)_q$-LPECC code such that
\begin{itemize}
	\item (Property $\bm{B}'_q(w)$) \  $|\bm x| = w$ for any $\bm x\in \mathcal{C}$.
\end{itemize}
Let $C'_q(n,t,w,e)$ be the maximum size of an $(n,t,w,e)_q$-CPECC code. Indeed, (\ref{consw}) and Construction~\ref{construction-q-ary} already give an upper bound and a lower bound of $C'_q(n,t,w,e)$ respectively. 

%


 \begin{theorem}\label{thm-cpecc-upper}
 	Let $q\geq 2$ and let $t,w,e$ be positive integers with $e\leq w-1$. Then
 	\[
 A_q(n,2(e+t)+1,w+t)	\leq C'_q(n,t,w,e)\leq  \frac{(q-1)^{w-e} \binom{n}{w-e}}{\binom{w+t}{w-e}},
 	\]where the upper bound requires $2\binom{w}{w-e}  \geq  \binom{w-t-1}{w-e} + \binom{w+t}{w-e}$.
 \end{theorem}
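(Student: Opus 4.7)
The statement will follow almost directly from two pieces of machinery already in place, namely Construction~\ref{construction-q-ary} for the lower bound and the weight-$w$ case of the proof of Theorem~\ref{thm-(n,t,w,e)-upper-bound} for the upper bound. The plan is to observe that for a CPECC code the argument simplifies because the ``mixed weights'' part disappears.

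For the lower bound $A_q(n,2(e+t)+1,w+t)\leq C'_q(n,t,w,e)$, I would take an optimal $(n,2(e+t)+1,w+t)_q$-CWC $\C_0$ of size $A_q(n,2(e+t)+1,w+t)$ and feed it into Construction~\ref{construction-q-ary}. The key observation is that every codeword produced by the construction has the form $\bm x^S$ with $|S|=w$, hence weight exactly $w$. So the resulting $(n,t,w,e)_q$-LPECC code automatically satisfies Property $\bm{B}'_q(w)$ and is a CPECC code of the claimed size.

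For the upper bound, I would re-run the proof of Theorem~\ref{thm-(n,t,w,e)-upper-bound} for a CPECC code $\C=\bigcup_{i\in [b]}\PP_i$. Because every codeword has weight exactly $w$, each $\PP_i$ contains only weight-$w$ codewords, so in the notation of that proof $R_1=[b]$ and $R_2=\emptyset$. The hypothesis $2\binom{w}{w-e}\geq \binom{w-t-1}{w-e}+\binom{w+t}{w-e}$ is precisely what is needed for Claim~\ref{claim-only-w-lower}, which then gives $\tau(\PP_i)\geq \binom{w+t}{w-e}$ for every $i\in[b]$. Double-counting pairs $(S,\bm z)$ with $S\in\binom{[n]}{w-e}$ and $\bm z\in (Q\setminus\{0\})^{w-e}$ as in \eqref{consw} yields
\[
b\binom{w+t}{w-e}\;\leq\;\sum_{i\in[b]}\tau(\PP_i)\;\leq\;(q-1)^{w-e}\binom{n}{w-e},
\]
which is the desired inequality.

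There is no real obstacle: the lower bound is an immediate strengthening of Theorem~\ref{thm-q-ary-construction} once one notices that Construction~\ref{construction-q-ary} already outputs a constant-weight code, and the upper bound is exactly the $R_2=\emptyset$ specialization of the bound \eqref{consw} inside the proof of Theorem~\ref{thm-(n,t,w,e)-upper-bound}. The only point worth spelling out in the write-up is that the $|R_2|$ correction term, which caused the $o(1)$ slack in the LPECC upper bound, vanishes identically for CPECC codes, producing the clean (non-asymptotic) inequality stated.
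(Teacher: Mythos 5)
Your proposal is correct and matches the paper's own treatment exactly: the paper proves this theorem simply by remarking that the upper bound is the inequality \eqref{consw} (which applies since every $\PP_i$ consists only of weight-$w$ codewords, so $R_1=[b]$ and $R_2=\emptyset$), and the lower bound follows because Construction~\ref{construction-q-ary} outputs only codewords of weight exactly $w$ and hence already yields a CPECC code.
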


 So Corollary \ref{cor-limit-q-ary-lpecc} and Remark \ref{rem} also hold for $C'_q(n,t,w,e)$. Further, we have the following exact results based on the existence of generalized Steiner systems \cite{chee2009linear,chee2019decompositions,Keevash}.
 \begin{corollary}\label{cor-value-cpecc}
Let $t,w,e$ be positive integers satisfying $e\leq w-1$ and $2\binom{w}{w-e}  \geq  \binom{w-t-1}{w-e} + \binom{w+t}{w-e}$.
 \begin{itemize}\item[(a)] When $q=2$, $C'_2(n,t,w,e) =  \binom{n}{w-e} / \binom{w+t}{w-e}$, provided that $n$ is large enough and satisfies  $\binom{w+t-i}{w-e-i} \mid \binom{n-i}{w-e-i}$ for any $0\leq i \leq w-e-1$.
 \item[(b)] When $e\in \{w-1,w-2\}$, $
	C'_q(n,t,w,e) = \frac{(q-1)^{w-e} \binom{n}{w-e}}{\binom{w+t}{w-e}}
	$ for infinitely many $n$.
 \end{itemize}
 \end{corollary}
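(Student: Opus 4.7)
My plan is to combine the two bounds in Theorem~\ref{thm-cpecc-upper} with known existence results for optimal $q$-ary constant-weight codes, under the standing hypothesis $2\binom{w}{w-e} \geq \binom{w-t-1}{w-e} + \binom{w+t}{w-e}$ that legitimizes the upper bound there.

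The key observation is that the Johnson-type bound \eqref{equa-cwc-upper}, applied with the parameter shift $(e,w)\mapsto (e+t, w+t)$, reads
\[
A_q(n, 2(e+t)+1, w+t) \leq \frac{(q-1)^{w-e}\binom{n}{w-e}}{\binom{w+t}{w-e}},
\]
which matches the upper bound in Theorem~\ref{thm-cpecc-upper} exactly. Consequently, once we exhibit an $(n,2(e+t)+1,w+t)_q$-CWC attaining this Johnson bound, the sandwich in Theorem~\ref{thm-cpecc-upper} collapses to equality and produces the claimed value of $C'_q(n,t,w,e)$. The whole problem therefore reduces to existence of appropriate generalized Steiner systems.

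For (a), I would invoke Keevash's theorem~\cite{Keevash}, already recalled in Section~\ref{sec-preliminary}. When $q=2$, a binary CWC attaining the Johnson bound is equivalent to a Steiner system $S(w-e, w+t, n)$: two blocks of such a system share at most $w-e-1$ points, so the corresponding binary codewords have Hamming distance at least $2(w+t)-2(w-e-1) = 2(e+t)+2 \geq 2(e+t)+1$, and the number of blocks is exactly $\binom{n}{w-e}/\binom{w+t}{w-e}$. Keevash's theorem guarantees such a Steiner system whenever $n$ is sufficiently large and $\binom{w+t-i}{w-e-i} \mid \binom{n-i}{w-e-i}$ for every $0\le i \le w-e-1$, which is exactly the divisibility hypothesis in~(a).

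For (b), the analogous statement is that $q$-ary generalized Steiner systems of strength $w-e$ meet the Johnson bound for $A_q(n, 2(e+t)+1, w+t)$. When $w-e \in \{1,2\}$, i.e., $e \in \{w-1, w-2\}$, their existence for infinitely many $n$ is precisely what is established by Chee and collaborators~\cite{chee2009linear, chee2019decompositions}; the case $w-e=1$ essentially reduces to a partition/resolvable design argument, while $w-e=2$ is the genuine generalized-Steiner-system case they resolve. Plugging either existence result into Theorem~\ref{thm-cpecc-upper} closes the gap. No combinatorial obstacle remains; the only point requiring care is the parameter translation between the Johnson bound and our CPECC upper bound, which is routine.
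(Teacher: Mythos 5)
Your proposal is correct and follows exactly the route the paper intends: sandwich $C'_q(n,t,w,e)$ between $A_q(n,2(e+t)+1,w+t)$ and the matching Johnson-type bound $\left. (q-1)^{w-e}\binom{n}{w-e} \middle/ \binom{w+t}{w-e} \right.$ from Theorem~\ref{thm-cpecc-upper}, then close the gap with Keevash's Steiner systems $S(w-e,w+t,n)$ for $q=2$ and with the generalized Steiner systems of \cite{chee2009linear,chee2019decompositions} when $w-e\in\{1,2\}$. The parameter translation $(e,w)\mapsto(e+t,w+t)$ and the divisibility conditions are handled exactly as the paper requires.
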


\section{Concluding remarks}\label{conclusion}
 In this paper, we mainly investigated the maximum possible sizes of LPECC codes with large Hamming distances. For $(n,t,w,w-2)$-LPECC codes with  $w$ large compared to $t$, we proved a tight upper bound and characterized the unique extremal structure. For $(n,t,w,w-2)$-LPECC codes with small $w$, we determined the exact value of $C(n,t,3,1)$ when $n$ satisfies some divisibility conditions and gave a lower bound on $C(n,t,4,2)$. We believe that this lower bound on $C(n,t,4,2)$ is (asymptotically) sharp  for large $t$. Moreover, we determined the asymptotic behavior of $C(n,t,w,e)$ for a wide range of parameters by studying  $q$-ary LPECC codes. Finally, we also discussed the exact values of the maximum sizes of general $q$-ary constant-weight LPECC codes, i.e., CPECC codes. It would be of interest to give more upper bounds on the size of LPECC codes for general parameters, and provide more (asymptotically) tight constructions.


\end{document}